\newcommand{\inConference}[1]{\iftoggle{conference}{#1}{}} 
\newcommand{\inArXiv}[1]{\iftoggle{conference}{}{#1}}  
\newtheorem{theorem}{Theorem}[section]
\newtheorem{lemma}[theorem]{Lemma}
\newtheorem{corollary}[theorem]{Corollary}
\newtheorem{observation}[theorem]{Observation}
\newtheorem*{rep@theorem}{\rep@title}
\newcommand{\newreptheorem}[2]{%
\newenvironment{rep#1}[1]{%
 \def\rep@title{#2 \ref{##1}}%
 \begin{rep@theorem}}%
 {\end{rep@theorem}}}
\newcommand{\defcal}[1]{\expandafter\newcommand\csname c#1\endcsname{{\mathcal{#1}}}}
\newcommand{\defbb}[1]{\expandafter\newcommand\csname b#1\endcsname{{\mathbb{#1}}}}
\newcounter{calBbCounter}
    \edef\letter{\Alph{calBbCounter}}
\newcommand{\GfT}{\mathtt{GfT}}
\newcommand{\Val}{\mathtt{Val}}
\newcommand{\TPM}{{\texttt{TPM}}}
\newcommand{\PRM}{{\texttt{PRM}}}
\newcommand{\ie}{{\it i.e.}}
\makeatletter \newcommand{\hypertargettop}[1]{\Hy@raisedlink{\hypertarget{#1}{}}}
\title{\textbf{Double-Sided Markets with Strategic Multi-dimensional Players}\footnote{This work is supported by the Horizon 2020 funded project TYPES (Project number: 653449. Call Identifier H2020-DS-2014-1). 
\inConference{We are submitting this paper for confidential review to be considered for publication in STOC on June 19--22, 2017.}\inArXiv{We are submitting this paper for confidential review to be considered for publication in 2017.}}}
\author{
 Moran Feldman\thanks{Department of Mathematics and Computer Science, the Open University of Israel. E-mail: \href{mailto:moranfe@openu.ac.il}{moranfe@openu.ac.il}.}
 \and
 Rica Gonen\thanks{Department of Management and Economics, the Open University of Israel. E-mail: \href{mailto:gonenr@openu.ac.il}{gonenr@openu.ac.il}}
}
\begin{document}

\newcommand{\proofSlotsConsecutive}{
\begin{proof}
Let $b_1$ and $b_2$ be two slots of one advertiser, and let $b$ be an arbitrary slot of another advertiser. It is enough to prove that $v(b_1) < v(b)$ implies $v(b_2) < v(b)$. The inequality $v(b_1) < v(b)$ can happen in two cases. If $v(b_1)$ is smaller than $v(b)$ as numbers, then we get $v(b_2) < v(b)$ since $v(b_1)$ and $v(b_2)$ are equal as numbers. Otherwise, if $v(b_1)$ is equal to $v(b)$ as numbers then the inequality $v(b_1) < v(b)$ implies that the advertiser of $b_1$ (and $b_2$) appears later in $\sigma$ than the advertiser of $b$, and thus, we also have $v(b_2) < v(b)$.
\end{proof}
} 

\newcommand{\proofCanonicalOptimal}{
\begin{proof}
For every $0 \leq i \leq \min\{|B'|, |P'|\}$, let $S^i_c(P', B')$ be the subset of $S_c(P', B')$ in which we keep only the assignments of the first $i$ users. More formally,
\[
	S^i_c(P', B')
	=
	\{(p_j, b_j) \in S_c(P', B') \mid 1 \leq j \leq i\}
	\enspace.
\]

Let $i^*$ be the largest $i$ such that $S^i_c(P', B')$ is contained in some assignment $S^* \subseteq P' \times B'$ maximizing $\GfT(S^*)$ among all the possible assignments of users of $P'$ to slots of $B'$. Such an $i^*$ clearly exists since $S^0_c(P', B') = \varnothing$ is a subset of every feasible assignment. There are two cases to consider. The first case is when $i^* = \min\{|B'|, |P'|\}$. In this case the canonical assignment $S_c(P', B')$ is a subset of the optimal assignment $S^*$. Let us consider an arbitrary ordered pair $(p_j, b_k) \in S^* \setminus S_c(P', B')$, and let us assume, without loss of generality, that $j \leq k$. Since $S^*$ is a legal assignment and $(p_j, b_k)$ belongs to $S^*$ together with all the ordered pairs of $S_c(P', B')$, the ordered pair $(p_j, b_j)$ cannot belong to $S_c(P', B')$. By definition, this implies $c(p_j) > v(b_j)$, and thus, $c(p_j) \geq v(b_j)$ even when we compare them as numbers. Thus,
\[
	v(b_k) - c(p_j)
	\leq
	v(b_j) - c(p_j)
	\leq
	0
	\enspace.
\]
Summing the above inequality over all ordered pairs $(p_j, b_k) \in S^* \setminus S_c(P', B')$, we get:
\[
	\GfT(S_c(P', B'))
	=
	\GfT(S^*) - \sum_{(p_j, b_k) \in S^* \setminus S_c(P', B')} \mspace{-36mu} [v(b_k) - c(p_j)]
	\geq
	\GfT(S^*)
	\enspace,
\]
which completes the proof of the lemma for the case $i^* = \min\{|B'|, |P'|\}$. The second case we need to consider is the case $0 \leq i^* < \min\{|B'|, |P'|\}$. In the rest of the proof we show that this case can never happen, which implies the lemma.

Assume towards a contradiction $0 \leq i^* < \min\{|B'|, |P'|\}$. Our objective is to show that there exists an assignment $S'$ obeying $S^{i^* + 1}_c(P', B') \subseteq S'$ and $\GfT(S') \geq \GfT(S^*)$, which is a contradiction to the choice of $i^*$. By the choice of $i^*$ we have $S^{i^*}_c(P', B') \subseteq S^*$, but $S^{i^* + 1}_c(P', B') \not \subseteq S^*$, which can only happen when $(p_{i^* + 1}, b_{i^* + 1}) \in S^{i^* + 1}_c(P', B') \setminus S^*$. There are now three cases that need to be considered:
\begin{itemize}
	\item If $(p_{i^* + 1}, b_{i^* + 1})$ can be added to $S^*$ without violating the feasibility, then we choose $S' = S^* \cup \{(p_{i^* + 1}, b_{i^* + 1})\}$. Clearly, $S^{i^* + 1}_c(P', B')$ is a subset of $S'$ since $S^{i^* + 1}_c(P', B') = S^{i^*}_c(P', B') \cup \{(p_{i^* + 1}, b_{i^* + 1})\}$ and $S^{i^*}_c(P', B')$ is a subset of $S^*$. Additionally, since $(p_{i^* + 1}, b_{i^* + 1})$ belongs to the canonical assignment, we know that $v(b_{i^* + 1}) > c(p_{i^* + 1})$, and thus, $v(b_{i^* + 1}) \geq c(p_{i^* + 1})$ also when we compare them as numbers. Thus,
	\[
		\GfT(S')
		=
		\GfT(S^*) + [v(b_{i^* + 1}) - c(p_{i^* + 1})]
		\geq
		\GfT(S^*)
		\enspace.
	\]
	
	\item When the previous case does not hold, there must be in $S^*$ either an ordered pair containing $p_{i^* + 1}$ or an ordered pair containing $b_{i^* + 1}$. In the current case we assume that $S^*$ contains an ordered pair containing $p_{i^* + 1}$, but does not contain an ordered pair containing $b_{i^* + 1}$. The case that $S^*$ contains both an ordered pair containing $p_{i^* + 1}$ and an ordered pair containing $b_{i^* + 1}$ is the next case we consider. Finally, the case that $S^*$ contains an ordered pair containing $b_{i^* + 1}$, but does not contain an ordered pair containing $p_{i^* + 1}$ is analogous to the current case, and is, thus, omitted.
	
	Let $(p_{i^* + 1}, b)$ be the ordered pair of $S^*$ that contains $p_{i^* + 1}$. Since $(p_{i^* + 1}, b_{i^* + 1})$ appears in the canonical solution, we must have $c(p_{i^* + 1}) < v(b_{i^* + 1})$, and thus, $c(p_i) < v(b_i)$ for every $1 \leq i \leq i^*$. Hence, the assignment $S^{i^*}_c(P', B')$, which is contained in $S^*$, is equal to $\{(p_i, b_i) \mid 1 \leq i \leq i^*\}$. This means that the slot $b$ of the pair $(p_{i^* + 1}, b) \in S^*$ cannot be one of the first $i^*$ slots, and thus, obeys $v(b) \leq v(b_{i^* + 1})$. We now consider the assignment $S' = S^* \cup \{(p_{i^* + 1}, b_{i^* + 1})\} \setminus \{(p_{i^* + 1}, b)\}$. Clearly this assignment contains $S^{i^* + 1}_c(P', B')$ since $S^{i^* + 1}_c(P', B') = \{(p_i, b_i) \mid 1 \leq i \leq i^* + 1\}$ and $S^*$ contains the assignment $S^{i^*}_c(P', B') = \{(p_i, b_i) \mid 1 \leq i \leq i^*\}$. Additionally,
	\begin{align*}
		\GfT(S')
		={} &
		\GfT(S^*) + [v(b_{i^* + 1}) - c(p_{i^* + 1})] - [v(b) - c(p_{i^* + 1})]\\
		={} &
		\GfT(S^*) + [v(b_{i^* + 1}) - v(b)]
		\geq
		\GfT(S^*)
		\enspace.
	\end{align*}
	
	\item The last case we need to consider is when $S^*$ contains both an ordered pair $(p_{i^* + 1}, b)$ containing user $p_{i^* + 1}$ and an ordered pair $(p, b_{i^* + 1})$ containing slot $b_{i^* + 1}$. In this case we choose $S' = S^* \cup \{(p_{i^* + 1}, b_{i^* + 1}), (p, b)\} \setminus \{(p_{i^* + 1}, b), (p, b_{i^* + 1})\}$. Since $S^{i^*}_c(P', B') \subseteq S^*$ and $S^{i^*}_c(P', B')$ cannot contain (by definition) either the ordered pair $(p, b_{i^* + 1})$ or the ordered pair $(p_{i^* + 1}, b)$, we get that $S'$ contains $S^{i^* + 1}_c(P', B')$. Additionally,
	\begin{align*}
		\GfT(S')
		={} &
		\GfT(S^*) + [v(b_{i^* + 1}) - c(p_{i^* + 1})] + [v(b) - c(p)] \\ &- [v(b) - c(p_{i^* + 1})] - [v(b_{i^* + 1}) - c(p)]
		=
		\GfT(S^*)
		\enspace.
		\qedhere
	\end{align*}
\end{itemize}
\end{proof}
} 
\maketitle
\begin{abstract}
We consider mechanisms for markets that are double-sided and have players with multi-dimensional strategic spaces on at least one side. The players of the market are strategic, and act to optimize their own utilities. The mechanism designer, on the other hand, aims to optimize a social goal, \ie, the gain from trade. 
We focus on one example of this setting which is motivated by the foreseeable future form of online advertising.

Online advertising currently supports some of the most important Internet services, including: search, social media and user generated content sites. To overcome privacy concerns, it has been suggested to introduce user information markets through information brokers into the online advertising ecosystem. Such markets give users control over which data get shared in the online advertising exchange. We describe a model for the above foreseeable future form of online advertising, and design two mechanisms for the exchange of this model: a deterministic mechanism which is related to the vast literature on mechanism design through trade reduction and allows players with a multi-dimensional strategic space, and a randomized mechanism which can handle a more general version of the model.

\medskip
\noindent \textbf{Keywords:} Mechanism design, double-sided market, multi-dimensional players, online advertising market
\end{abstract}

\pagenumbering{Alph}
\thispagestyle{empty}
\newpage
\setcounter{page}{1}
\pagenumbering{arabic}

\section{Introduction}

Billions of transactions are carried out via exchanges at every given day, and the number of transactions and exchanges continues to grow as the need for competitiveness promotes adoption. The design of one-sided incentive compatible (truthful bidding) mechanisms for exchanges is relatively well understood. However, incentive compatible multi-sided mechanisms present a significantly more challenging problem as they introduce more sophisticated requirements such as budget balance. 


More specifically, we are interested in designing exchanges (mechanisms) for multi-sided markets with strategic players. The players of the market are strategic, and act to optimize their own utilities. The mechanism designer, on the other hand, aims to optimize a social goal, \ie, the gain from trade (the difference between the total value of the sold goods for the buyers and the total costs of these goods for the sellers). The design of such mechanisms raises a few interesting questions. Can the mechanism maintain simultaneously different desirable economic properties such as: individual rationality (IR)---participants do not lose by participation, incentive compatibility (IC)---players are incentivized to report their true information to the mechanism and budget balance (BB)---the mechanism does not end up with a loss. Moreover, can the mechanism maintain these properties while suffering only a bounded loss compared to the optimal gain from trade? Finally, can this be done when \textbf{all} the players have a \textbf{multi-dimensional strategic space}?\footnote{We often refer to players with a multi-dimensional strategic space as multi-dimensional players.}


The above questions can be studied in the context of many multi-sided markets. We focus on one such market, and leave the consideration of other multi-sided markets for future work. The market we consider is motivated by online advertising in its foreseeable future form. Online advertising currently supports some of the most important Internet services, including: search, social media and user generated content sites. However, the amount of information that companies collect about users increasingly creates privacy concerns in society as a whole, and even more so in the European society. In recent years EU regulators have actively been looking for solutions to guarantee that users' privacy is preserved. 
In particular, the EU regulators have been looking for tools that enable the end user to configure their privacy settings so that only information allowed by the end-user is collected by online advertising platforms.  

The market we study is induced by a new solution we suggest for the above privacy issue. In this solution mediators serve as the interface between end-users and the other players in the online advertising market. Each user informs his mediator of the attributes she is willing to reveal, and her cost, \ie, the compensation she requires for every ad she views. The mediator then tries to ``sell'' the user on the advertising market based \emph{only} on the attributes she agreed to reveal, and, if successful, pays her the appropriate cost out of the amount he got from the sell.

As revealing more attributes is likely to result in a more profitable sale, our solution provides incentives for users to share their information with the advertising market while allowing users to retain control of the amount of information they would like to share. 
Notice that the fact that our solution motivates users to participate in the advertising market, and even to provide more precise information for targeting campaigns, means that our solution improves the efficiency of the advertising system and the digital economy as a whole (in addition to answering the privacy concerns discussed above). This is in sharp contrast to other natural approaches for dealing with privacy issues, such as cryptography based approaches, which reduce the amount of information available to the advertisers but give them nothing in return.



The advertising market induced by the above solution has mediators on one side, and advertisers on the other side. Each mediator has a set of users associated with him, and he is trying to assign these users to advertisers using the market. Each one of the users has a non-negative cost which she has to be paid if she is assigned to an advertiser. The mediators themselves have no cost of their own, however, each of them has to pay his users their cost if they are assigned to advertisers. Thus, the utility of a mediator is the amount paid to him minus the total cost of his users that are assigned.
Finally, each advertiser has a positive capacity determining the number of users she is interested in targeting, and she gains a non-negative value from every one of the users assigned to her (as long as her capacity is not exhausted). Thus, the advertiser's utility is her value multiplied by the number of users assigned to her (as long as this number does not exceed her capacity) minus her total payment.

A mechanism for the above market knows the mediators and the advertisers, but has no knowledge about their parameters or about the users. The objective of the mechanism is to find an assignment of users to advertisers that maximizes the gain from trade. In addition, the mechanism also decides how much to charge (pay) each advertiser (mediator). In order to achieve these goals, the mechanism receives reports from the advertisers and mediators. Each advertiser reports her capacity and value, and each mediator reports the number of his users and their costs. The mediators and advertisers are strategic, and thus, free to send incorrect reports. In other words, an advertiser may report incorrect capacity and value, and a mediator may report any subset of his users and associate an arbitrary cost with each user. We say that an advertiser is \emph{truthful} if she reports correctly her capacity and value. A mediator is considered \emph{truthful} if he reports to the mechanism his true number of users and the true costs of these users. Notice that we assume that the costs of the users are known to their corresponding mediators, \ie, the users are non-strategic. This assumption is reasonable given the high speed of the online advertising market, especially compared to the speed at which a private user can change her contract with her mediator. 

To better understand the design challenge raised by this market, we observe that even if our setting is reduced to a single buyer-single seller exchange, still it is well known from \cite{MS83} that maximizing gain from trade while maintaining individual rationality and incentive compatibility perforce to run into deficit (is not budget balanced). A well known circumvention of \cite{MS83}'s impossibility is \cite{M92}'s trade reduction for the simple setting of double sided auctions. In \cite{M92}'s setting trade is conducted between multiple strategic sellers offering identical goods to multiple strategic buyers, where each seller is selling a single good and each buyer is interested in buying a single good. The result of \cite{M92} relaxes the requirement for optimal trade by means of a \emph{trade reduction}. The trade reduction leads to an individually rational, incentive compatible and budget balance mechanism. Following \cite{M92}'s work several other mechanisms were designed using the technique of trade reduction. However, all the trade reduction mechanisms suggested in the literature to date allow only players with single dimensional strategic spaces.


\subsection{Our Contribution} \label{ssc:contribution}
Given that existing trade reduction solutions do not apply in our setting, we describe new double-sided mechanisms able to handle mediators and advertisers with multi-dimensional strategic spaces. Our mechanisms guarantee desirable economic properties, and at the same time yield a gain from trade approximating the optimal gain from trade. 
If being truthful is a dominant strategy\footnote{Here and throughout the paper, a reference to domination of strategies should be understood as a reference to weak domination. We never refer to strong domination.} for each advertiser and each mediator (regardless of other players' strategies), then the mechanism is \emph{incentive compatible} (IC); if no advertiser and no mediator have a negative utility by participating truthfully in the mechanism than it is \emph{individually rational} (IR). Our objective is to construct mechanisms that are IC and IR. 

We first study a special case of our setting where the advertisers' capacities are publicly known (however, these capacities need not be all equal). The set of users of each mediator, on the other hand, remains unknown to the mechanism (\ie, the mechanism only learns about it through the mediator's report). For this case we present a deterministic mechanism we term ``Price by Removal Mechanism'' (\PRM) that works as follows: for every mediator find a threshold cost, and remove users of the mediator whose cost is above this threshold. Add a dummy advertiser with value that is the maximum threshold cost computed for the mediators and a capacity that is equal to the total number of users remaining. Assign the non-removed users to the advertisers using a VCG auction~\cite{V61,C71,G73} in which the users are the goods and the bidders are the advertisers.
Price the mediators according to their threshold cost, and price the advertisers according to the prices of the VCG auction describe above.


The method used to calculate the threshold costs of the above mechanism induce its properties. We prove that, for appropriately chosen threshold costs, the above mechanism is IC, IR, BB and provides a non-trivial approximation for the optimal gain from trade. More formally, if $\tau$ is the size of the optimal trade, and $\gamma$ is an upper bound, known to the mechanism, on the maximum capacity of any player (mediator or advertiser), then:

\begin{theorem} \label{thm:PRM}
{\PRM} is BB, IR, IC and $\left(1 - \frac{5\gamma}{\tau}\right)$-competitive.
\end{theorem}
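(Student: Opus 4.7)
The plan is to verify each of the four properties in turn, exploiting the clean separation of {\PRM} into a mediator-side thresholding step and an advertiser-side VCG auction augmented with a single dummy advertiser. Budget balance follows from the role of the dummy: because its value equals the maximum mediator threshold and its capacity absorbs all remaining supply, any real advertiser that wins a user in the VCG auction must outbid the dummy on that unit. Hence the VCG payment charged per allocated user is at least the corresponding mediator's threshold, and summing over allocations shows that the total revenue collected from the advertisers covers the total payments to the mediators. Individual rationality is equally short: a mediator only has users sold when their true costs lie below her threshold, so the threshold-based per-user payment weakly exceeds the per-user cost; for advertisers, IR reduces to the standard VCG IR guarantee.

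For incentive compatibility, I would handle the two sides separately. Advertisers face a VCG auction whose goods (the set of non-removed users) and dummy-advertiser value are determined entirely by the mediator reports, so the standard VCG truthfulness argument yields dominant-strategy truthfulness on the advertiser side. The more delicate direction is mediator IC. The key design property to verify is that the threshold $t_m$ assigned to mediator $m$ depends only on reports other than $m$'s own; once this independence is established, $m$ effectively faces a posted per-user price $t_m$. Then any reported user with cost above $t_m$ cannot be profitably sold; any true user with cost below $t_m$ that $m$ suppresses forgoes profit; and misreporting a user's cost is vacuous because allocations for $m$ depend on the reported costs only through the binary below/above-$t_m$ classification. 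I expect this mediator-side IC step to be the main technical hurdle, since the thresholds must be externally determined (for IC) and yet informative enough to drive the approximation bound.

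For the competitive ratio, I would compare the gain from trade realized by {\PRM} with that of an optimal canonical assignment $S^*$, justified by the canonical-optimality lemma stated earlier. The loss decomposes into a constant number of sources: trades of $S^*$ whose user is removed by the thresholding step, trades displaced when the VCG allocation routes supply to the dummy advertiser, and trades lost through the reshuffling that VCG performs relative to the canonical matching. A careful accounting should show that each such source of loss can affect at most one player's capacity, \ie, at most $\gamma$ trades, with at most five sources in total, yielding an absolute loss of at most $5\gamma$ trades. Dividing by $\tau$ then gives the claimed $(1 - 5\gamma/\tau)$-competitive ratio. The principal risk here is double-counting losses across sources, so the counting argument must carefully partition the missing trades before summing.
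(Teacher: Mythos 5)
Your sketch is sound on mediator-side IC, mediator-side IR, and advertiser-side IR, and your budget-balance intuition is on the right track, but the advertiser-side IC argument contains a genuine gap that is in fact the crux of the theorem. You assert that the set of goods offered in the VCG auction and the dummy advertiser's value ``are determined entirely by the mediator reports,'' and therefore advertiser IC reduces to vanilla VCG truthfulness. This is false: the threshold $c_m$ is extracted from the canonical assignment $S_c(P \setminus P(m), B)$, and the slot set $B$ (including its ordering for tie-breaking) depends on the reported values and positions of \emph{all} advertisers. Hence an advertiser can, by misreporting her value, shift where her slots land in the canonical assignment, change $|S_c(P \setminus P(m), B)|$, change $c_m$, and thereby manipulate which users enter the VCG auction she is bidding in. The paper's proof spends its entire advertiser-side analysis closing exactly this loophole: it enhances the strategy space, shows that strategies placing the advertiser's earliest slot in the last $3\gamma$ locations of the optimal trade yield her zero users (Lemma~\ref{lem:no_users_case}), shows that all strategies placing her earlier leave every $c_m$ unchanged (Lemma~\ref{lem:stable_length} and Corollary~\ref{cor:same_prices}), and only then invokes VCG truthfulness within that invariant regime. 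Without this two-case argument your proof of advertiser IC does not go through.

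Two smaller issues. First, your budget-balance argument implicitly assumes that users assigned to the dummy advertiser cause no problem, but the mechanism pays a mediator for \emph{every} assigned user while only charging non-dummy advertisers, so a dummy assignment would be an uncovered payment; the paper's Corollary~\ref{cor:all_assigned} closes this by proving that every user in $\bigcup_m \hat{P}_m$ is assigned to a real advertiser. Second, your competitive-ratio plan --- partitioning the loss into ``at most five sources, each costing at most $\gamma$ trades,'' one of which is ``trades routed to the dummy'' --- does not match reality (no trades are routed to the dummy) and is too loose to land on the exact $5\gamma$; the paper instead shows directly via Lemmas~\ref{lem:subset_users} and~\ref{lem:c_e_bound} that the users at locations $1$ through $\tau - 5\gamma$ of $S_c(P,B)$ all survive into $\bigcup_m \hat{P}_m$ and hence are all assigned, then bounds $\GfT$ by an averaging argument over the surviving prefix.
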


{\PRM} generalizes the trade reduction ideas used so far in the literature for single dimensional strategic players, but is much more involved. Intuitively, {\PRM} differs from previous ideas by the following observation. A trading set is the smallest set of players that is required for trade to occur. In the existing literature for single dimensional strategic players a trade reduction mechanism makes a binary decision regarding every trading set of the optimal trade, \ie, either the trading set is removed as a whole, or it is kept. On the other hand, dealing with multi-dimensional players requires {\PRM} to remove only parts of some trading sets, and thus, requires it to make non-binary decisions.

Our deterministic mechanism {\PRM} handles one type of multi-dimensional players (the mediators) and one type of single dimensional strategic space players (the advertisers). In order to enrich our strategic space even further, and allow advertisers to have multi-dimensional strategic spaces as well, we present also a randomized mechanism termed ``Threshold by Partition Mechanism'' (\TPM). {\TPM} applies to our general setting, \ie, we no longer assume that any capacity is known to the mechanism, and it works as follows: divide the set of mediators uniformly at random into to two sets ($M_1$ and $M_2$) and divide the set of advertiser uniformly at random, as well, into two sets ($A_1$ and $A_2$). Then use the optimal trade for $M_2$ and $A_2$ to produce threshold cost and threshold value that allow BB pricing as well as the needed reduction in trade for $M_1$ and $A_1$. Analogously, use the optimal trade for $M_1$ and $A_1$ to produce threshold cost and threshold value that allow BB pricing as well as the needed reduction in trade for $M_2$ and $A_2$.

The above description of {\TPM} is not complete since the use of threshold cost and value from one pair $(M_i, A_i)$ to reduce the trade in the other pair might create an unbalanced reduction. To overcome this issue we create two random low priority sets: one of advertisers and the other of mediators. Then, whenever the reduction in trade is unbalanced, we remove additional low priority mediators or advertisers in order to restore balance (which can be done with high probability). The following theorem shows that the above mechanism is IC, IR, BB and provides a non-trivial approximation for the optimal gain from trade. The parameter $\alpha$ is an upper bound, known to the mechanism, on the ratio between the maximum capacity of any player (mediator or advertiser) and the size of the optimal trade.\footnote{The parameters $\gamma$ and $\alpha$ both bound the maximum capacity of the players. Moreover, they are formally related by the formula $\alpha = \gamma/\tau$. We chose to formulate Theorems~\ref{thm:PRM} and~\ref{thm:TPM} in terms of the parameter that the mechanism corresponding to each theorem assumes access to.}

\begin{theorem} \label{thm:TPM}
{\TPM} is BB, IR, IC and $(1 - 28\sqrt[3]{\alpha} - 20e^{-2/\sqrt[3]{\alpha}})$-competitive.
\end{theorem}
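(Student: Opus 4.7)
The plan is to verify the three economic properties directly from the mechanism's construction and then to analyze the competitive ratio via a concentration argument on the random partition.

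For IC, IR, and BB, the key observation is that every player in $M_1 \cup A_1$ faces a threshold price---a cost threshold for mediators, a value threshold for advertisers---that is determined \emph{solely} from the reports of the players in $M_2 \cup A_2$, and symmetrically for the other side. Thus a player's own report affects only whether she trades, not the price at which she trades, so the standard threshold-pricing argument yields IC. A truthful player is never pushed into a losing trade, since trades are admitted only when the true value exceeds the true cost at the relevant threshold, giving IR. For BB, the thresholds are chosen so that on each half the payments extracted from advertisers cover the payments owed to mediators by construction, and the subsequent rebalancing via the low-priority sets only removes trades, so BB is preserved.

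For the competitive ratio, let $\mathrm{OPT}$ be an optimal assignment of gain from trade $\tau$, which by the canonical-optimal lemma we may assume to be canonical. Condition on the random partition. For each pair $(p, b) \in \mathrm{OPT}$, the mediator owning $p$ lands in $M_1$ with probability $1/2$ and, independently, the advertiser owning $b$ lands in $A_1$ with probability $1/2$, so the pair survives into $\mathrm{OPT}_1 := \mathrm{OPT} \cap (M_1 \times A_1)$ with probability $1/4$, and likewise into $\mathrm{OPT}_2$. Since each mediator contributes at most $\gamma$ users to $\mathrm{OPT}$ and each advertiser has capacity at most $\gamma$, the per-player contribution to $\GfT$ is bounded by $\gamma$. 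A Hoeffding-type bound on independent bounded contributions then yields that, for $t$ of order $\Theta(\sqrt[3]{\alpha} \cdot \tau)$, deviating from the expectation by more than $t$ has probability $O(\exp(-2/\sqrt[3]{\alpha}))$, which is precisely where the exponential term in the theorem originates.

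The hard part will be showing that, after applying the threshold derived from $(M_2, A_2)$ to the players of $(M_1, A_1)$ and then executing the rebalancing, the resulting assignment has gain from trade at least $(1 - O(\sqrt[3]{\alpha})) \cdot \GfT(\mathrm{OPT}_1)$, and symmetrically on the other half. I expect the losses to come from three sources: (i) a constant number of ``boundary'' trading sets trimmed when the threshold is applied, each contributing at most $O(\gamma)$ to $\GfT$; (ii) an imbalance between the number of surviving users on the mediator side and the number of surviving slots on the advertiser side, which is again a sum of bounded independent contributions and stays within the pre-chosen low-priority-set size of $\Theta(\sqrt[3]{\alpha} \cdot \tau / \gamma)$ players except with probability $O(e^{-2/\sqrt[3]{\alpha}})$; and (iii) the rebalancing itself, which, when the low-priority set is large enough, destroys at most $\gamma$ units of trade per removed player and therefore $O(\sqrt[3]{\alpha} \cdot \tau)$ in total. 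Summing these losses over the two halves, and bounding the recovered gain pessimistically by zero on the low-probability failure event, produces the stated additive constants $28$ and $20$. The reason the cube root appears is that the chosen low-priority-set size is exactly what simultaneously balances these three error contributions against one another.
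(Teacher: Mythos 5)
Your IC/IR/BB sketch captures the right ideas---prices determined by the opposite partition half, order $\sigma_M,\sigma_A$ fixed in advance, per-trade BB since $c(\hat{p}) < v(\hat{b})$---though it glosses over the part of the argument that actually handles multi-dimensionality: one must show that the number of users a mediator $m$ gets to place is a quantity $k$ independent of $m$'s report, and that among the users of $m$ that clear the threshold the mechanism always chooses the cheapest $k$. That is what makes ``threshold pricing'' work when the report is a whole list of (user, cost) pairs rather than a scalar.

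The competitive-ratio plan, however, takes a route that cannot reach the stated bound. You benchmark the mechanism against $\mathrm{OPT}_1 := \mathrm{OPT} \cap (P(M_1) \times B(A_1))$ and $\mathrm{OPT}_2 := \mathrm{OPT} \cap (P(M_2) \times B(A_2))$. But a pair $(p,b) \in \mathrm{OPT}$ has its mediator and advertiser land on the \emph{same} side only with probability $1/2$, so $\mathbb{E}[\GfT(\mathrm{OPT}_1) + \GfT(\mathrm{OPT}_2)] = \tfrac{1}{2}\GfT(\mathrm{OPT})$; even a lossless argument on your target quantity would give at best a $1/2$-competitive mechanism, not $1 - O(\sqrt[3]{\alpha})$. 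The pairs that split across halves are not lost by the mechanism: if $p \in P(M_1)$ but its OPT-partner $b \in B(A_2)$, the mechanism will still typically assign $p$ to \emph{some} slot in $B(A_1)$ and $b$ to \emph{some} user in $P(M_2)$. The paper handles this by decoupling the two sides via a fixed ``middle value'' $\ell(P,B)$ (the value of the slot at location $\tau$ of $S_c(P,B)$), which satisfies $c(p) \leq \ell(P,B) \leq v(b)$ for all assigned $p$ and $b$, and writes the per-trade gain as $[v(b) - \ell(P,B)] + [\ell(P,B) - c(p)]$, two nonnegative terms attributable separately to the slot and the user. The analysis then only needs to show that nearly all high-value slots $\tilde{B}$ and nearly all low-cost users $\tilde{P}$ are assigned to \emph{someone} on their own half---no attempt is made to preserve OPT's pairing. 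A related issue: your Hoeffding step is applied to ``per-player contributions to $\GfT$,'' which are bounded only by $\gamma$ times a value magnitude, not by $\gamma$; the paper instead applies Hoeffding to \emph{set sizes} (how many slots or users of a fixed set survive the random split), which have genuinely $O(\gamma)$-bounded per-player increments, and converts size guarantees into $\GfT$ guarantees only afterward via the $\ell(P,B)$ decomposition.
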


We note that {\TPM} is universally truthful, \ie , its IC property holds for every given choice of the random coins of the mechanism. 

One drawback of our results is that the competitive ratios guaranteed by Theorems~\ref{thm:PRM} and~\ref{thm:TPM} are non-trivial only when no single advertiser or mediator has a large market power and the mechanism has access to a good bound on the maximum market power of any player. From a practical point of view we believe these assumption are both plausible. The number of agents using any given real life ad-exchange is usually very large, and the mechanism can use the large quantity of historical data available to it to estimate the bound it needs. From a more theoretical point of view, the impossibility result of~\cite{MS83} shows that no non-trivial competitive ratio can be achieved when one advertiser and one mediator control all the trade. This suggests, although we are unable to prove it formally, that the competitive ratio \emph{must} deteriorate as a single advertiser or mediator gains more and more market power (\ie, when $\gamma$ and $\alpha$ increase).

\subsection{Related Work}
\label{subsec:related-work}
From a motivational point of view the most closely related literature to our work consists of works that involve mediators and online advertising markets, such as \cite{AMT09,FMMP10,SGP14}. These works differ from ours in two crucial points. First, despite being motivated by the online display ads network exchange, the models studied by these works are actually auctions (\ie, one-sided mechanisms). Thus, they need not deal with the challenges and impossibility integrated by the double-sided structure of our market and the requirement to keep it from running into a deficit. Second, our focus is maximization of the gain from trade,
unlike the above works which focus on revenue maximization. 

Another related work involving both markets and mediators studies the phenomenon of markets in which individual buyers and sellers trade through intermediaries, who determine prices via strategic considerations \cite{BEKT07}. 
An essential difference between the model of \cite{BEKT07} and our model is that \cite{BEKT07} does not assume private values for the players, and therefore, the impossibility of \cite{MS83} does not apply in its model.


Last but not least is the literature on trade reduction and multi-sided markets. Deterministic mechanisms using trade reduction as a mean to achieve IC, IR and BB were described for various settings~\cite{M92,CM06,BW03,GGP07,RCJZ01,BNP09}. Moreover, for a variant of the setting of~\cite{M92,BNP09}, \cite{HHA16} obtained a randomized mechanism achieving IC, IR and strong budget balance (\ie, it is BB and leaves no surplus for the market maker). 
The mutual grounds of all these settings is that all players participating in the trade have a single dimension strategic space. This idea was captured by \cite{GGP07} which provided a single trade reduction procedure applicable to all the above settings. In addition, \cite{GGP07} also defined a class of problems that can be solved by its suggested trade reduction procedure. Essentially this classification is based on partitioning the players participating in the trading set into equivalence classes.

As pointed out in the previous subsection, both our presented mechanisms extend significantly on the existing trade reduction literature. More specifically, even when all advertisers have known equal capacities (while mediators can still have a variable number of users), fitting our model into the classification of \cite{GGP07} still requires each mediator to have his own equivalence class (because a mediator with many users can always replace a mediator with a few users within a trading set, but the reverse is often not true).
It follows that \cite{GGP07}'s trade reduction procedure might remove all the trade, and thus, achieves only a trivial gain from trade approximation.
\section{Notation and Basic Observations}

We begin this section with a more formal presentation of our model. Our model consists of a set $P$ of users, a set $M$ of mediators and a set $A$ of advertisers. Each user $p \in P$ has a non-negative cost $c(p)$ which she has to be paid if she is assigned to an advertiser. 
The users are partitioned among the mediators, and we denote by $P(m) \subseteq P$ the set of users associated with mediator $m \in M$ (\ie, the sets $\{P(m) \mid m \in M\}$ form a disjoint partition of $P$). The utility of a mediator $m \in M$ is the amount he is paid minus the total cost he has to forward to his assigned users; hence, if $x(p) \in \{0, 1\}$ is an indicator for the event that user $p \in P(m)$ is assigned and $t$ is the payment received by $m$, then the utility of $m$ is $t - \sum_{p \in P(m)} x(p) \cdot c(p)$. 
Finally, each advertiser $a \in A$ has a positive capacity $u(a)$, and she gains a non-negative value $v(a)$ from every one of the first $u(a)$ users assigned to her; thus, if advertiser $a$ is assigned $n \leq u(a)$ users and has to pay $t$ then her utility is $n \cdot v(a) - t$.

A mechanism for our model accepts reports from the advertisers and mediators, and based on these reports outputs an assignment of users to advertisers. In addition, the mechanism also decides how much to charge (pay) each advertiser (mediator). The objective of the mechanism is to output an assignment of users to advertisers that maximizes the gain from trade.


For ease of the presentation, it is useful to associate a set $B(a)$ of $u(a)$ slots with each advertiser $a \in A$. We then think of the users as assigned to slots instead of directly to advertisers. Formally, let $B$ be the set of all slots (\ie, $B = \bigcup_{a \in A} B(a)$), then an assignment is a set $S \subseteq B \times P$ in which no user or slot appears in more than one ordered pair. We say that an assignment $S$ assigns a user $p$ to slot $b$ if $(p, b) \in S$. Similarly, we say that an assignment $S$ assigns user $p$ to advertiser $a$ if there exists a slot $b \in B(a)$ such that $(p, b) \in S$. It is also useful to define values for the slots. For every slot $b$ of advertiser $a$, we define the value $v(b)$ of $b$ as equal to the value $v(a)$ of $a$. Using this notation, the gain from trade of an assignment $S$ can be stated as
\[
	\GfT(S)
	=
	\sum_{(p, b) \in S} [v(b) - c(p)]
	\enspace.
\]

In addition to the above notation, we would like to define two additional shorthands that we use occasionally. Given a set $A' \subseteq A$ of advertisers, we denote by $B(A') = \bigcup_{a \in A'} B(a)$ the set of slots belonging to advertisers of $A'$. Similarly, given a set $M' \subseteq M$ of mediators, $P(M') = \bigcup_{m \in M'} P(m)$ is the set of users associated with mediators of $M'$.

\subsection{Comparison of Costs and Values} \label{sec:comparison}

The presentation of our mechanisms is simpler when the values of slots and the costs of users are all unique. Clearly, this is extremely unrealistic as all the slots of a given advertiser have the exact same value in our model. Thus, we simulate uniqueness using a tie-breaking rule. The rule we assume works as follows:
\begin{compactitem}
	\item The mechanism chooses an arbitrary order $\sigma$ on the mediators and advertisers. It is important that this order is chosen independently of the reports received by the mechanism. The mechanism then uses this order to break ties when comparing users to slots and when comparing between users (slots) associated with different mediators (advertisers). For example, when comparing the cost of user $p$ with the value of a slot $b$, the mechanism breaks ties in favor of $p$ if and only if the mediator of $p$ appears earlier than the advertiser of $b$ in $\sigma$.
	\item We assume that the report of every mediator induces some order on the set of users of this mediators. The mechanism uses this order to break ties between the costs of users belonging to the same mediator.
	\item Finally, since the slots of a given advertiser are all identical and non-strategic (recall that slots were introduced into the model just for the purpose of simplifying the presentation), any method can be used for tie-breaking between the slots of a given advertiser.
\end{compactitem}

In the rest of this paper when costs/values are compared, unless it is explicitly specified that they are compared as numbers, the comparison is assumed to use the above tie breaking rule. Note that this assumption implies that two values (costs) are equal if and only if they belong to the same slot (user). We now \inArXiv{prove}\inConference{give} a useful observation that follows from the way we defined the tie-breaking rule.

\begin{observation} \label{obs:slots_consecutive}
When the slots are ordered in an increasing (or decreasing) value order, all the slots of a single advertiser are always consecutive.
\end{observation}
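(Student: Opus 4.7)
The plan is to reduce the statement to the following claim: for any two slots $b_1, b_2$ of a single advertiser $a$ and any slot $b$ belonging to a different advertiser $a'$, the inequality $v(b_1) < v(b)$ implies $v(b_2) < v(b)$ (by symmetry the analogous statement for $>$ follows). Once this is established, no slot of $a'$ can be sandwiched strictly between two slots of $a$ in the value order, which is exactly the content of the observation (the same reasoning applies to the decreasing order).

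To prove the claim, I would unwind the tie-breaking rule described in Section~\ref{sec:comparison} and split on where the strict inequality $v(b_1) < v(b)$ actually comes from. In the first case, $v(b_1)$ is strictly smaller than $v(b)$ already as raw numbers; since $b_1$ and $b_2$ are slots of the same advertiser $a$ they have identical underlying numerical value $v(a)$, so $v(b_2) < v(b)$ also holds as a numerical comparison, and hence under the tie-breaking rule as well.

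In the second case, $v(b_1)$ and $v(b)$ are equal as numbers, and the strict inequality is obtained purely through the tie-break using $\sigma$. The key point here is that ties between slots of two \emph{different} advertisers are resolved using only the order of their advertisers in $\sigma$, not using any per-slot information. Thus $v(b_1) < v(b)$ tells us that the advertiser $a$ of $b_1$ appears later in $\sigma$ than the advertiser $a'$ of $b$, and this same fact immediately yields $v(b_2) < v(b)$ for the slot $b_2$, which also belongs to $a$.

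The only subtle point, and the one I would be careful about, is justifying that the tie-breaking between slots of different advertisers really depends solely on the two advertisers' positions in $\sigma$ and not on slot-specific data; this is essentially built into the rule in Section~\ref{sec:comparison}, but it is the single fact that makes the case analysis go through, so I would state it explicitly before doing the two-case split.
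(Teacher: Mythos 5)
Your proposal is correct and matches the paper's own proof essentially step for step: the same reduction to showing that $v(b_1) < v(b)$ implies $v(b_2) < v(b)$ for two slots $b_1, b_2$ of one advertiser and a slot $b$ of another, followed by the same two-case split on whether the inequality holds numerically or arises from the $\sigma$ tie-break.
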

\proofSlotsConsecutive

\subsection{Canonical Assignment}

Given a set $B' \subseteq B$ of slots and a set $P' \subseteq P$ of users, the canonical assignment $S_c(P', B')$ is the assignment constructed by the following process. First, we order the slots of $B'$ in a decreasing value order $b_1, b_2, \dotsc, b_{|B'|}$ and the users of $P'$ in an increasing cost order $p_1, p_2, \dotsc, p_{|P'|}$. Then, for every $1 \leq i \leq \min\{|B'|, |P'|\}$, the canonical assignment $S_c(B', P')$ assigns user $p_i$ to slot $b_i$ if and only if $v(b_i) > c(p_i)$. The canonical assignment is an important tool used frequently by the mechanisms we describe in the next sections. In some places we refer to the user or slot at location $i$ of a canonical solution $S_c(P', B')$. By using this expression we mean user $p_i$ or slot $b_i$, respectively. Additionally, the term $|S_c(P, B)|$ is used very often in our proofs, and thus, it is useful to define the shorthand $\tau = |S_c(P, B)|$.

The following lemma shows that the above definition of $\tau$ is consistent with the use of $\tau$ in Section~\ref{ssc:contribution} as the size of the optimal trade\inConference{ (a proof of the lemma can be found in Appendix~\ref{app:missing_proofs})}.

\begin{lemma} \label{lem:canonical_optimal}
The canonical assignment $S_c(P', B')$ maximizes $\GfT(S_c(P', B'))$ among all the possible assignments of users of $P'$ to slots of $B'$.
\end{lemma}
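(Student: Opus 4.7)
The plan is to prove that the canonical assignment $S_c(P', B')$ is optimal by an exchange argument. For every $0 \leq i \leq \min\{|B'|, |P'|\}$, I define $S_c^i(P', B') = \{(p_j, b_j) \in S_c(P', B') \mid 1 \leq j \leq i\}$ to be the prefix of canonical pairs at the first $i$ positions, and I let $i^*$ be the largest such $i$ for which $S_c^i(P', B')$ is contained in some assignment $S^*$ that maximizes $\GfT$ among all assignments of users of $P'$ to slots of $B'$. Such an $i^*$ exists because $S_c^0 = \varnothing$ is trivially contained in every assignment. I will show that either $i^* = \min\{|B'|, |P'|\}$, in which case $S_c(P', B') \subseteq S^*$ and a short calculation finishes the proof, or else $i^*$ can be strictly extended, contradicting its maximality. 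In particular, the maximality of $i^*$ together with the fact that the pairs of $S_c$ form a prefix (since $v(b_i) - c(p_i)$ is non-increasing in $i$ under the slot and user sorts) ensures that in the second case $(p_{i^*+1}, b_{i^*+1})$ lies in $S_c \setminus S^*$, which in particular means $v(b_{i^*+1}) > c(p_{i^*+1})$.

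For the first case, I would show $\GfT(S_c) \geq \GfT(S^*)$ by observing that every extra pair $(p_j, b_k) \in S^* \setminus S_c$ contributes non-positively to $\GfT(S^*)$. Assuming without loss of generality that $j \leq k$, the fact that $S_c \subseteq S^*$ together with $b_k \neq b_j$ forces $(p_j, b_j) \notin S_c$, which by definition of the canonical assignment means $v(b_j) \leq c(p_j)$ when compared as numbers. Combined with $v(b_k) \leq v(b_j)$ from the decreasing sort of slots, this yields $v(b_k) - c(p_j) \leq 0$, and summing over all such extra pairs gives $\GfT(S_c) \geq \GfT(S^*)$.

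For the second case, I would construct an assignment $S'$ with $S_c^{i^*+1} \subseteq S'$ and $\GfT(S') \geq \GfT(S^*)$, contradicting the maximality of $i^*$. The construction splits on whether $p_{i^*+1}$ and $b_{i^*+1}$ are already matched in $S^*$: if neither is matched, I simply add $(p_{i^*+1}, b_{i^*+1})$, and $\GfT$ strictly increases since $v(b_{i^*+1}) > c(p_{i^*+1})$; if only $p_{i^*+1}$ is matched in some pair $(p_{i^*+1}, b) \in S^*$, I swap $b$ for $b_{i^*+1}$ (the symmetric case where only $b_{i^*+1}$ is matched is analogous); and if both are matched in separate pairs $(p_{i^*+1}, b)$ and $(p, b_{i^*+1})$, I perform a double swap to obtain $(p_{i^*+1}, b_{i^*+1})$ and $(p, b)$, which preserves $\GfT$ exactly via telescoping. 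The main obstacle will be the bookkeeping in the one-sided-conflict subcase: namely verifying that the displaced slot $b$ satisfies $v(b) \leq v(b_{i^*+1})$. This will rely on the structural observation that $S_c^{i^*} \subseteq S^*$ already occupies the top $i^*$ slots in decreasing value order, forcing $b$ to be indexed beyond position $i^*$ and hence to have value at most $v(b_{i^*+1})$; the subtleties introduced by the tie-breaking rule of Section~\ref{sec:comparison} will not cause trouble because the strict canonical inclusion $v(b_{i^*+1}) > c(p_{i^*+1})$ translates cleanly to the corresponding comparison as numbers whenever needed.
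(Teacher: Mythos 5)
Your proposal follows essentially the same approach as the paper's proof: the same prefix sets $S_c^i$, the same maximal index $i^*$, the same two main cases, and the same three-subcase exchange argument (add, single swap, double swap) with the same key observation that $S_c^{i^*} = \{(p_i, b_i) : 1 \leq i \leq i^*\}$ forces the displaced slot $b$ to have value at most $v(b_{i^*+1})$. The minor imprecision in the first case (asserting $b_k \neq b_j$ without separating out $j = k$, where the conclusion $(p_j, b_j) \notin S_c$ is immediate) is harmless.
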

\inArXiv{\proofCanonicalOptimal}

\section{Deterministic Mechanism} \label{sec:deterministic}

In this section we describe the deterministic mechanism ``Price by Removal Mechanism'' (\PRM) for our model. Recall that {\PRM} assumes public knowledge of the advertisers' capacities. Accordingly, we assume throughout this section that the capacities of the advertisers are common knowledge (or that the advertisers are not strategic about them). We also assume that {\PRM} has access to a value $\gamma \geq 1$ such that:
\[
	u(a) \leq	\gamma \quad \forall\; a \in A
	\qquad \text{and}	\qquad
	|P(m)| \leq	\gamma \quad \forall\; m \in M
	\enspace.
\]
In other words, $\gamma$ is an upper bound on how large can be the capacity of an advertiser or the number of users of a mediator. Informally, $\gamma$ can be understood as a bound on the importance every single advertiser or mediator can have.

A description of {\PRM} is given as Mechanism~\ref{mch:PRM}. Notice that Mechanism~\ref{mch:PRM} often refers to parameters of the model that are not known to the mechanism (\ie, values of advertiser, the number of users of mediators and the costs of users). Whenever this happens, this should be understood as referring to the reported values of these parameters.

\vspace{2mm}
\noindent \begin{minipage}{\textwidth}
\captionsetup{type=mechanism}
\noindent \rule{\linewidth}{0.8pt}
\vspace{-6mm}\captionof{mechanism}{Price by Removal Mechanism (\PRM)}\label{mch:PRM}
\noindent \rule{\linewidth}{0.8pt}
\vspace{-5.5mm}
\begin{compactenum}[\bfseries 1.]
	\item For every mediator $m \in M$, if the canonical assignment $S_c(P \setminus P(m), B)$ is of size more than $4\gamma$, denote by $p_m$ the user at location $|S_c(P \setminus P(m), B)| - 4\gamma$ of the canonical assignment $S_c(P \setminus P(m), B)$, and let $c_m$ be the cost of $p_m$. Otherwise, set $c_m$ to $-\infty$.
\end{compactenum}
\end{minipage}
{
\hphantom{1}
\setlength{\plitemsep}{3pt}
\begin{compactenum}[\bfseries 1.]
	\setcounter{enumi}{1}
	\item For every mediator $m \in M$, let $\hat{P}(m)$ be the set of users of mediator $m$ whose cost is less than $c_m$. Intuitively, $\hat{P}(m)$ is the set of users of mediator $m$ that the mechanism tries to assign to advertisers.
	
	\item Assign the users of $\bigcup_{m \in M} \hat{P}(m)$ to the advertisers using a VCG auction. More specifically, the users of $\bigcup_{m \in M} \hat{P}(m)$ are the items sold in the auction, and the bidders are the advertisers of $A$ plus a dummy advertiser $a_d$ whose value and capacity are $v(a_d) = \max_{m \in M} c_m$ and $u(a_d) = \sum_{m \in M} |\hat{P}(m)|$, respectively. 

	\item Charge every non-dummy advertiser by the same amount she is charged (as a bidder) by the VCG auction.
\end{compactenum}
}
\vspace{2mm}
\noindent\begin{minipage}{\textwidth}
\begin{compactenum}[\bfseries 1.]
	\setcounter{enumi}{4}
	\item For every user $p$ assigned by the VCG auction, pay $c_m$ to the mediator $m$ of $p$.\footnotemark
\end{compactenum}
\vspace{-3mm}\noindent \rule{\linewidth}{0.8pt}
\end{minipage}
\footnotetext{Note that $m$ is BB as he forwards to each of his assigned users her cost---which is less than $c_m$.}
\vspace{2mm}

\noindent \textbf{Remark:} It can be shown that the existence of the dummy advertiser never affects the behavior of Mechanism~\ref{mch:PRM}, and thus, one can safely omit it from the mechanism. Nevertheless, we keep this advertiser in the above description of the mechanism since its existence simplifies our proof that the mechanism is BB.

Let us recall our result regarding {\PRM}\inConference{ (the proof of Theorem~\ref{thm:PRM} is deferred to Appendix~\ref{app:deterministic})}.

\begin{reptheorem}{thm:PRM}
{\PRM} is BB, IR, IC and $\left(1 - \frac{5\gamma}{\tau}\right)$-competitive.
\end{reptheorem}

{\PRM} improves over the trade reduction procedure of~\cite{GGP07} by guarantying a non-trivial competitive ratio even in the presence of players (mediators) with a multi-dimensional strategic space. However, the competitive ratio guaranteed by {\PRM} is slightly worse than the competitive ratio guaranteed by the procedure of~\cite{GGP07} when the players have single dimensional strategic spaces (which is $1 - 2/\tau$). The next paragraph gives an intuitive explanation why it does not seem possible to improve the competitive ratio of {\PRM} to match the competitive ratio of $1 - 2/\tau$ guaranteed by \cite{GGP07}'s procedure.

Observe that {\PRM} needs to maintain IC for both mediators and advertisers. In order to maintain IC for the mediators the price for each mediator has to be computed while all his users are reduced from the trade. This is why the \textbf{real} reduction in trade might be larger by up to $\gamma$ compared to the reduction $4\gamma$ explicitly stated by {\PRM}, which explains the gap between the competitive ratio of $1 - 5\gamma/\tau$ and the explicit reduction of $4\gamma$ in the mechanism. It remains to understand why the explicit reduction is set to $4\gamma$. The most significant difficulty in guaranteeing IC for the advertisers is that an advertiser might change her report in order to affect the costs $\{c_m\}_{m \in M}$ of the mediators, and through them, manipulate the items offered in the VCG auction. {\PRM} tackles this difficulty by guaranteeing that advertisers performing such manipulations are assigned no users. Since the users of mediator $m$ are removed when $c_m$ is calculated, securing this guarantee requires that advertisers having a slot in one of the last $\gamma$ locations of the optimal trade are assigned no users. As an advertiser might have up to $\gamma$ slots, this translates into a requirement that an advertiser whose earliest slot is in one of the last $2\gamma$ locations of the optimal trade is assigned no users. Moreover, to simplify the proof our analysis in fact requires that an advertiser whose earliest slot is in one of the last $3\gamma$ locations of the optimal trade is assigned no users. Taking into account, again, the fact that the users of mediator $m$ are removed when $c_m$ is calculated, guaranteeing the last property requires a trade reduction of $4\gamma$. To summarize, maintaining the advertisers' IC imposes an explicit trade reduction of $3\gamma$ that we present as $4\gamma$ for the sake of simplicity, and maintaining the mediators' IC implies that the real trade reduction is larger by up to $\gamma$ compared to the explicit one. Thus, a competitive ratio of $1-\frac{4\gamma}{\tau}$ seems to be inevitable in order to allow multi-dimensional strategic spaces.

\inArXiv{\inConference{\section{Proof of Theorem~\ref{thm:PRM}} \label{app:deterministic}}

\inArXiv{In the rest of}\inConference{In} this section we prove Theorem~\ref{thm:PRM}. We begin with some basic properties of {\PRM}. It is important to remember while reading the next proofs that the constructions of $S_c(P, B)$ and $S_c(P \setminus P(m), B)$ order the slots of $B$ in the same order, and that the order of the users of $P \setminus P(m)$ in the construction of $S_c(P \setminus P(m), B)$ is obtained by removing the users of $P(m)$ from the order used by the construction of $S_c(P, B)$.

\begin{observation} \label{obs:med_removal_shortens}
For every mediator $m \in M$, $|S_c(P \setminus P(m), B)| \leq \tau$.
\end{observation}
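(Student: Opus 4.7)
The plan is to exploit the fact that deleting users from the canonical construction can only make the user sitting at each location ``more expensive,'' so that any location that still counts as a match after the deletion must already have been a match before.

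First, I would unpack the definitions. Recall that $\tau = |S_c(P,B)|$, and by construction the slots are sorted in decreasing value order $b_1, b_2, \dotsc$ while the users are sorted in increasing cost order $p_1, p_2, \dotsc$; then location $i$ contributes to $S_c(P,B)$ exactly when $v(b_i) > c(p_i)$ (using the tie-breaking rule from Section~\ref{sec:comparison}). The same slot ordering is used in the construction of $S_c(P \setminus P(m), B)$, but the user ordering is obtained from the original one by deleting the entries belonging to $P(m)$. Denote by $p'_1, p'_2, \dotsc$ the resulting user ordering, so $p'_i = p_{j_i}$ where $j_1 < j_2 < \cdots$ are the indices of the surviving users in the original ordering. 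In particular, $j_i \geq i$, hence $c(p'_i) \geq c(p_i)$.

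Next I would show that the set of locations that contribute to $S_c(P\setminus P(m), B)$ is contained in the corresponding set for $S_c(P, B)$. Fix any $i$ with $1 \leq i \leq \min\{|B|,|P\setminus P(m)|\}$ such that $v(b_i) > c(p'_i)$. Because $i \leq \min\{|B|, |P|\}$ and $c(p'_i) \geq c(p_i)$, we also obtain $v(b_i) > c(p_i)$, so location $i$ contributes to $S_c(P, B)$ as well.

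Taking cardinalities then yields $|S_c(P \setminus P(m), B)| \leq |S_c(P, B)| = \tau$, which is exactly the statement of the observation. The only subtle point is the inequality $c(p'_i) \geq c(p_i)$ under our tie-breaking rule; since removing $P(m)$ does not reorder the surviving users (their relative order in $\sigma$, and within each remaining mediator's internal order, is untouched), the comparisons among surviving users agree with their comparisons in the original ordering, so no location can shift to a strictly cheaper user, which is precisely what the argument needs.
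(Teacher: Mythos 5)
Your argument is correct and rests on the same key fact as the paper's proof: the slot at each location is unchanged by removing $P(m)$, while the user at each location can only get more expensive (since the $i$-th surviving user was at position $\geq i$ in the original ordering). The paper phrases this as a contradiction focused on location $\tau + 1$ (assume $|S_c(P\setminus P(m),B)| > \tau$, then location $\tau+1$ would have to contribute to $S_c(P,B)$ too, contradicting $\tau$'s definition), whereas you show directly that every contributing location of the reduced instance is a contributing location of the original and take cardinalities. These are the same proof in two presentations; yours is slightly cleaner because it avoids the contradiction framing and yields the stronger fact that the contributing-location sets are nested, but the content is identical.
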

\begin{proof}
Assume towards a contradiction that $|S_c(P \setminus P(m), B)| > \tau$, and let $b'_{\tau + 1}$ and $p'_{\tau + 1}$ be the slot and user at location $\tau + 1$ of $S_c(P \setminus P(m), B)$, respectively. Since $p'_{\tau + 1}$ is assigned by $S_c(P \setminus P(m), B)$ to $b'_{\tau + 1}$, it must be that $v(b'_{\tau +1}) > c(p'_{\tau + 1})$.

On the other hand, let $b_{\tau + 1}$ and $p_{\tau + 1}$ be the slot and user at location $\tau + 1$ of $S_c(P, B)$, respectively. Clearly $b_{\tau + 1}$ and $b'_{\tau + 1}$ are exactly the same slot. Moreover, $p'_{\tau + 1}$ is either equal to $p_{\tau + 1}$ or appears in $S_c(P, B)$ in a larger location than $p_{\tau + 1}$. Hence,
\[
	v(b_{\tau + 1})
	=
	v(b'_{\tau + 1})
	>
	c(p'_{\tau + 1})
	\geq
	c(p_{\tau + 1})
	\enspace,
\]
which is a contradiction since $p_{\tau + 1}$ is not assigned to $b_{\tau + 1}$ by $S_c(P, B)$.
\end{proof}

Using the last observation we can now prove the following lemma.

\begin{lemma} \label{lem:subset_users}
For every mediator $m \in M$ the following is true:
\begin{compactitem}
	\item When $c_m \neq -\infty$, $p_m$ is assigned by the canonical assignment $S_c(P, B)$.
	\item All the users of $\hat{P}_m$ are assigned by $S_c(P, B)$.
\end{compactitem}
\end{lemma}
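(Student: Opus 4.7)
The plan is to prove both bullets together by tracking the position of $p_m$ as we toggle between the cost orderings on $P \setminus P(m)$ and on $P$. Set $n := |S_c(P \setminus P(m), B)|$, so that (in the case $c_m \neq -\infty$) $p_m$ occupies position $n - 4\gamma$ in the increasing cost order on $P \setminus P(m)$, and set $k := |\hat{P}(m)|$, which by definition counts the users of $P(m)$ whose cost is strictly less than $c_m = c(p_m)$. Re-inserting exactly these $k$ users ahead of $p_m$ when passing from the order on $P \setminus P(m)$ to the order on $P$ shifts $p_m$ forward by exactly $k$ places, so in the cost order on $P$ the user $p_m$ sits at position $n - 4\gamma + k$; let $b_{n - 4\gamma + k}$ denote the slot at the same position in the common decreasing-value order on $B$.

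The first step is to observe that this position lies safely inside the canonical trade. Using $k \leq |P(m)| \leq \gamma$ together with Observation~\ref{obs:med_removal_shortens}, one gets $n - 4\gamma + k \leq n - 3\gamma \leq n \leq \tau$. The main computation is then the chain
\[
	v(b_{n - 4\gamma + k})
	\;\geq\;
	v(b_n)
	\;>\;
	c(p'_n)
	\;>\;
	c(p_m)
	\;=\;
	c_m
	\enspace,
\]
where $p'_n$ denotes the $n$-th user in the increasing cost order on $P \setminus P(m)$. Here the first inequality uses that the slot ordering is decreasing and $n - 4\gamma + k \leq n$; the strict middle inequality uses that $p'_n$ is assigned to $b_n$ by $S_c(P \setminus P(m), B)$; and the final strict inequality uses that $p_m = p'_{n - 4\gamma}$ strictly precedes $p'_n$ in a strictly increasing cost order (costs being unique under the tie-breaking rule of Section~\ref{sec:comparison}). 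This gives exactly the condition that $p_m$ is paired with $b_{n - 4\gamma + k}$ by $S_c(P, B)$, which establishes the first bullet.

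The second bullet then follows by monotonicity. For any $p \in \hat{P}(m)$ we have $c(p) < c_m = c(p_m)$, so $p$ strictly precedes $p_m$ in the cost order on $P$, occupying some position $j < n - 4\gamma + k$ (in particular $j \leq \tau$), and the corresponding slot $b_j$ satisfies $v(b_j) \geq v(b_{n - 4\gamma + k}) > c_m > c(p)$, so $p$ is indeed assigned by $S_c(P, B)$.

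The only delicate point I anticipate is keeping the tie-breaking conventions straight while chaining strict with non-strict inequalities, but since both costs and values are individually unique under the rule of Section~\ref{sec:comparison}, this amounts to bookkeeping rather than a real obstacle.
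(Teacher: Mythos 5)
Your proof is correct and reaches the same conclusion, but the route is genuinely different from the paper's. The paper defines $F$, the set of users assigned by $S_c(P, B)$ that lie outside $P(m)$, observes that these users occupy locations $1$ through $|F| \geq \tau - \gamma$ of $S_c(P \setminus P(m), B)$, and then simply notes that $p_m$ sits at location $|S_c(P\setminus P(m), B)| - 4\gamma \leq \tau - 4\gamma \leq \tau - \gamma$, so $p_m \in F$ and is therefore assigned by $S_c(P, B)$ --- no value/cost comparison is needed because membership in $F$ already entails being assigned. You instead track the exact position $n - 4\gamma + k$ of $p_m$ after re-inserting the $k = |\hat P(m)|$ low-cost users of $P(m)$, and then certify the assignment by the explicit chain $v(b_{n-4\gamma+k}) \geq v(b_n) > c(p'_n) > c(p_m)$. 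Both arguments are sound; the paper's is slightly leaner because it never has to introduce $k$ nor reason about the value-cost inequality directly (that work is absorbed into the definition of $F$), while yours pins down precisely where $p_m$ lands in $S_c(P,B)$ --- a bound of $\tau - 3\gamma$ rather than merely $\tau$ --- which is more information than the lemma needs but costs a bit more bookkeeping. The derivations of the second bullet from the first via monotonicity are essentially identical in the two proofs.
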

\begin{proof}
First, let us explain why the second claim we need to prove follows from the first one. If $c_m = -\infty$, then $\hat{P}_m = \varnothing$ and the second claim is trivial. Otherwise, the first claim states that $p_m$ is assigned by the canonical assignment $S_c(P, B)$. By the definition of a canonical assignment, all the users having a lower cost than $c_m = c(p_m)$ are assigned by $S_c(P, B)$ as well. The second claim now follows since $\hat{P}_m$ contains only users having a lower cost than $c_m$.

It remains to prove the first claim. The claim is trivial when $c_m = -\infty$, thus, we assume $c_m \neq -\infty$. 
Let $F$ be the set of users that are assigned by $S_c(P, B)$ but do not belong to $P(m)$. Clearly the users of $F$ take locations $1$ to $|F|$ in $S_c(P \setminus P(m), B)$. On the other hand, we can lower bound the size of $F$ by $\tau - |P(m)| \geq \tau - \gamma$. Thus, locations $1$ to $\tau - \gamma$ of $S_c(P \setminus P(m), B)$ are all occupied by users assigned by $S_c(P, B)$.

By Observation~\ref{obs:med_removal_shortens} and the fact that $c_m \neq -\infty$, $p_m$ is chosen by {\PRM} as the user located at location $|S_c(P \setminus P(m), B)| - 4\gamma \leq \tau - 4\gamma$ of $S_c(P \setminus P(m), B)$. This implies that $p_m$ is assigned by $S_c(P, B)$ since $\tau - 4\gamma \leq \tau - \gamma$ (and all the users at locations $1$ to $\tau - \gamma$ of $S_c(P \setminus P(m), B)$ are assigned by $S_c(P, B)$). 
\end{proof}

\begin{corollary} \label{cor:all_assigned}
{\PRM} is BB and assigns all the users of $\bigcup_{m \in M} \hat{P}_m$.
\end{corollary}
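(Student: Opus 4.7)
The plan is to handle the two claims of the corollary separately, both following from properties of VCG combined with the specific choice of the dummy advertiser's parameters.

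For the assignment claim, I would observe that the dummy advertiser $a_d$ has capacity $u(a_d) = \sum_{m \in M} |\hat{P}(m)|$, which is exactly the total number of items in the VCG auction. If $c_m = -\infty$ for every $m$, then $\hat{P}(m) = \varnothing$ for every mediator and the claim is vacuous, so assume at least one $c_m$ is finite. Then $v(a_d) = \max_{m \in M} c_m$ is a nonnegative number (costs are nonnegative), so assigning any item to the dummy contributes nonnegatively to welfare. The welfare-maximizing VCG allocation, under tie-breaking in favor of assignment, therefore leaves no item unassigned, giving the first claim.

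For BB, the plan is to lower-bound the VCG revenue collected from the non-dummy advertisers using the dummy as a universal fallback, and to upper-bound the total amount paid out to mediators using $v(a_d)$ as a uniform upper bound on the values $c_m$. Concretely, let $n_a$ denote the number of items that advertiser $a$ receives in the VCG-optimal allocation, and let $W^*$ and $W^*_{-a}$ be the optimal welfare with and without $a$, respectively. If $a$ is absent, one can reassign her $n_a$ items to the dummy (which has sufficient capacity), yielding $W^*_{-a} \geq W^* - v(a) \cdot n_a + v(a_d) \cdot n_a$. Plugging this into the VCG payment formula $W^*_{-a} - (W^* - v(a) \cdot n_a)$ shows that $a$'s payment is at least $v(a_d) \cdot n_a$.

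On the payout side, the mechanism pays $c_m$ per user from $\hat{P}(m)$ that gets assigned to a real advertiser, and $c_m \leq v(a_d)$ by the definition of $v(a_d)$. Summing over real advertisers, the total payment to mediators is at most $v(a_d) \cdot \sum_{a \neq a_d} n_a$, which is in turn at most the total VCG revenue, so BB follows. The main subtlety to get right is the convention that users assigned to the dummy in the VCG outcome are not charged against any mediator's payout---this is consistent with the remark that the dummy can be removed without affecting the mechanism's behavior, and with this convention the bookkeeping closes cleanly.
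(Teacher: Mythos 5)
Your BB argument is essentially the same as the paper's (only phrased more explicitly through the VCG externality formula), but your proof of the assignment claim has a genuine gap, and the two halves of your proof are in tension with each other. You show that the dummy can absorb all items, so the VCG allocation ``assigns'' every item to some bidder. But the corollary's claim---which the competitive-ratio analysis then relies on---is that every user of $\bigcup_m \hat{P}_m$ is assigned to a \emph{real} advertiser, and your argument does nothing to rule out users being routed to $a_d$. In fact, your observation that $a_d$'s capacity equals the number of items cuts the wrong way: it makes it \emph{easier}, not harder, for the VCG allocation to dump items on the dummy, and you never establish that the real advertisers outbid $a_d$ for every item.

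The tension: to rescue the assignment claim you are implicitly counting dummy assignments as ``assigned,'' but to rescue BB you adopt the opposite convention, that users sent to the dummy are not paid for. You cannot have both. The paper resolves this cleanly by proving a strictly stronger fact, namely that \emph{no} user is ever assigned to the dummy, at which point the ambiguity evaporates. The missing ingredient is Lemma~\ref{lem:subset_users}: since every $p_m$ (and hence every user of $\hat{P}_m$) is assigned by $S_c(P,B)$, the cost $\max_m c_m$ is at most $c(p_\tau) < v(b_\tau)$, so all $\tau$ slots occupied in $S_c(P,B)$ have value strictly exceeding $v(a_d)$, and there are at least $|\bigcup_m \hat{P}_m|$ of them. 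That supply of high-value real slots is what forces VCG to allocate every item to a real advertiser. Your proof needs this capacity-and-value argument; without it, the assignment claim is unproven, and under the convention you chose, BB alone does not give the corollary.
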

\begin{proof}
Let $M'$ be the subset of mediators having a finite $c_m$. Formally, $M' = \{m \in M \mid c_m \neq -\infty\}$. If $M' = \varnothing$, then no users are assigned by {\PRM} and $\hat{P}_m = \varnothing$ for every mediator $m \in M$, which makes the corollary trivial. Hence, we assume from now on $M' \neq \varnothing$. We define now $p_\tau$ as the user at location $\tau$ in $S_c(P, B)$. Clearly, $c(p_\tau)$ upper bounds the cost of any user assigned by $S_c(P, M)$. On the other hand, by Lemma~\ref{lem:subset_users}, for every $m \in M'$ the user $p_m$ is assigned by $S_c(P, B)$. Hence, $c(p_\tau) \geq \max_{m \in M'} c(p_m) = \max_{m \in M'} c_m$.

Let $B'$ be the set of slots assigned users by $S_c(P, B)$, and let $b_\tau$ be the slot at location $\tau$ in $S_c(P, B)$. Since $p_\tau$ is assigned to $b_\tau$ by $S_c(P, B)$, we get: $v(b_\tau) > c(p_\tau) \geq \max_{m \in M'} c_m$. Moreover, $v(b_\tau)$ lower bounds the value of any slot in $B'$, and thus, all the slots of $B'$ have values larger than $\max_{m \in M'} c_m$. Combining this with the observation that $|B'| = \tau$, we get that there exists a set $A' \subseteq A$ of advertisers with the following properties:
\begin{compactitem}
	\item The advertisers of $A'$ all have values larger than $\max_{m \in M'} c_m = \max_{m \in M} c_m$, and thus, larger than the value of the dummy advertiser $a_d$, and larger than the cost of any user in $\bigcup_{m \in M'} \hat{P}_m = \bigcup_{m \in M} \hat{P}_m$.
	\item The advertisers of $A'$ have a total capacity at least $\tau \geq |\bigcup_{m \in M} \hat{P}_m|$, where the inequality holds since Lemma~\ref{lem:subset_users} guarantees that the users of $\bigcup_{m \in M} \hat{P}_m$ are all assigned by $S_c(P, M)$.
\end{compactitem}

The two above properties imply together that the VCG auction (and thus, also {\PRM}) assigns all the users of $\bigcup_{m \in M} \hat{P}_m$ to real advertisers. As no user is assigned to the dummy advertiser, and the dummy advertiser has a value of $\max_{m \in M} c_m$, the VCG auction charges the real advertisers at least $\max_{m \in M} c_m$ for every user assigned to them. On the other hand, $\max_{m \in M} c_m$ upper bounds the payment a mediator receives from {\PRM} for a single assigned user, and thus, {\PRM} is budget balanced.
\end{proof}

\subsection{The Competitive Ratio of {\PRM}}

In this section we analyze the competitive ratio of {\PRM}, and show that it is at least $1 - 5\gamma/\tau$. If $5\gamma \geq \tau$, then this is trivial. Thus, we assume throughout this section $5\gamma < \tau$.

\begin{observation} \label{obs:stay_assigned}
Given a mediator $m \in M$, every user $p \not \in P(m)$ which is assigned by the canonical assignment $S_c(P, B)$ is also assigned by the canonical assignment $S_c(P \setminus P(M), B)$.
\end{observation}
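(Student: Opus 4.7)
The plan is to track how the position of $p$ changes when we switch from the canonical assignment $S_c(P,B)$ to $S_c(P \setminus P(m), B)$, and then invoke the defining inequality $v(b_j) > c(p_j)$ of the canonical assignment. The crucial observation is that the slot ordering used to build the canonical assignment depends only on $B$, not on the user set, so the $j$-th slot in $S_c(P \setminus P(m), B)$ is exactly the same slot $b_j$ as the $j$-th slot in $S_c(P, B)$.

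First, I would let $p$ be at position $i$ in the increasing-cost ordering of $P$, so $p = p_i$ and $(p_i, b_i) \in S_c(P, B)$. By the definition of the canonical assignment this yields $v(b_i) > c(p_i) = c(p)$. Next, I would observe that removing the users of $P(m)$ from $P$ only deletes entries from the increasing-cost ordering, and hence the rank of $p$ can only decrease (or stay equal): let $j \leq i$ denote its rank in the ordering induced on $P \setminus P(m)$, so that $p$ sits at position $j$ of $S_c(P \setminus P(m), B)$. Since the slot ordering is unaffected by the removal, the slot at position $j$ in the new assignment is the slot $b_j$ from the original ordering, and because slots are listed in decreasing order of value and $j \leq i$, we have $v(b_j) \geq v(b_i) > c(p)$.

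Finally, this strict inequality $v(b_j) > c(p)$ is precisely the condition under which the canonical construction assigns $p$ to $b_j$, so $(p, b_j) \in S_c(P \setminus P(m), B)$, proving the observation. The only subtlety is that all these comparisons implicitly use the tie-breaking rule from Section~\ref{sec:comparison}, but because the rule is determined by the fixed order $\sigma$ and by each mediator's internal order of its own users, the induced orderings on users and on slots remain consistent when we pass from $P$ to $P \setminus P(m)$; in particular the inequality $v(b_i) > c(p)$ transfers to $v(b_j) > c(p)$ without any change in how the comparison is interpreted. There is no serious obstacle here beyond keeping this tie-breaking consistency straight.
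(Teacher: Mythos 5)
Your proof is correct and takes essentially the same route as the paper: track the rank of $p$ before and after removing $P(m)$, note that the rank can only decrease while the slot ordering is unchanged, and conclude that the slot now facing $p$ has value at least as large as before, so the assignment condition $v(b_j) > c(p)$ still holds.
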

\begin{proof}
Let $j'$ and $j$ be the locations of $p$ in $S_c(P \setminus P(m), B)$ and $S_c(P, B)$, respectively. Clearly $j' \leq j$, and thus, the slot at location $j'$ in $S_c(P \setminus P(m), B)$ has a value larger or equal to the value of the slot at location $j$ in $S_c(P, B)$. This means that $p$ must be assigned by $S_c(P \setminus P(m), B)$ since it is assigned by $S_c(P, B)$.
\end{proof}

Let us denote by $c_e$ the cost of the user at location $\tau - 5\gamma$ of the canonical assignment $S_c(P, B)$. Using the last observation we get:
\begin{lemma} \label{lem:c_e_bound}
For every mediator $m \in M$, $c_m \geq c_e$.
\end{lemma}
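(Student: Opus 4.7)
The plan is to locate $p_m$ inside the full canonical assignment $S_c(P,B)$ at position at least $\tau - 5\gamma$; the inequality $c_m \geq c_e$ will then follow because $S_c(P,B)$ lists users in increasing order of cost, so every user at a location $\geq \tau - 5\gamma$ has cost at least $c_e$.

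First I would check that $c_m$ is actually finite under the standing assumption $\tau > 5\gamma$. By Observation~\ref{obs:stay_assigned}, every user assigned by $S_c(P,B)$ that lies outside $P(m)$ remains assigned in $S_c(P \setminus P(m), B)$, and at least $\tau - |P(m)| \geq \tau - \gamma$ such users exist. Therefore $|S_c(P \setminus P(m), B)| \geq \tau - \gamma > 4\gamma$, so the construction of {\PRM} falls into its non-trivial branch and $p_m$ is well-defined as the user at location $|S_c(P \setminus P(m), B)| - 4\gamma \geq \tau - 5\gamma$ of $S_c(P \setminus P(m), B)$.

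Next I would compare the location of $p_m$ in the two canonical assignments. The ordering on $P \setminus P(m)$ used to build $S_c(P \setminus P(m), B)$ is obtained from the ordering on $P$ used to build $S_c(P,B)$ simply by deleting the users of $P(m)$; hence, for any user $p \notin P(m)$, the location of $p$ in $S_c(P, B)$ is at least its location in $S_c(P \setminus P(m), B)$ (reinserting the removed users can only push $p$ later). Since $p_m \notin P(m)$ by construction, its location in $S_c(P,B)$ is at least $\tau - 5\gamma$.

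Finally, because the construction of $S_c(P,B)$ orders users in increasing cost order (with the tie-breaking rule of Section~\ref{sec:comparison}), a user at location $\geq \tau - 5\gamma$ has cost at least that of the user at location exactly $\tau - 5\gamma$. This gives $c_m = c(p_m) \geq c_e$, as required. I do not foresee a substantive obstacle; the only subtlety is making sure that $p_m$ indeed lies outside $P(m)$ so that the location-monotonicity argument applies, and this is immediate from the fact that $p_m$ comes from a canonical assignment restricted to $P \setminus P(m)$.
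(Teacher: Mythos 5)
Your proof is correct, and it reaches the conclusion by a somewhat more direct route than the paper. The paper fixes the tail set $L$ of users at locations $\tau - 5\gamma$ through $\tau$ of $S_c(P,B)$, uses Observation~\ref{obs:stay_assigned} to show that at least $4\gamma + 1$ of them survive into $S_c(P\setminus P(m),B)$, and then pigeonholes against the final $4\gamma$ locations to conclude that some member of $L$ sits at location at most that of $p_m$, so $c_m \geq c_e$. You instead bound the location of $p_m$ itself: you first establish $|S_c(P\setminus P(m),B)| \geq \tau - \gamma$ (via the same obs:stay\_assigned argument), so $p_m$ sits at location at least $\tau - 5\gamma$ in $S_c(P\setminus P(m),B)$; then you observe that reinserting $P(m)$ into the cost order can only push $p_m$ to a later location, so $p_m$ also sits at location at least $\tau - 5\gamma$ in the ordering underlying $S_c(P,B)$, whence $c(p_m) \geq c_e$. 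Both proofs rest on the same location-monotonicity fact (the one proved inside Observation~\ref{obs:stay_assigned}); yours applies it to $p_m$ directly and avoids the auxiliary set $L$ and the counting step, which makes the argument a bit tighter. You were also careful to note that $\tau > 5\gamma$ guarantees $c_m \neq -\infty$ so that $p_m$ exists, a point the paper leaves implicit.
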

\begin{proof}
Let $L$ be the set of the users at locations $\tau - 5\gamma$ to $\tau$ of $S_c(P, B)$. Clearly all the users of $L$ have a cost equal or larger than $c_e$. On the other hand, observe that by Observation~\ref{obs:stay_assigned} all the users of $L \setminus P(m)$ are assigned by $S_c(P \setminus P(m), B)$. Hence, the number of users of $L$ that are assigned by $S_c(P \setminus P(m), B)$ is at least:
\[
	|L \setminus P(m)|
	\geq
	|L| - |P(m)|
	\geq
	(5\gamma + 1) - \gamma
	=
	4\gamma + 1
	\enspace.
\]

Recall that $p_m$ is selected as the user at location $|S_c(P \setminus P(m), B)| - 4\gamma$ in $S_c(P \setminus P(m), B)$. This, together with the observation that $S_c(P, \setminus P(m), B)$ assigns at least $4\gamma + 1$ users of $L$, implies that one of the following must be true: either $p_m$ is a user of $L$, or there exists a user of $L$ that has a smaller location in $S_c(P, \setminus P(m), B)$ than $p_m$, and thus, has a lower cost than $p_m$. In both cases the lemma follows.\footnote{In fact, it can be shown that the second case never happens.}
\end{proof}

\begin{corollary}
{\PRM} is $\left(1 - \frac{5\gamma}{\tau}\right)$-competitive.
\end{corollary}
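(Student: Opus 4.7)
The plan is to compare $\GfT(\PRM)$ to $\GfT(S_c(P,B))$, which Lemma~\ref{lem:canonical_optimal} identifies as the optimum. Writing $P^{*} = \bigcup_{m \in M} \hat{P}(m)$ for the set of users participating in the internal VCG stage, Corollary~\ref{cor:all_assigned} already tells us that every user of $P^{*}$ is assigned by \PRM{}, and to a real advertiser rather than the dummy. I would first combine this with the welfare-maximising property of VCG: since the canonical assignment, restricted to $P^{*}$, is a feasible assignment of $P^{*}$ to real advertisers' slots (feasibility is guaranteed by Lemma~\ref{lem:subset_users}, which places every user of $P^{*}$ inside $S_c(P,B)$), the VCG sum of slot values must be at least as large as the sum of slot values that this restriction produces. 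Subtracting the common $\sum_{p \in P^{*}} c(p)$ from both sides yields
\[
  \GfT(\PRM) \;\geq\; \sum_{p \in P^{*}} \bigl[v(b_{S_c}(p)) - c(p)\bigr],
\]
where $b_{S_c}(p)$ denotes the slot that $S_c(P,B)$ assigns to $p$.

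The next step is to show that $P^{*}$ contains every one of the first $\tau - 5\gamma$ users of $S_c(P,B)$. By Lemma~\ref{lem:c_e_bound}, $c_m \geq c_e$ for every mediator $m$, and any user at a position $i < \tau - 5\gamma$ of $S_c(P,B)$ has cost strictly less than $c_e$; such users therefore lie in $\hat{P}(m) \subseteq P^{*}$ for their mediator $m$. For the user $p_{\tau - 5\gamma}$ at position exactly $\tau - 5\gamma$, let $m$ be its mediator: because $p_m$ is drawn from $S_c(P \setminus P(m),B)$ it cannot belong to $P(m)$, so $p_m \neq p_{\tau - 5\gamma}$, and the uniqueness of costs under the tie-breaking rule of Section~\ref{sec:comparison} upgrades the inequality $c_m \geq c_e$ of Lemma~\ref{lem:c_e_bound} to the strict inequality $c_m > c_e$. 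Hence $p_{\tau - 5\gamma} \in \hat{P}(m) \subseteq P^{*}$ as well. Discarding the (non-negative) contributions from the remaining users of $P^{*}$ gives
\[
  \GfT(\PRM) \;\geq\; \sum_{i=1}^{\tau - 5\gamma} \bigl[v(b_i) - c(p_i)\bigr].
\]

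The proof concludes with an averaging argument on $S_c(P,B)$. Since slot values are sorted in decreasing order and user costs in increasing order, the sequence $v(b_i) - c(p_i)$ for $i = 1, \dots, \tau$ is non-increasing and strictly positive. For any non-negative, non-increasing sequence, the first $k$ terms account for at least a $k/\tau$ fraction of the total sum, simply because the average of the first $k$ terms dominates the average of the last $\tau - k$. Taking $k = \tau - 5\gamma$ delivers the announced ratio of $1 - 5\gamma/\tau$.

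The step I expect to require the most care is the first one: I need to argue cleanly that the restriction of $S_c(P,B)$ to $P^{*}$ is a legal input to VCG and that VCG therefore beats it in total slot value. This combines Lemma~\ref{lem:subset_users} (feasibility of the restriction) with Corollary~\ref{cor:all_assigned} (no user escapes to the dummy, so VCG's welfare really equals $\sum_{(p,b)\in T} v(b)$ over real slots only). Once that inequality is in place, the remaining steps reduce to Lemma~\ref{lem:c_e_bound}, a short tie-breaking observation about $p_{\tau - 5\gamma}$, and an elementary monotone-averaging inequality.
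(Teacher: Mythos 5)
Your proof is correct, but it takes a genuinely different route through the intermediate inequality $\GfT(\PRM) \geq \sum_{p \in P^{*}} [v(b_{S_c}(p)) - c(p)]$. You obtain it by appealing directly to the welfare-optimality of VCG against the one explicit competitor assignment you construct, namely the restriction of $S_c(P,B)$ to $P^{*}$ (with Lemma~\ref{lem:subset_users} certifying feasibility and Corollary~\ref{cor:all_assigned} guaranteeing no leakage to the dummy). The paper instead pivots around the quantity $\max_{m \in M} c_m$: it splits $\GfT(S)$ into $\sum_{(p,b) \in S}[v(b) - \max_m c_m] + \sum_{(p,b) \in S}[\max_m c_m - c(p)]$, lower-bounds the second sum by restricting to $\bar{P}_e$, and lower-bounds the first sum by observing that (a) VCG fills the highest-value slots first and (b) the dummy advertiser's value guarantees that every filled slot has value at least $\max_m c_m$. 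The two arguments then converge: both establish $\bar{P}_e \subseteq \bigcup_m \hat{P}_m$ via Lemma~\ref{lem:c_e_bound} plus the tie-breaking observation about $p_m \notin P(m)$, and both finish with an averaging step (you average the non-increasing non-negative sequence $v(b_i) - c(p_i)$ as a whole; the paper averages $\sum v(b_i)$ and $\sum c(p_i)$ separately, which is equivalent). Your route is arguably a bit more streamlined, since it makes one clean call to VCG optimality and never needs the dummy advertiser's value explicitly inside this proof; the paper's split by $\max_m c_m$ has the advantage of exhibiting the budget-balance structure in the same breath, since that pivot is exactly the price level separating advertiser charges from mediator payments.
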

\begin{proof}
Let $\bar{P}_e$ be the set of users at locations $1$ to $\tau - 5\gamma$ of the canonical assignment $S_c(P, B)$. Consider an arbitrary user $p \in \bar{P}_e$, and let $m$ be her mediator. Lemma~\ref{lem:c_e_bound} and the definition of $\bar{P}_e$ imply together $c(p) \leq c_e \leq c_m$. On the other hand, $c_m = c(p_m)$ is the cost of a user $p_m \not \in P(m)$, and thus, we cannot have $c(p) = c_m$ since the tie-breaking rule defined in Section~\ref{sec:comparison} guarantees that no two users have the same cost. Hence, $c(p) < c_m$.

Recall that $\hat{P}_m$ is defined as the set of all users of $P(m)$ whose cost is less than $c_m$. Together with the previous observation, this gives $\bar{P}_e \cap P(m) \subseteq \hat{P}_m$ for every mediator $m \in M$, and therefore, $\bar{P}_e \subseteq \bigcup_{m \in M} \hat{P}_m$. Notice now that Lemma~\ref{lem:subset_users} states that all the users of $\bigcup_{m \in M} \hat{P}_m$ are assigned by {\PRM}. Hence, in particular, the users of the subset $\bar{P}_e$ are all assigned by {\PRM}.

Denote by $S$ the assignment produced by {\PRM}. Then:
\begin{align} \label{eq:gain_from_trade}
	\GfT(S)
	={} &
	\sum_{(p, b) \in S} [v(b) - c(p)]
	=
	\sum_{(p, b) \in S} [v(b) - \max\nolimits_{m \in M} c_m] + \sum_{(p, b) \in S} [\max\nolimits_{m \in M} c_m - c(p)]\\
	\geq{} &
	\sum_{(p, b) \in S} [v(b) - \max\nolimits_{m \in M} c_m] + \sum_{p \in \bar{P}_e} [\max\nolimits_{m \in M} c_m - c(p)] \nonumber
	\enspace,
\end{align}
where the inequality holds by the above discussion and the observation that every user assigned by {\PRM} belongs to $\cup_{m \in M} \hat{P}_m$, and thus, has a cost of at most $\max_{m \in M} c_m$. 

The VCG auction assigns the users in a way that maximizes the total value of the advertisers from the assigned users. This means that the $|S|$ users assigned by $S$ are assigned to the slots at locations $1$ to $|S|$ of $S_c(P, B)$. Let us denote this set of slots by $B_S$. Additionally, let us denote by $\bar{B}_e$ the set of slots at locations $1$ to $|\bar{P}_e|$ in $S_c(P, B)$. Then,
\[
	\sum_{(p, b) \in S} [v(b) - \max\nolimits_{m \in M} c_m]
	=
	\sum_{b \in B_S} [v(b) - \max\nolimits_{m \in M} c_m]
	\geq
	\sum_{b \in \bar{B}_e} [v(b) - \max\nolimits_{m \in M} c_m]
	\enspace,
\]
where the inequality holds due to two observations: first $\bar{B}_e$ is a subset of $B_S$ since at least $|\bar{B}_e| = |\bar{P}_e|$ users are assigned by $S$. Second, the dummy advertiser $a_d$ has a value of $\max_{m \in M} c_m$, and thus, every slot of $B_S$ has a value of at least $\max_{m \in M} c_m$ since the VCG auction preferred assigning a user to this slot over assigning it to $a_d$.

Plugging the last inequality into Inequality~\eqref{eq:gain_from_trade} gives:
\begin{align*}
	\GfT(S)
	\geq{} &
	\sum_{b \in \bar{B}_e} [v(b) - \max\nolimits_{m \in M} c_m] + \sum_{p \in \bar{P}_e} [\max\nolimits_{m \in M} c_m - c(p)]
	=
	\sum_{b \in \bar{B}_e} v(b) - \sum_{p \in \bar{P}_e} c(p)\\
	\geq{} &
	\frac{\tau - 5\gamma}{\tau} \cdot \sum_{(p, b) \in S_c(P, B)} \mspace{-18mu} v(b) - \frac{\tau - 5\gamma}{\tau} \cdot \sum_{(p, b) \in S_c(P, B)} \mspace{-18mu} c(p)
	=
	\frac{\tau - 5\gamma}{\tau} \cdot \GfT(S_c(P, B))
	\enspace,
\end{align*}
where the second inequality holds since $\bar{P}_e$ ($\bar{B}_e$) contains, by definition, the $\tau - 5\gamma$ users (slots) with the lowest costs (highest values) among the $\tau$ users (slots) assigned by the canonical assignment $S_c(P, B)$. The corollary now follows from the last inequality since $S_c(P, B)$ is an assignment of users from $P$ to slots of $B$ maximizing the gain from trade.
\end{proof}

\subsection{The Incentive Properties of {\PRM}}

In this section we prove the incentive parts of Theorem~\ref{thm:PRM}. We begin with a lemma analyzing the incentive properties of {\PRM} for mediators.



\begin{lemma}
For every mediator $m$, {\PRM} is IR for $m$, and truthfulness is a dominant strategy for him.
\end{lemma}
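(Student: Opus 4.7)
The plan is to show that from mediator $m$'s perspective, the entire mechanism boils down to a posted-price offer at a \emph{fixed} price $c_m$ applied to any subset he wishes of his own users. I first observe that the threshold $c_m$ does not depend on $m$'s report at all: it is read off $S_c(P\setminus P(m),B)$, whose inputs are the (publicly known) advertiser capacities together with the users reported by mediators \emph{other than} $m$. Hence $c_m$ --- and, \emph{a fortiori}, the dummy advertiser's value $v(a_d)=\max_{m'\in M}c_{m'}$ --- are constants from $m$'s perspective.

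Next I would invoke Corollary~\ref{cor:all_assigned}, applied to the reported instance itself. Since a mediator may only report a subset of his true users, the hypothesis $|P(\cdot)|\le\gamma$ continues to hold in the reported instance, and the proof of the corollary refers only to reported quantities, so it goes through unchanged. Denoting by $(P'(m),\tilde{c})$ the (possibly non-truthful) report of $m$ and by
\[
	\hat{P}'(m) := \{p\in P'(m) : \tilde{c}(p) < c_m\}
\]
the set it induces, the corollary tells us that every user in $\hat{P}'(m)$ is assigned by the internal VCG auction (necessarily to a real advertiser). Thus $m$ is paid $c_m$ for each such user and owes the \emph{true} cost $c(p)$ to each, so his utility reduces to
\[
	u(m) = \sum_{p\in\hat{P}'(m)} \bigl[c_m - c(p)\bigr]
	\enspace.
\]

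The final step is a one-line optimization. Because $c_m$ is fixed and $\hat{P}'(m)$ can be made to equal any subset of $P(m)$ (by choosing which users to include in $P'(m)$ and declaring sufficiently low reported costs for them), the unique maximizer of $u(m)$ is $\{p\in P(m):c(p)<c_m\}=\hat{P}(m)$ --- exactly the set produced when $m$ reports truthfully. Truthfulness is therefore weakly dominant, and individual rationality is immediate, since every summand of $u(m)$ under truthful reporting is strictly positive by the definition of $\hat{P}(m)$.

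The only delicate point is to confirm that Corollary~\ref{cor:all_assigned} really transfers to a \emph{misreporting} $m$ --- after all, $m$'s manipulation can also change the thresholds $c_{m'}$ of the other mediators, and thereby perturb the complete set of items offered to the VCG. The saving grace is that the corollary's proof relies only on structural properties of the reported instance (the size of $S_c(P,B)$, the capacities of advertisers whose slots occupy its support, and the bound $|\bigcup_{m'}\hat{P}(m')|\le\tau$), all of which make sense and continue to hold in the reported world regardless of whether $m$ is truthful.
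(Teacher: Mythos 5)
Your proof is correct and follows essentially the same route as the paper's: establish that $c_m$ is independent of $m$'s report, invoke Corollary~\ref{cor:all_assigned} (the paper cites Lemma~\ref{lem:subset_users} at that point, but the conclusion actually used is the corollary's) to conclude that exactly the users of $\hat{P}'(m)$ are assigned and each yields payment $c_m$, and then maximize the resulting linear utility over the freely chosen set $\hat{P}'(m) \subseteq P(m)$. One small slip: $v(a_d) = \max_{m' \in M} c_{m'}$ is \emph{not} constant from $m$'s perspective --- for $m' \neq m$ the threshold $c_{m'}$ is computed from $S_c(P \setminus P(m'), B)$, which does include $m$'s reported users, so $m$ can perturb it. Fortunately you never actually use this claim (you route all the VCG reasoning through Corollary~\ref{cor:all_assigned}, whose statement and proof are robust to such perturbations), so the argument stands.
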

\begin{proof}
{\PRM} calculates a threshold $c_m$ based on the reports of advertisers and mediators other than $m$, and thus, $c_m$ is independent of the report of $m$. For every user $p \in P(m)$, if $p$'s cost is larger than $c_m$, then $p$ is not added to $\hat{P}_m$, and thus, {\PRM} does not assign $p$. On the other hand, if $p$'s cost is smaller than $c_m$, then $p$ is added to $\hat{P}_m$, and by Lemma~\ref{lem:subset_users} she is assigned by {\PRM} to some slot and $m$ is paid $c_m$.

From the above description it is clear that when $m$ is truthful, he is paid for each assigned user $p \in P(m)$ at least the cost of the user. Thus, {\PRM} is IR for $m$. To see why truthfulness is a dominant strategy for $m$, notice that the utility function of $m$ is maximized when all the users of $m$ whose cost is less than $c_m$ are assigned, and no other user of $m$ is assigned; which is exactly what {\PRM} does when $m$ is truthful.
\end{proof}

Proving the incentive properties of {\PRM} for the advertisers is more involved. We start by proving that {\PRM} is IR for advertisers.

\begin{observation} \label{obs:advertisers_IR}
For every advertiser $a$, {\PRM} is IR for $a$.
\end{observation}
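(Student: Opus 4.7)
The plan is to reduce the statement to the standard individual rationality property of VCG. By inspection of Mechanism~\ref{mch:PRM}, the only payment made by a non-dummy advertiser $a$ in {\PRM} is the amount charged to her by the VCG auction in Step~3, and the only users she receives are the ones allocated to her by that same VCG auction. Moreover, the set of users offered by the auction, the value of the dummy advertiser $a_d$, and the reports of the other advertisers are all independent of $a$'s report. Thus, from $a$'s perspective, participating in {\PRM} truthfully is equivalent to participating truthfully as a bidder in a fixed VCG auction with a fixed pool of items and fixed other bidders.

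The next step is to verify that $a$'s utility in {\PRM} coincides with her utility in this VCG auction. When $a$ reports truthfully in {\PRM}, she reports her true capacity $u(a)$ and true value $v(a)$ to the VCG auction, so her valuation function $n \mapsto \min\{n, u(a)\} \cdot v(a)$ is reported truthfully as well. Because VCG maximizes reported social welfare, it will never assign $a$ more than $u(a)$ users (an additional user would contribute $0$ to $a$'s reported value, so it would be assigned to the dummy or some other advertiser who can gain from it). Consequently, if $a$ receives $n \leq u(a)$ users and is charged $t$ by the VCG auction, her {\PRM} utility is exactly $n \cdot v(a) - t$, which is her VCG utility.

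Finally, I would invoke the textbook fact that VCG with the Clarke pivot rule is individually rational for every truthful bidder: a truthful bidder's payment is at most her reported value for the bundle she receives, so her utility is non-negative. Applied here, this gives $n \cdot v(a) - t \geq 0$, establishing the observation.

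The argument is essentially routine; the only minor obstacle is checking that the capacity constraint of $a$ does not interfere with the VCG IR guarantee. This is handled by the observation above that reported-welfare maximization automatically respects $a$'s capacity when $a$ is truthful, so no user is forced upon $a$ beyond $u(a)$ and the standard VCG IR bound applies directly.
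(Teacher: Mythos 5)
Your approach matches the paper's, which also just invokes the individual rationality of VCG: when $a$ bids truthfully, VCG charges her at most her (true) value for the bundle she receives, so her utility is non-negative. One part of your write-up is, however, false: the claim that ``the set of users offered by the auction, the value of the dummy advertiser $a_d$, and the reports of the other advertisers are all independent of $a$'s report.'' The thresholds $c_m$ are computed from the canonical assignments $S_c(P \setminus P(m), B)$, and $B$ contains $a$'s slots whose values come from $a$'s reported value; hence $a$'s report does affect each $c_m$, the sets $\hat{P}(m)$, and $v(a_d) = \max_m c_m$. (Only the reports of the other advertisers are genuinely independent of $a$.) Fortunately this claim is not needed for IR: the VCG IR guarantee is a pointwise statement about whichever auction ends up being run, so as long as $a$ is truthful in that auction her utility is non-negative, regardless of whether her report influenced the item pool. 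The false claim does matter for IC, which is exactly why the paper's IC proof is so much more delicate; be careful not to carry the same mistaken intuition into that part of the argument.
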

\begin{proof}
{\PRM} uses a VCG auction to assign users to advertisers, and to determine the cost each advertiser has to pay. Since VCG auctions are IR, the payment charged for every advertiser is upper bounded by the total value the advertiser gets from the users assigned to her.
\end{proof}

In the rest of the section we concentrate proving that truthfulness is a dominant strategy for the advertisers under {\PRM} . Let us fix an advertiser $a \in A$. For technical reasons it is convenient to somewhat enhance the strategy space of $a$. Clearly, if we can show that truthfulness is a dominant strategy for $a$ given the enhanced strategy space, then this is also true given the original strategy space.

Section~\ref{sec:comparison} explains how ties between valuations of advertisers are broken using an order $\sigma$. In particular, when comparing the values of slots of two different advertisers with identical values, the values of the slots of the advertiser that appears earlier in $\sigma$ are considered larger. We enhance the strategy space of $a$ by allowing her to report both her value and her location in $\sigma$ compared to the other advertisers and mediators. In other words, the relative order of the other advertisers and mediators in the order $\sigma$ defining the tie-breaking rule is fixed, and $a$ can choose to insert herself at any location within this order.

Given a possible strategy $s$ for $a$, we denote by $B_s$ the set of slots given that $a$ uses this strategy, and by $\ell(s)$ the lowest location in $S_c(P, B_s)$ of a slot of $a$. Note that, since the slots of $a$ appear sequentially in $S_c(P, B_s)$, they must appear at locations $\ell(s)$ to $\ell(s) + u(a) - 1$ in $S_c(P, B_s)$. 

\begin{observation} \label{obs:density}
For every two indices $1 \leq i \leq j \leq |B| - (\gamma - 1)$, if $i + (\gamma - 1) \leq j$, then there exists a strategy $s$ for advertiser $a$ such that $i \leq \ell(s) \leq j$.
\end{observation}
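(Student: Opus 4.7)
The plan is to show that as $a$ varies her strategy $s$, the achievable values of $\ell(s)$ form a set that is ``dense enough'' --- consecutive achievable values differ by at most $\gamma$ --- so that any window $[i,j]$ of length at least $\gamma$ must contain one of them. The hypothesis $i+(\gamma-1)\leq j$ guarantees exactly a window of length $\geq \gamma$, while the hypothesis $j\leq |B|-(\gamma-1)$ handles the right boundary.

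First, I would fix the reports of everyone other than $a$, and sort the slots of $B\setminus B(a)$ in decreasing value order. By Observation~\ref{obs:slots_consecutive}, for every strategy $s$ the slots of each other advertiser $a'$ form a contiguous block of length $u(a')$ in $S_c(P,B_s)$, and these blocks are independent of $a$'s report. Letting $u_1,u_2,\dotsc,u_q$ denote the sizes of these blocks in the order they appear, Observation~\ref{obs:slots_consecutive} applied to $a$ herself says that $a$'s own block of $u(a)$ slots must be inserted between two of these blocks (or before the first or after the last). Consequently, the set of achievable values of $\ell(s)$ is exactly
\[
	T \;=\; \bigl\{\,1,\; 1+u_1,\; 1+u_1+u_2,\; \dotsc,\; 1+u_1+u_2+\dotsb+u_q\,\bigr\}\enspace.
\]

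Second, I would observe that consecutive elements of $T$ differ by at most $\gamma$, since each $u_k\leq \gamma$. Combined with $j-i+1\geq \gamma$, this gives that either $[i,j]$ contains an element of $T$, or else the entire window lies strictly beyond the largest element of $T$. The latter is ruled out by the right-boundary hypothesis: the largest element of $T$ equals $|B|-u(a)+1 \geq |B|-\gamma+1 \geq j$, and once we know some element of $T$ is $\geq i$ (which follows from $1\in T$ when $i=1$, and from the gap bound otherwise by choosing the largest $s_k\leq j$ and noting $s_k\geq s_{k+1}-\gamma>j-\gamma\geq i-1$), we get an element of $T$ in $[i,j]$.

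Third, I would exhibit a strategy realizing each $k\in T$: pick $a$'s reported value strictly between the numerical values of the $(k{-}1)$-th and $k$-th slots in the sorted non-$a$ list, and if those numerical values coincide (which can happen only when the two neighboring advertisers share the same numerical value), instead report that common numerical value and insert $a$ into $\sigma$ between the two neighboring advertisers; Section~\ref{sec:comparison}'s tie-breaking then places $a$'s block starting at position $k$. Edge cases $k=1$ and $k=1+u_1+\dotsb+u_q$ are handled by reporting a value larger than, or smaller than, all other numerical values, respectively. The main obstacle, such as it is, is the bookkeeping for these boundary/tie cases; the density argument itself is immediate.
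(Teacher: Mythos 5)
Your proposal is correct and uses the same density argument as the paper's proof: the reachable values of $\ell(s)$ have gaps of at most $\gamma$ and span from $1$ up to $|B|-u(a)+1 \geq |B|-(\gamma-1)$, so any window of length at least $\gamma$ inside $[1,|B|-(\gamma-1)]$ must hit one. You spell out more carefully than the paper's one-paragraph sketch exactly which positions $T$ are reachable and how to realize each one by a joint choice of reported value and $\sigma$-position (handling ties), whereas the paper's statement that ``$\ell(s)-1$ equals the total capacity of the advertisers appearing before $a$ in $\sigma$'' implicitly presupposes this value choice.
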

\begin{proof}
$\ell(s) - 1$ is equal to the total capacity of the advertisers appearing before the location set by the strategy $s$ for $a$ in the order $\sigma$. Since the capacity of every advertiser is at most $\gamma$, the values $\ell(s)$ corresponding to two adjacent possible locations for $a$ in $\sigma$ are different by at most $\gamma$. Moreover, if $a$ is inserted by $s$ as the first advertiser in $\sigma$ than $\ell(s) = 1$, and if $a$ is inserted by $s$ as the last advertiser in $\sigma$ than $\ell(s) = |B| - (u(a) - 1) \geq |B| - (\gamma - 1)$. Hence, for every list of consecutive indices of length at least $\gamma$ such that all the indices are between $1$ and $|B| - (\gamma - 1)$, there must be a strategy $s$ for $a$ such that $\ell(s)$ is within this list.
\end{proof}

The next claims analyze some possible strategies for $a$. Let $\bar{s}$ be the strategy of $a$ in which $a$ reports a value of $\infty$, and let $\bar{\tau}$ be the size of the canonical assignment $S_c(P, B_{\bar{s}})$.

\begin{observation} \label{obs:upper_bound_double_change}
For every mediator $m \in M$ and strategy $s$ for $a$, $|S_c(P \setminus P(m), B_s)| \leq \bar{\tau}$.
\end{observation}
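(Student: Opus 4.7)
The plan is to prove the observation via the chain
\[
|S_c(P \setminus P(m), B_s)| \;\leq\; |S_c(P, B_s)| \;\leq\; |S_c(P, B_{\bar{s}})| \;=\; \bar{\tau},
\]
i.e., to verify two separate monotonicity properties of canonical assignments: (a) removing users can only weakly decrease the canonical size, and (b) weakly increasing slot values can only weakly increase the canonical size. Strategy $\bar{s}$ differs from $s$ only in that it raises $a$'s reported value to $\infty$, while $P \setminus P(m) \subseteq P$, so both monotonicity claims fit the setup directly.

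Before proving either inequality I would verify a small structural lemma I will use twice: if the slots are listed in decreasing value order $b_1, b_2, \ldots$ and the users in increasing cost order $p_1, p_2, \ldots$, then the set of positions $i$ with $v(b_i) > c(p_i)$ (which is exactly the set of positions assigned by the canonical assignment) forms a prefix $\{1, 2, \ldots, k\}$. This is because $v(b_i) \geq v(b_j)$ and $c(p_i) \leq c(p_j)$ whenever $i \leq j$, so the condition at position $j$ propagates back to all earlier positions. In particular, the size of the canonical assignment equals this prefix length $k$.

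For the first inequality, observe that both canonical assignments use the same slot ordering, since the slot set and the value function are determined by $s$ alone. Let $p_1, p_2, \ldots$ be the cost-increasing order of $P$ and $p'_1, p'_2, \ldots$ the cost-increasing order of $P \setminus P(m)$. Removing a subset of users can only push remaining users further down the list, so $c(p'_i) \geq c(p_i)$ for every $i$. Hence whenever $v(b_i) > c(p'_i)$ we also have $v(b_i) > c(p_i)$, and the prefix characterization gives $|S_c(P \setminus P(m), B_s)| \leq |S_c(P, B_s)|$.

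For the second inequality, both canonical assignments use the same user ordering (the cost-increasing order on $P$). Let $b_1, b_2, \ldots$ and $\bar{b}_1, \bar{b}_2, \ldots$ be the decreasing value orders under $s$ and $\bar{s}$ respectively, with values $v_s$ and $v_{\bar s}$. Since $\bar{s}$ only raises $a$'s value, $v_{\bar s}(b) \geq v_s(b)$ for every slot $b$. For every $i$, the slots $b_1, \ldots, b_i$ each have $v_s$-value at least $v_s(b_i)$, hence $v_{\bar s}$-value at least $v_s(b_i)$; so there are at least $i$ slots with $v_{\bar s}$-value at least $v_s(b_i)$, which forces $v_{\bar s}(\bar{b}_i) \geq v_s(b_i)$. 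Applying the prefix lemma once more, if $v_s(b_i) > c(p_i)$ then $v_{\bar s}(\bar{b}_i) > c(p_i)$, so the canonical size under $\bar{s}$ is at least as large as under $s$. The main obstacle here is the tie-breaking rule, since $\bar{s}$ may assign $a$ a different $\sigma$-position than $s$; but because $\bar{s}$ reports the numerical value $\infty$, every comparison involving $a$'s slots is resolved at the numerical level without invoking $\sigma$, and the tie-breaking among all other slots and users is untouched, so all the inequalities above remain valid under the tie-breaking convention and the two steps combine to give the observation.
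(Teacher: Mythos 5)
Your proof is correct, but it takes a genuinely different route from the paper. The paper argues in one stroke by contradiction: assuming $|S_c(P \setminus P(m), B_s)| > \bar{\tau}$, it looks at the user $p_{\bar{\tau}+1}$ and slot $b_{\bar{\tau}+1}$ at location $\bar{\tau}+1$ of $S_c(P \setminus P(m), B_s)$ and compares them directly to the user $p'_{\bar{\tau}+1}$ and slot $b'_{\bar{\tau}+1}$ at that location in $S_c(P, B_{\bar{s}})$, chaining $v(b'_{\bar{\tau}+1}) \geq v(b_{\bar{\tau}+1}) > c(p_{\bar{\tau}+1}) \geq c(p'_{\bar{\tau}+1})$ to contradict $|S_c(P, B_{\bar{s}})| = \bar{\tau}$. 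You instead decompose the claim into two monotonicity lemmas, each handled via the prefix characterization of the canonical assignment: deleting users pointwise raises the sorted cost sequence, and replacing $a$'s reported value by $\infty$ pointwise raises the sorted value sequence. Your route is more modular --- the first monotonicity step re-proves Observation~\ref{obs:med_removal_shortens} in generality, and the prefix lemma cleanly packages a fact the paper uses implicitly throughout --- at the cost of an intermediate quantity $|S_c(P, B_s)|$ that the paper never introduces. You also correctly identify and dispose of the one subtlety the paper handles by case-splitting on whether $b'_{\bar{\tau}+1}$ belongs to $a$: the tie-breaking position of $a$'s slots may change between $s$ and $\bar{s}$, but since $\bar{s}$ reports $\infty$ this is moot numerically, and the relative tie-breaking among all non-$a$ slots and users is unchanged.
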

\begin{proof}
Assume towards a contradiction that $|S_c(P \setminus P(m), B_s)| > \bar{\tau}$, and let $p_{\bar{\tau} + 1}$ and $b_{\bar{\tau} + 1}$ be the user and slot, respectively, at location $\bar{\tau} + 1$ in $S_c(P \setminus P(m), B_s)$. By our assumption, $p_{\bar{\tau} + 1}$ is assigned to $b_{\bar{\tau} + 1}$, and therefore, $v(b_{\bar{\tau} + 1}) > c(p_{\bar{\tau} + 1})$.

Let us now define $p'_{\bar{\tau} + 1}$ and $b'_{\bar{\tau} + 1}$ to be the user and slot, respectively, at location $\bar{\tau} + 1$ in $S_c(P, B_{\bar{s}})$. Clearly, $p_{\bar{\tau} + 1}$ appears at a location greater or equal to $\bar{\tau} + 1$ in $S_c(P, B_{\bar{s}})$, hence, $c(p_{\bar{\tau} + 1}) \geq c(p'_{\bar{\tau} + 1})$. If $b'_{\bar{\tau} + 1}$ is a slot of $a$, then by definition $v(b'_{\bar{\tau} + 1}) = \infty > v(b_{\bar{\tau} + 1})$. Otherwise, the location of $b'_{\bar{\tau} + 1}$ in $S_c(P \setminus P(m), B_s)$ must be smaller or equal to $\bar{\tau} + 1$ (its location in $S_c(P, B_{\bar{s}})$), and thus, $v(b'_{\bar{\tau} + 1}) \geq v(b_{\bar{\tau} + 1})$.

Combining all the inequalities we proved gives: $v(b'_{\bar{\tau} + 1}) \geq v(b_{\bar{\tau} + 1}) > c(p_{\bar{\tau} + 1}) \geq c(p'_{\bar{\tau} + 1})$, which is a contradiction since $b'_{\bar{\tau} + 1}$ and $p'_{\bar{\tau} + 1}$ have the same location in $S_c(P, B_{\bar{s}})$, but $p'_{\bar{\tau} + 1}$ is not assigned to $b'_{\bar{\tau} + 1}$ by $S_c(P, B_{\bar{s}})$.
\end{proof}

%

Using the last observation we can now prove the following lemma, which considers strategies of $a$ with a relatively large $\ell(s)$.

\begin{lemma} \label{lem:no_users_case}
Given that advertiser $a$ uses a strategy $s$ such that $\ell(s) \geq \bar{\tau} - 3\gamma$, {\PRM} assigns no users to $a$.
\end{lemma}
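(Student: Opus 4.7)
My plan is to show that the set of items $\bigcup_{m \in M} \hat{P}(m)$ placed in the VCG auction is small enough that advertisers of priority strictly higher than $a$'s already provide sufficient capacity to absorb all of it, leaving nothing for $a$. There are two ingredients: the size bound $\bigl|\bigcup_{m \in M} \hat{P}(m)\bigr| \leq \ell(s) - 1$, which is the heart of the proof, and the observation that in $S_c(P, B_s)$ the slots preceding $a$'s block are owned by other advertisers whose priority exceeds $a$'s.

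To prove the size bound, I would fix a mediator $m$ with $c_m \neq -\infty$ and let $r_m$ denote the position of $p_m$ in the global cost ordering of $P$. By construction $p_m$ sits at location $|S_c(P \setminus P(m), B_s)| - 4\gamma$ of $S_c(P \setminus P(m), B_s)$, and re-inserting the at most $|P(m)| \leq \gamma$ users of $P(m)$ into the cost order can shift that location upwards by at most $\gamma$. Combining this with Observation~\ref{obs:upper_bound_double_change} gives
\[
    r_m
    \;\leq\;
    \bigl(|S_c(P \setminus P(m), B_s)| - 4\gamma\bigr) + |P(m)|
    \;\leq\;
    (\bar{\tau} - 4\gamma) + \gamma
    \;=\;
    \bar{\tau} - 3\gamma.
\]
Any user $p \in \hat{P}(m)$ satisfies $c(p) < c_m = c(p_m)$ and therefore precedes $p_m$ in the global cost ordering, so $\hat{P}(m) \subseteq \{p_1, \dots, p_{r_m - 1}\} \subseteq \{p_1, \dots, p_{\bar{\tau} - 3\gamma - 1}\}$. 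Since $\hat{P}(m) = \varnothing$ whenever $c_m = -\infty$ (which also covers the degenerate case $\bar{\tau} \leq 3\gamma$), taking the union over $m \in M$ yields $\bigl|\bigcup_{m \in M} \hat{P}(m)\bigr| \leq \max\{0, \bar{\tau} - 3\gamma - 1\} \leq \ell(s) - 1$ by the hypothesis $\ell(s) \geq \bar{\tau} - 3\gamma$.

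It then remains to close the argument on the VCG side. By Observation~\ref{obs:slots_consecutive}, the slots of $a$ occupy a single consecutive block at positions $\ell(s), \dots, \ell(s) + u(a) - 1$ in $S_c(P, B_s)$, so the $\ell(s) - 1$ slots at positions $1, \dots, \ell(s) - 1$ all belong to advertisers distinct from $a$ and, by the tie-breaking order of Section~\ref{sec:comparison}, all have priority strictly above $a$. These advertisers collectively contribute $\ell(s) - 1$ units of capacity in the VCG auction, at least as much as the total number of items. A brief case analysis on whether $v(a_d) = \max_{m \in M} c_m$ falls above, equal to, or below $v(a)$, combined with the fact that $a_d$'s capacity equals the total item count, shows that in each case the supply is exhausted either by the $\ell(s) - 1$ higher-priority real advertisers alone, or by $a_d$ herself once she is reached in VCG's decreasing-value order; in every case $a$ is assigned no user.

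The main obstacle will be the bookkeeping in the second paragraph: translating $p_m$'s position from the restricted canonical assignment $S_c(P \setminus P(m), B_s)$ back into the global cost ordering on $P$ and correctly accounting for the $|P(m)| \leq \gamma$ users that have been removed. The $\gamma$ slack introduced there is exactly what makes the explicit $4\gamma$ reduction baked into Mechanism~\ref{mch:PRM} line up with the $3\gamma$ coefficient in the hypothesis of the lemma.
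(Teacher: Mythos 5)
Your proof is correct and takes essentially the same two-step route as the paper: first bound $\bigl|\bigcup_{m} \hat{P}(m)\bigr|$ by $\bar{\tau} - 3\gamma - 1$, then observe that this is too few items to reach $a$'s block of slots. The paper's version of step one phrases the bound in terms of the cost $c_{\bar{\tau}-3\gamma}$ of the user at location $\bar{\tau}-3\gamma$ of $S_c(P, B_{\bar{s}})$ and shows $c_{\bar{\tau}-3\gamma} \geq c_m$, whereas you track the position $r_m$ of $p_m$ in the global cost ordering; these are the same computation up to phrasing. In step two the paper simply notes that only slots at locations $1,\dots,\bar{\tau}-3\gamma-1$ can receive items, while you add a (sound but unneeded) case analysis on $v(a_d)$ versus $v(a)$ --- once you know there are $\leq \ell(s)-1$ items and $\ell(s)-1$ real slots strictly above $a$ in the value order, the dummy advertiser's value can only push $a$ further down the queue, never up, so the cases need not be distinguished.
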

\begin{proof}
Let $c_{\bar{\tau} - 3\gamma}$ be the cost of the user at location $\bar{\tau} - 3\gamma$ in $S_c(P, B_{\bar{s}})$ if $\bar{\tau} > 3\gamma$, and $-\infty$ otherwise. Our first objective is to show that $c_{\bar{\tau} - 3\gamma} \geq c_m$ for every mediator $m \in M$. Consider an arbitrary mediator $m \in M$. By Observation~\ref{obs:upper_bound_double_change}, $|S_c(P \setminus P(m), B_s)| \leq \bar{\tau}$. There are now two cases to consider:
\begin{compactitem}
	\item If $|S_c(P \setminus P(m), B_s)| \leq 4\gamma$, then $c_m = -\infty \leq c_{\bar{\tau} - 3\gamma}$.
	\item If $|S_c(P \setminus P(m), B_s)| > 4\gamma$, then $c_m$ is the cost of the user at location $|S_c(P \setminus P(m), B_s)| - 4\gamma$ of $S_c(P \setminus P(m), B_s)$. This user must appear in $S_c(P, B_{\bar{s}})$ either at location $(|S_c(P \setminus P(m), B_s)| - 4\gamma) + \gamma \leq \bar{\tau} - 3\gamma$ or at a smaller location. On the other hand, since $\bar{\tau} \geq |S_c(P \setminus P(m), B_s)| > 4\gamma$, $c_{\bar{\tau} - 3\gamma}$ is the cost of the user at location $\bar{\tau} - 3\gamma$ of $S_c(P, B_{\bar{s}})$, and thus, $c_{\bar{\tau} - 3\gamma} \geq c_m$.
\end{compactitem}

The fact that $c_{\bar{\tau} - 3\gamma} \geq \max_{m \in M} c_m$ implies that $\bigcup_{m \in M} \hat{P}_m$ contains only users whose cost is smaller than $c_{\bar{\tau} - 3\gamma}$, and there are only $\bar{\tau} - 3\gamma - 1$ such users. The VCG auction of {\PRM} assigns the users of $\bigcup_{m \in M} \hat{P}_m$ to the slots of $B_s$ at the lowest locations, and thus, only slots at locations $1$ to $\bar{\tau} - 3\gamma - 1$ have a potential to be assigned a user. Since we assume $\ell(s) \geq \tau - 3\gamma$, no slot of $a$ is among these slots, and thus, $a$ is assigned no users by {\PRM}.
\end{proof}

The following lemma considers strategies of $a$ with a relatively small $\ell(s)$.
\begin{lemma} \label{lem:stable_length}
Given a strategy $s$ for advertiser $a$ such that $\ell(s) \leq \bar{\tau} - 2\gamma$ and a mediator $m \in M$, $|S_c(P \setminus P(m), B_s)| = |S_c(P \setminus P(m), B_{\bar{s}})|$.
\end{lemma}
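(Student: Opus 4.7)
The plan is to establish the stronger fact that the canonical assignments $S_c(P \setminus P(m), B_s)$ and $S_c(P \setminus P(m), B_{\bar{s}})$ fill the same locations, position by position. Since both use the same user set $P \setminus P(m)$ sorted in increasing cost, this immediately gives the equality of sizes. For this purpose let $b^s_k$ (resp.~$b^{\bar{s}}_k$) denote the slot at location $k$ in the decreasing-value ordering of $B_s$ (resp.~$B_{\bar{s}}$), let $p^m_k$ denote the $k$-th cheapest user of $P \setminus P(m)$, let $p_k$ denote the $k$-th cheapest user of $P$, and set $T = \ell(s) + u(a) - 1$. The key structural observation is that under $\bar{s}$ the slots of $a$ (all with value $\infty$) occupy locations $1,\ldots,u(a)$ of the $B_{\bar{s}}$-ordering, while under $s$ Observation~\ref{obs:slots_consecutive} places them at locations $\ell(s),\ldots,T$ of the $B_s$-ordering; hence for every $k > T$ the slots $b^s_k$ and $b^{\bar{s}}_k$ coincide (both are the $(k-u(a))$-th most valuable non-$a$ slot), so the assignment condition ``$v(\cdot) > c(p^m_k)$'' at location $k$ is literally identical in the two canonical assignments.

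The substantive task is to show that for every $k \leq T$ both canonical assignments fill location $k$. Here I would exploit the slack in the hypothesis: $\ell(s) \leq \bar{\tau} - 2\gamma$ combined with $u(a) \leq \gamma$ gives $k + \gamma \leq T + \gamma \leq \bar{\tau} - 1 < \bar{\tau}$, so location $k+\gamma$ is filled in $S_c(P, B_{\bar{s}})$, yielding $v(b^{\bar{s}}_{k+\gamma}) > c(p_{k+\gamma})$. Because $|P(m)| \leq \gamma$, removing $P(m)$ can shift the $k$-th cheapest user by at most $\gamma$ positions within the sorted order of $P$, giving $c(p^m_k) \leq c(p_{k+\gamma})$. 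It therefore suffices to prove $v(b^s_k) \geq v(b^{\bar{s}}_{k+\gamma})$ and $v(b^{\bar{s}}_k) \geq v(b^{\bar{s}}_{k+\gamma})$, since each of these chains with $v(b^{\bar{s}}_{k+\gamma}) > c(p_{k+\gamma}) \geq c(p^m_k)$ to give the corresponding ``filled'' condition.

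The second of these inequalities is immediate from the decreasing-value sorting of $B_{\bar{s}}$. The first is the main technical obstacle, and splits according to the position of $k$ relative to the $a$-block in $B_s$. For $k < \ell(s)$, $b^s_k$ is the $k$-th most valuable non-$a$ slot while $b^{\bar{s}}_{k+\gamma}$ is the $(k+\gamma-u(a))$-th most valuable non-$a$ slot; since $\gamma \geq u(a)$ the first index is no larger than the second and the inequality follows from the ordering of the non-$a$ slots by value. For $\ell(s) \leq k \leq T$ the slot $b^s_k$ has value $v(a)$, which dominates the value of every non-$a$ slot lying after the $a$-block in the $B_s$-ordering; in particular it dominates the $(k+\gamma-u(a))$-th most valuable non-$a$ slot, whose index is at least $\ell(s)$ because $\gamma \geq u(a)$ and $k \geq \ell(s)$. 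Combining the ranges $k \leq T$ and $k > T$ shows that both canonical assignments fill exactly the same locations, yielding $|S_c(P \setminus P(m), B_s)| = |S_c(P \setminus P(m), B_{\bar{s}})|$.
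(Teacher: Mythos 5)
Your proof is correct, but it takes a somewhat different route from the paper's. The shared ingredients are the same: the slot orderings of $B_s$ and $B_{\bar{s}}$ coincide beyond the locations occupied by $a$'s slots, removing $P(m)$ shifts user ranks by at most $\gamma$, and $|S_c(P,B_{\bar{s}})| = \bar{\tau}$ certifies that locations up to $\bar{\tau}$ are filled. The difference is in how these are combined. The paper only verifies one checkpoint: it shows that location $\bar{\tau}-\gamma$ is filled in $S_c(P\setminus P(m), B_{\bar{s}})$ (via a single chain $v(s'_{\bar{\tau}-\gamma}) > v(s_{\bar{\tau}}) > c(p_{\bar{\tau}}) \geq c(p'_{\bar{\tau}-\gamma})$), then invokes the prefix-filling property of canonical assignments together with the slot-ordering agreement at locations $\geq \bar{\tau}-\gamma$ to conclude the two sizes coincide — no case analysis needed. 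You instead prove the stronger location-by-location statement directly: for each $k \leq T$ you verify both assignments fill location $k$, which forces you into a case split on whether $b^s_k$ is an $a$-slot or not (and requires the auxiliary slot comparisons $v(b^s_k) \geq v(b^{\bar{s}}_{k+\gamma})$ and $v(b^{\bar{s}}_k) \geq v(b^{\bar{s}}_{k+\gamma})$); for $k > T$ you note the fill conditions are literally identical. Your argument checks out — in particular the arithmetic $k+\gamma \leq T+\gamma \leq \bar{\tau}-1$, the rank-shift bound $c(p^m_k) \leq c(p_{k+\gamma})$, and the two sub-cases $k < \ell(s)$ and $\ell(s) \leq k \leq T$ are all sound — but it is more laborious than the paper's single-checkpoint argument, which exploits the prefix structure to avoid examining the $a$-block region entirely.
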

\begin{proof}
The order of the users is identical in both $S_c(P \setminus P(m), B_s)$ and $S_c(P \setminus P(m), B_{\bar{s}})$. On the other hand, notice that $\ell(s)$ must be at least $1$, and thus, $\bar{\tau} \geq 2\gamma + 1$ by the  assumption $\ell(s) \leq \bar{\tau} - 2\gamma$. Moreover, this assumption implies that all the slots of $a$ appear at locations between $1$ and $\max\{\ell(s) + \gamma - 1, \gamma\} \leq \max\{\bar{\tau} - \gamma - 1, \gamma\} = \bar{\tau} - \gamma - 1$ in both $S_c(P \setminus P(m), B_s)$ and $S_c(P \setminus P(m), B_{\bar{s}})$. Hence, the sequences of the slots appearing in $S_c(P \setminus P(m), B_s)$ and $S_c(P \setminus P(m), B_{\bar{s}})$ starting from location $\bar{\tau} - \gamma$ are identical. This implies that for every $\bar{\tau} - \gamma \leq i \leq \min\{|P \setminus P(m)|, |B|\}$, $S_c(P \setminus P(m), B_s)$ assigns its user at location $i$ to its slot at location $i$ if and only if $S_c(P \setminus P(m), B_{\bar{s}})$ assigns its user at location $i$ to its slot at location $i$.

The above discussion implies that $|S_c(P \setminus P(m), B_s)| = |S_c(P \setminus P(m), B_{\bar{s}})|$ whenever it holds that:
\begin{equation} \label{eq:inequality_to_prove}
	|S_c(P \setminus P(m), B_{\bar{s}})| \geq \bar{\tau} - \gamma
	\enspace.
\end{equation}
Thus, the rest of the proof concentrate on proving Inequality~\eqref{eq:inequality_to_prove}. Let $p_{\bar{\tau}}$ and $s_{\bar{\tau}}$ be the user and slot at location $\bar{\tau}$ of $S_c(P, B_{\bar{s}})$. Since $\bar{\tau} = |S_c(P, B_{\bar{s}})|$ by definition, $p_{\bar{\tau}}$ is assigned to $s_{\bar{\tau}}$ by $S_c(P, B_{\bar{s}})$, and thus, $v(s_{\bar{\tau}}) > c(p_{\bar{\tau}})$. Next, let $p'_{\bar{\tau} - \gamma}$ and $s'_{\bar{\tau} - \gamma}$ be the user and slot at location $\bar{\tau} - \gamma$ of $S_c(P \setminus P(m), B_{\bar{s}})$. Clearly, $p'_{\bar{\tau} - \gamma}$ appears either at location $\bar{\tau}$ or at a smaller location in $S_c(P, B_{\bar{s}})$, and thus, $c(p'_{\bar{\tau} - \gamma}) \leq c(p_{\bar{\tau}})$. Moreover, $s'_{\bar{\tau} - \gamma}$ appears at location $\bar{\tau} - \gamma$ also in $S_c(P, B_{\bar{s}})$, and thus, $v(s'_{\bar{\tau} - \gamma}) > v(s_{\bar{\tau}})$. Combining all the above inequalities gives:
\[
	v(s'_{\bar{\tau} - \gamma})
	>
	v(s_{\bar{\tau}})
	>
	c(p_{\bar{\tau}})
	\geq
	c(p'_{\bar{\tau} - \gamma})
	\enspace.
\]
Hence, $p'_{\bar{\tau} - \gamma}$ is assigned to $s'_{\bar{\tau} - \gamma}$ by the canonical assignment $S_c(P \setminus P(m), B_{\bar{s}})$, which implies Inequality~\eqref{eq:inequality_to_prove}.
\end{proof}

\begin{corollary} \label{cor:same_prices}
For every mediator $m \in M$, $c_m$ is identical given that advertiser $a$ uses any strategy $s$ obeying $\ell(s) \leq \bar{\tau} - 2\gamma$.
\end{corollary}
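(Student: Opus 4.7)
The plan is to reduce the corollary to Lemma~\ref{lem:stable_length} together with the elementary observation that the user occupying any particular location of $S_c(P \setminus P(m), \cdot)$ depends only on the cost ordering of the users in $P \setminus P(m)$, which is unaffected by advertiser $a$'s strategy.

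First, I would fix an arbitrary mediator $m \in M$ and two strategies $s, s'$ for $a$ obeying $\ell(s), \ell(s') \leq \bar{\tau} - 2\gamma$. Lemma~\ref{lem:stable_length} gives
\[
	|S_c(P \setminus P(m), B_s)|
	=
	|S_c(P \setminus P(m), B_{\bar{s}})|
	=
	|S_c(P \setminus P(m), B_{s'})|
	\enspace,
\]
so let $k$ denote this common value. If $k \leq 4\gamma$, then by the definition of {\PRM}, $c_m = -\infty$ irrespective of $a$'s strategy, and we are done.

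In the remaining case $k > 4\gamma$, {\PRM} sets $c_m$ to the cost of the user at location $k - 4\gamma$ of $S_c(P \setminus P(m), B_s)$. By the construction of the canonical assignment, the user at location $i$ is the user with the $i$-th smallest cost in $P \setminus P(m)$, where ties are broken via the order $\sigma$ restricted to mediators and, within a single mediator, via that mediator's own reported order. None of these ingredients depend on $a$'s strategy, so the user at location $k - 4\gamma$ of $S_c(P \setminus P(m), B_s)$ coincides with the user at that same location of $S_c(P \setminus P(m), B_{s'})$. Hence $c_m$ is the same under $s$ and under $s'$, establishing the corollary.

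The main substantive obstacle has in effect already been dispatched in Lemma~\ref{lem:stable_length}, whose job is precisely to stabilize the length $k$ against changes in $a$'s strategy; the only remaining care is to distinguish the finite and $-\infty$ cases of $c_m$, but this is handled uniformly by the equality of lengths. I do not anticipate any additional technical hurdles.
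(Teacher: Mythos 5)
Your proof is correct and takes essentially the same approach as the paper: invoke Lemma~\ref{lem:stable_length} to stabilize the length of $S_c(P \setminus P(m), B_s)$, and observe that the user order in $S_c(P \setminus P(m), \cdot)$ is independent of $a$'s strategy, so the user at any fixed location (and hence $c_m$, finite or $-\infty$) is unchanged.
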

\begin{proof}
Consider two strategies $s_1$ and $s_2$ for $a$ such that $\ell(s_1), \ell(s_2) \leq \bar{\tau} - 2\gamma$. Given that $a$ uses strategy $s_1$, {\PRM} selects $c_m$ as the cost of the user at location $|S_c(P \setminus P(m), B_{s_1})| - 4\gamma$ in $S_c(P \setminus P(m), B_{s_1})$ or $-\infty$ if $|S_c(P \setminus P(m), B_{s_1})| \leq 4\gamma$. Similarly, given that that $a$ uses strategy $s_2$, {\PRM} selects $c_m$ as the cost of the user at location $|S_c(P \setminus P(m), B_{s_2})| - 4\gamma$ in $S_c(P \setminus P(m), B_{s_2})$ or $-\infty$ if $|S_c(P \setminus P(m), B_{s_2})| \leq 4\gamma$. We claim that in both cases the same $c_m$ is produced by {\PRM}.

To see why this is the case observe that the two canonical assignments $S_c(P \setminus P(m), B_{s_1})$ and $S_c(P \setminus P(m), B_{s_2})$ share the order of the users, \ie, at a given location they both have the same user. Moreover, by Lemma~\ref{lem:stable_length}:
\[
	|S_c(P \setminus P(m), B_{s_1})|
	=
	|S_c(P \setminus P(m), B_{\bar{s}})|
	=
	|S_c(P \setminus P(m), B_{s_2})|
	\enspace.
	\qedhere
\]
\end{proof}

We are now ready to prove that truthfulness is a dominant strategy for $a$.

\begin{lemma} \label{lem:truthful_for_fixed_a}
Given {\PRM}, truthfulness is a dominant strategy for advertiser $a$.
\end{lemma}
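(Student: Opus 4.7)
The plan is to fix an arbitrary strategy $s$ for $a$ and show $U(s^\ast) \geq U(s)$, where $s^\ast$ denotes the truthful strategy and $U(\cdot)$ denotes $a$'s utility. The argument is driven by a case split on $\ell(\cdot)$: a strategy is in the ``late'' regime when $\ell \geq \bar{\tau} - 3\gamma$, in which case Lemma~\ref{lem:no_users_case} says that $a$ is assigned no users and hence has utility $0$; and a strategy is in the ``early'' regime when $\ell \leq \bar{\tau} - 2\gamma$, in which case Corollary~\ref{cor:same_prices} makes every threshold cost $c_m$ (and therefore also the item set $\bigcup_{m} \hat{P}(m)$, the value and capacity of the dummy advertiser, and the reports of all non-$a$ advertisers) independent of which particular early-regime strategy $a$ uses. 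These two regimes overlap on $[\bar{\tau} - 3\gamma, \bar{\tau} - 2\gamma]$ and between them cover every possible value of $\ell$.

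First, I would dispose of the easy direction: if $\ell(s) \geq \bar{\tau} - 3\gamma$, then $U(s) = 0$ by Lemma~\ref{lem:no_users_case}, while Observation~\ref{obs:advertisers_IR} gives $U(s^\ast) \geq 0$, which already yields $U(s^\ast) \geq U(s)$. Otherwise $\ell(s) \leq \bar{\tau} - 2\gamma$, and by Corollary~\ref{cor:same_prices} step~3 of {\PRM} is then just a fixed multi-unit VCG auction in which $a$ participates as one bidder with true value $v(a)$. If it happens also that $\ell(s^\ast) \leq \bar{\tau} - 2\gamma$, then $s^\ast$ corresponds to $a$ submitting her true value into the very same VCG auction, and the classical VCG truthfulness argument yields $U(s^\ast) \geq U(s)$.

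The remaining sub-case is when $\ell(s^\ast) > \bar{\tau} - 2\gamma$ while $\ell(s) \leq \bar{\tau} - 2\gamma$; here Lemma~\ref{lem:no_users_case} already gives $U(s^\ast) = 0$, and the task reduces to showing $U(s) \leq 0$. My plan for this is to establish the single inequality $v(a) \leq \max_{m \in M} c_m$. Once this is in hand, the conclusion is easy: since the dummy advertiser values every user at $\max_{m \in M} c_m$ and has capacity at least $|\bigcup_m \hat{P}(m)|$, any user won by $a$ in the VCG auction of step~3 would force $a$ to pay at least $\max_{m \in M} c_m \geq v(a)$ in her VCG price, contributing a non-positive amount to $U(s)$ and summing to $U(s) \leq 0$.

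The delicate part, and the main obstacle, is proving $v(a) \leq \max_{m \in M} c_m$ under the assumption $\ell(s^\ast) > \bar{\tau} - 2\gamma$. The rough idea is that this assumption forces the first $\bar{\tau} - 2\gamma$ positions of $S_c(P, B_{s^\ast})$ to be occupied by non-$a$ slots, each of value at least $v(a)$. Since $B_{\bar{s}}$ differs from $B_{s^\ast}$ only by moving $a$'s $u(a) \leq \gamma$ slots to the top, the same non-$a$ slots populate positions $u(a)+1, \ldots, u(a) + \bar{\tau} - 2\gamma$ in $S_c(P, B_{\bar{s}})$ with the same values, and one can moreover show $|S_c(P, B_{s^\ast})| = \bar{\tau}$ along the way. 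Tracking these slots further into $S_c(P \setminus P(m), B_{\bar{s}})$ for a suitably chosen mediator $m$, and invoking Observation~\ref{obs:upper_bound_double_change} together with the definition of $c_m$ in Mechanism~\ref{mch:PRM}, should exhibit a mediator $m$ whose $c_m$ is at least $v(a)$, completing the argument.
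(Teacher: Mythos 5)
Your case structure and first two sub-cases track the paper's proof closely: the case $\ell(s) \geq \bar{\tau} - 3\gamma$ is handled by Lemma~\ref{lem:no_users_case} and Observation~\ref{obs:advertisers_IR}, and the case $\ell(s), \ell(s^\ast) \leq \bar{\tau} - 2\gamma$ is handled by Corollary~\ref{cor:same_prices} plus VCG incentive compatibility.

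The gap is in the last sub-case. The claimed inequality $v(a) \leq \max_{m \in M} c_m$ does \emph{not} follow from $\ell(s^\ast) > \bar{\tau} - 2\gamma$, and in fact it can fail. A concrete counterexample: take $\gamma = 1$, ten users $p_1, \dots, p_{10}$ with costs $1, \dots, 10$, each with its own mediator; advertisers $a_1, \dots, a_9$ with values $20, 19, \dots, 12$ and unit capacity, and $a$ with $v(a) = 11$ and unit capacity. Then $\bar{\tau} = 10$ and $\ell(s^\ast) = 10 > \bar{\tau} - 2\gamma = 8$, yet each $c_m$ is the cost of a user at location $|S_c(P \setminus P(m), B_{\bar{s}})| - 4\gamma = 5$ of the nine remaining cheap users, giving $\max_{m \in M} c_m = 6 < 11 = v(a)$. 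The underlying reason the claim cannot hold in general is that $c_m$ is a \emph{user cost} drawn from roughly location $\bar{\tau} - 5\gamma$ of a canonical assignment, while $v(a)$ is a \emph{slot value} around location $\bar{\tau} - 2\gamma$; the canonical-assignment inequality $c(p_i) < v(b_i)$ points in the wrong direction and never forces a user cost to exceed a slot value. The conclusion $U(s) \leq 0$ in this sub-case is still true (in the example above a deviating $a$ pays a VCG price of $16 > 11 = v(a)$), but the actual per-unit VCG price can be far above the dummy's bid $\max_{m} c_m$, so bounding $v(a)$ by the dummy's bid cannot establish what you need.

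The paper closes this sub-case by a monotonicity/modification argument that avoids pricing entirely. It fixes a strategy $\hat{s}$ with $\bar{\tau} - 3\gamma \leq \ell(\hat{s}) \leq \bar{\tau} - 2\gamma$, which exists by Observation~\ref{obs:density}, and considers a hypothetical advertiser whose truthful strategy is $\hat{s}$. Because $\ell(\hat{s}) < \ell(s^\ast)$, the hypothetical advertiser's value is at least $v(a)$, so on every outcome she gets at least as much utility as $a$. By the already-established first case applied to her (her truthful $\ell$ is $\leq \bar{\tau} - 2\gamma$), her utility-maximizing strategy is $\hat{s}$, and by Lemma~\ref{lem:no_users_case} that strategy gives her utility $0$. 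Hence every strategy gives the true $a$ utility $\leq 0$, matching the $0$ that truthfulness achieves. That indirect comparison to a modified advertiser is the missing idea in your plan; without it, the direct price-comparison route does not go through.
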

\begin{proof}
If $\bar{\tau} \leq 3\gamma$, then, by Lemma~\ref{lem:no_users_case}, {\PRM} does not assign any users regardless of the strategy used by $a$, and thus, any strategy is a dominant strategy for $a$. Hence, we may assume $\bar{\tau} > 3\gamma$ for the remaining part of the proof. Let $s_t$ be the truthful strategy of $a$, and let $\hat{s}$ be any strategy such that $\bar{\tau} - 3\gamma \leq \ell(\hat{s}) \leq \bar{\tau} - 2\gamma$ (such a strategy exists by Observation~\ref{obs:density}). There are now two cases to consider, depending on the relationship between $\ell(s_t)$ and $\ell(\hat{s})$.

Consider first the case $\ell(s_t) \leq \ell(\hat{s}) \leq \bar{\tau} - 2\gamma$. By Corollary~\ref{cor:same_prices}, every strategy $s$ for $a$ having $\ell(s) \leq \bar{\tau} - 2\gamma$ results in the same values $c_m$ for all mediators, and thus, {\PRM} ends up using exactly the same VCG auction for all these strategies (up to the bid of the bidder corresponding to $a$). Since VCG auctions are incentive compatible, this means that reporting the truth is a best response for $a$ among all the possible strategies $s$ for $a$ obeying $\ell(s) \leq \bar{\tau} - 2\gamma$. Moreover, truthfulness guarantees a non-negative utility for $a$ since {\PRM} is advertiser-side IR by Observation~\ref{obs:advertisers_IR}. Thus, being truthful dominates also strategies $s$ for $a$ having $\ell(s) \geq \bar{\tau} - 2\gamma \geq \bar{\tau} - 3\gamma$ since such strategies result in $0$ utility for $a$ by Lemma~\ref{lem:no_users_case}.

Consider now the case $\ell(s_t) > \ell(\hat{s}) \geq \bar{\tau} - 3\gamma$. By Lemma~\ref{lem:no_users_case} $a$ is assigned no users (and thus, gets $0$ utility) when she uses her truthful strategy $s_t$. Let us observe what happens if advertiser $a$ is modified so that her new truthful strategy is $\hat{s}$. By Lemma~\ref{lem:no_users_case} the modified $a$ also gets $0$ utility when using her truthful strategy $\hat{s}$. On the other hand, the value of the modified advertiser for each user is at least as high as the value of the original advertiser, and thus, the modified advertiser gets at least as high utility as the original advertiser for every given outcome. Hence, to show that no strategy can give the original advertiser a positive utility, it is enough to show that the strategy maximizing the utility of the modified advertiser is her truthful strategy $\hat{s}$; which follows immediately from the first case (since the truthful strategy for the modified advertiser is $\hat{s}$, and $\hat{s}$ obeys $\ell(\hat{s}) \leq \bar{\tau} - 2\gamma$).
\end{proof}

\begin{corollary}
Given {\PRM}, truthfulness is a dominant strategy for every advertiser.
\end{corollary}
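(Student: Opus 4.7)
The corollary is an immediate universal quantification over Lemma~\ref{lem:truthful_for_fixed_a}. My plan is simply to observe that the advertiser $a$ fixed at the start of the subsection was an arbitrary element of $A$: nothing in the chain of constructions and claims leading to Lemma~\ref{lem:truthful_for_fixed_a} — the enhancement of $a$'s strategy space to include a choice of position in $\sigma$, the definitions of $B_s$, $\ell(s)$, $\bar{s}$ and $\bar{\tau}$, Observations~\ref{obs:density} and~\ref{obs:upper_bound_double_change}, and Lemmas~\ref{lem:no_users_case}, \ref{lem:stable_length} and~\ref{lem:truthful_for_fixed_a} themselves — uses any property of $a$ beyond its being an advertiser in $A$. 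In particular, the fact that the tie-breaking order $\sigma$ is chosen independently of the reports, together with the symmetric way {\PRM} treats all advertisers (the thresholds $\{c_m\}_{m \in M}$ are computed from the reports of the other advertisers and mediators, the set $\bigcup_{m \in M} \hat{P}(m)$ is independent of the identities of the advertisers, and the VCG subroutine is symmetric in its bidders), means that every step of the argument applies verbatim when any other $a' \in A$ is substituted for $a$.

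Hence my proof reduces to a single sentence: for every advertiser $a' \in A$, Lemma~\ref{lem:truthful_for_fixed_a} applied with $a'$ in the role of $a$ yields that truthfulness is a (weakly) dominant strategy for $a'$; since $a'$ is arbitrary, truthfulness is a dominant strategy for every advertiser. I do not foresee any genuine obstacle here — the only content is to confirm the symmetry observation above, which is built into the definition of {\PRM} itself.
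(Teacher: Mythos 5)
Your proof is correct and matches the paper's own argument, which is the one-line observation that $a$ was an arbitrary fixed advertiser so Lemma~\ref{lem:truthful_for_fixed_a} applies to each advertiser in turn. The extra verification you offer (that no step in the preceding chain of claims uses a specific property of $a$) is exactly the implicit content of that one-liner, just spelled out.
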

\begin{proof}
Follows from Lemma~\ref{lem:truthful_for_fixed_a} since $a$ is just an arbitrary fixed advertiser.
\end{proof}}
\section{Randomized Mechanism} \label{sec:randomized}

In this section we describe the randomized mechanism ``Threshold by Partition Mechanism'' (\TPM) for our model. Unlike the mechanism {\PRM} from Section~\ref{sec:deterministic}, {\TPM} need not assume public knowledge about the advertisers' capacities, \ie, the advertisers now have multi-dimensional strategy spaces. On the other hand, {\TPM} assumes access to a value $\alpha \in [|S_c(P, B)|^{-1}, 1]$ such that we are guaranteed that:
\[
	\frac{u(a)}{|S_c(P, B)|} \leq	\alpha \quad \forall\; a \in A
	\qquad \text{and}	\qquad
	\frac{|P(m)|}{|S_c(P, B)|} \leq	\alpha \quad \forall\; m \in M
	\enspace.
\]
In other words, $\alpha$ is an upper bound on how large can be the capacity of an advertiser or the number of users of a mediator compared to the size of the optimal assignment $S_c(P, B)$. We remind the reader that $\alpha$ is related to the value $\gamma$ from Section~\ref{sec:deterministic} by the equation $\alpha = \gamma/\tau$, and thus, $\alpha$, like $\gamma$, can be informally understood as a bound on the importance of every single advertiser or mediator. It is important to note that $\alpha$ is well-defined only when $|S_c(P, B)| > 0$, and thus, we assume this inequality is true throughout the rest of the section.

A description of {\TPM} is given as Mechanism~\ref{mch:TPM}. Notice that Mechanism~\ref{mch:TPM} often refers to parameters of the model that are not known to the mechanism, such as the value of an advertiser or the number of users of a mediator. Whenever this happens, this should be understood as referring to the reported values of these parameters.

\vspace{2mm}
\noindent \begin{minipage}{\textwidth}
\captionsetup{type=mechanism}
\noindent \rule{\linewidth}{0.8pt}
\vspace{-6mm}\captionof{mechanism}{Threshold by Partition Mechanism (\TPM)}\label{mch:TPM}
\noindent \rule{\linewidth}{0.8pt}
\vspace{-5.5mm}
\begin{compactenum}[\bfseries 1.]
	\item Let $M_L$ be a set of mediators containing each mediator $m \in M$ with probability $\min\{17\sqrt[3]{\alpha}, 1\}$, independently. Similarly, $A_L$ is a set of advertisers containing each advertiser $a \in A$ with probability $\min\{17\sqrt[3]{\alpha}, 1\}$, independently. Intuitively, the subscript $L$ in $M_L$ and $A_L$ stands for ``low priority''.
\end{compactenum}
\end{minipage}
{
\hphantom{1}\setlength{\plitemsep}{3pt}
\begin{compactenum}[\bfseries 1.]
	\setcounter{enumi}{1}
	\item Let $\sigma_A$ be an arbitrary order over the advertisers that places the advertisers of $A_L$ after all the other advertisers and is independent of the reports received by the mechanism. Similarly, $\sigma_M$ is an arbitrary order over the mediators that places the mediators of $M_L$ after all the other mediators and is independent of the reports received by the mechanism.
	\item Partition the mediators of $M$ into two sets $M_1$ and $M_2$ by adding each mediator $m \in M$ with probability $1/2$, independently, to $M_1$ and otherwise to $M_2$. Similarly, partition the advertisers of $A$ into two sets $A_1$ and $A_2$ by adding each advertiser $a \in A$ with probability $1/2$, independently, to $A_1$ and otherwise to $A_2$. The rest of the algorithm explains how to assign users of mediators from $M_1$ to slots of advertisers from $A_1$, and how to charge advertisers of $A_1$ and pay mediators of $M_1$. Analogous steps, which we omit, should be added for handling the advertisers of $A_2$ and the mediators of $M_2$.
	\item Let $\hat{p}$ and $\hat{b}$ be the user and slot at location $\lceil (1 - 4\sqrt[3]{\alpha}) \cdot |S_c(P(M_2), B(A_2))| \rceil$ of the canonical solution $S_c(P(M_2), B(A_2))$. If $(1 - 4\sqrt[3]{\alpha}) \cdot |S_c(P(M_2), B(A_2))| \leq 0$, then the previous definition of $\hat{p}$ and $\hat{b}$ cannot be used. Instead define $\hat{p}$ as a dummy user of cost $-\infty$ and $\hat{b}$ as a dummy slot of value $\infty$. Using $\hat{p}$ and $\hat{b}$ define now two sets
	\[
		\hat{P} = \{p \in P(M_1) \mid c(p) < c(\hat{p})\}
		\qquad \text{and} \qquad
		\hat{B} = \{b \in B(A_1) \mid v(b) > v(\hat{b})\}
		\enspace.
	\]
	It is important to note that $\hat{P}$ and $\hat{B}$ are empty whenever $\hat{p}$ and $\hat{b}$ are dummy user and slot, respectively.
\end{compactenum}
}
\vspace{2mm}
\noindent\begin{minipage}{\textwidth}
\begin{compactenum}[\bfseries 1.]
	\setcounter{enumi}{4}
	\item While there are unassigned users in $\hat{P}$ and unassigned slots in $\hat{B}$ do the following:
	\begin{compactitem}[$\bullet$]
		\item Let $m$ be the earliest mediator in $\sigma_M$ having unassigned users in $\hat{P}$.
		\item Let $a$ be the earliest advertiser in $\sigma_A$ having unassigned slots in $\hat{B}$.
		\item Assign the unassigned user of $\hat{P} \cap P(m)$ with the lowest cost to an arbitrary unassigned slot of $\hat{B} \cap B(a)$, charge a payment of $v(\hat{b})$ to advertiser $a$ and transfer a payment of $c(\hat{p})$ to mediator $m$.\footnotemark
	\end{compactitem}
\end{compactenum}
\vspace{-3mm}\noindent \rule{\linewidth}{0.8pt}
\end{minipage}
\footnotetext{Note that $m$ is paid $c(\hat{p})$ for the assignment of each one of his users. Hence, $m$ is always BB since the membership of $p$ in $\hat{P}$ implies $c(p) < c(\hat{p})$ (and $c(p) \leq c(\hat{p})$ when the costs are compared as numbers).}
\vspace{2mm}

Let us recall our result regarding {\TPM}\inConference{ (the proof of Theorem~\ref{thm:TPM} is deferred to Appendix~\ref{app:randomized})}.

\begin{reptheorem}{thm:TPM}
{\TPM} is BB, IR, IC and $(1 - 28\sqrt[3]{\alpha} - 20e^{-2/\sqrt[3]{\alpha}})$-competitive.
\end{reptheorem}

Intuitively, the analysis of {\TPM} works by exploiting concentration results showing that the canonical assignments $S_c(P(M_1), B(A_1))$ and $S_c(P(M_2), B(A_2))$ are \inArXiv{not very different from each other}\inConference{quite similar}. This similarity allows us to use information from $S_c(P(M_2), B(A_2))$ to set the payments charged to advertisers of $B(A_1)$ and payed to mediators of $P(M_1)$, and vice versa, while keeping a reasonable competitive ratio. The advantage of setting the payments this way is that it reduces the control players have on the payments they have to pay or are paid, which helps the mechanism to be IC.

\inArXiv{\inConference{\section{Proof of Theorem~\ref{thm:TPM}} \label{app:randomized}}

\inArXiv{In the rest of}\inConference{In} this section we prove Theorem~\ref{thm:TPM}. We start with the following observation.

\begin{observation}
{\TPM} is BB.
\end{observation}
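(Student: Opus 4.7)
The plan is to argue separately for each of the two symmetric halves of {\TPM} (the $(M_1, A_1)$ half and the $(M_2, A_2)$ half) that the total amount collected from advertisers in that half is at least the total amount paid to mediators in that half. Summing the two halves then yields budget balance for the whole mechanism. The mediator-side accounting (each mediator $m$ pockets at least as much as he needs to forward to his assigned users) is already covered by the footnote in Mechanism~\ref{mch:TPM}, since every $p \in \hat{P} \cap P(m)$ has $c(p) < c(\hat{p})$ while $m$ is paid $c(\hat{p})$ per assigned user.

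Focus on the $(M_1, A_1)$ half; the other is analogous. Every iteration of the loop collects $v(\hat{b})$ from some advertiser and transfers $c(\hat{p})$ to some mediator, so it suffices to show that $v(\hat{b}) \geq c(\hat{p})$ whenever at least one iteration occurs. I would split into the two cases specified in the definition of $\hat{p}$ and $\hat{b}$. In the dummy case, where $(1 - 4\sqrt[3]{\alpha}) \cdot |S_c(P(M_2), B(A_2))| \leq 0$, the mechanism explicitly sets $\hat{P} = \hat{B} = \varnothing$, so the loop performs no assignment and transfers no money; budget balance for this half is then trivial. In the non-dummy case, $\hat{p}$ and $\hat{b}$ are the user and slot at some location $k = \lceil (1 - 4\sqrt[3]{\alpha}) \cdot |S_c(P(M_2), B(A_2))| \rceil$ with $1 \leq k \leq |S_c(P(M_2), B(A_2))|$; the upper bound holds because $|S_c(P(M_2), B(A_2))|$ is a non-negative integer and $1 - 4\sqrt[3]{\alpha} \leq 1$. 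By the very definition of the canonical assignment, the first $|S_c(P(M_2), B(A_2))|$ locations are precisely the genuinely assigned pairs, so $\hat{p}$ is assigned to $\hat{b}$ in $S_c(P(M_2), B(A_2))$, which gives $v(\hat{b}) > c(\hat{p})$ (strictly, and therefore also as numbers). Hence each loop iteration on this half collects at least what it pays.

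The proof is essentially bookkeeping, so I do not anticipate a real obstacle. The only mildly delicate point is matching the two cases in the definition of $\hat{p}$ and $\hat{b}$ with the emptiness of $\hat{P}$ and $\hat{B}$, and checking that the ceiling index stays within the range where the canonical assignment actually assigns a pair; both are immediate from the definitions.
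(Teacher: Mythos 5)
Your proof is correct and follows essentially the same route as the paper: both argue per transaction that each assignment of a user $p$ to a slot $b$ collects $v(\hat{b})$ and pays $c(\hat{p})$, and both reduce to the observation that in the non-dummy case $\hat{p}$ and $\hat{b}$ are matched by the canonical assignment $S_c(P(M_2),B(A_2))$, so $c(\hat{p}) < v(\hat{b})$. You are slightly more explicit than the paper about why $\hat{p}$ and $\hat{b}$ are genuinely matched (that the index $k$ lies in $[1,|S_c(P(M_2),B(A_2))|]$ and the canonical assignment is a prefix), whereas the paper gets the same fact by noting that a non-empty $\hat{P}$ (or $\hat{B}$) forces the non-dummy case and then treating the matching as immediate; this is a presentational difference, not a substantive one.
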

\begin{proof}
We show that whenever {\TPM} assigns a user $p$ to a slot $b$, it charges the advertiser of $b$ more than it pays the mediator of $p$. Consider an arbitrary ordered pair $(p, b)$ from the assignment produced by {\TPM}. We assume without loss of generality that $p \in P(M_1)$; the other case is symmetric. The fact that $p$ is assigned implies $p \in \hat{P}$, hence, $\hat{P}$ is non-empty. Similarly, the fact that a user is assigned to $b$ implies that $\hat{B}$ is not empty.

Recall that the fact that $\hat{P}$ and $\hat{B}$ are not empty implies that $\hat{p}$ and $\hat{b}$ are not dummy user and slot, respectively. This means that $\hat{p}$ and $\hat{b}$ are matched by the canonical assignment $S_c(P(M_2), B(A_2))$. Since a canonical assignment never assigns a user $p'$ to a slot $b'$ when $c(p') > v(b')$, we get $c(\hat{p}) < v(\hat{b})$. The proof now completes by observing that the advertiser of $b$ is charged $v(\hat{b})$ for the assignment of $p$ to $b$, while the mediator of $p$ is paid $c(\hat{p})$ for this assignment.
\end{proof}

\subsection{The Incentive Properties of {\TPM}}

In this section we prove the incentive parts of Theorem~\ref{thm:TPM}. Specifically, we prove that {\TPM} is IR and IC.

\begin{lemma}
For every mediator $m$ (advertiser $a$), {\TPM} is IR for $m$ ($a$), and truthfulness is a dominant strategy for him (her).
\end{lemma}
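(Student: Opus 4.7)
The plan is to exploit the key structural property that TPM computes the thresholds $c(\hat{p})$ and $v(\hat{b})$ applicable to one side of the partition using \emph{only} reports from the opposite side. Since the random partitions $M_1,M_2,A_1,A_2$, the low-priority sets $M_L,A_L$, and the orderings $\sigma_M,\sigma_A$ are all drawn independently of the reports, and since TPM is meant to be universally truthful, I fix the realization of the coins and assume without loss of generality $m\in M_1$ (the case $m\in M_2$ is symmetric). Then $c(\hat{p})$, $v(\hat{b})$, the whole set $\hat{B}$, the position of $m$ in $\sigma_M$, and every other mediator's contribution to $\hat{P}$ are all independent of $m$'s report; consequently the number $k$ of slots from $\hat{B}$ still unassigned when $m$'s turn arrives in Step~5 is also fixed.

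The only effect of $m$'s report is to determine $\hat{P}\cap P(m)$ and the order in which its members are processed (smallest reported cost first). $m$'s utility then decomposes as a sum of independent per-user contributions $c(\hat{p})-c(p)$ over the first $\min\{k,|\hat{P}\cap P(m)|\}$ users in that order, with each term positive iff the \emph{true} cost satisfies $c(p)<c(\hat{p})$. The utility-maximizing outcome is therefore to have exactly the truly profitable users of $m$ processed in ascending true-cost order, which is precisely what truthful reporting induces. I would rule out deviations in three categories: omitting a truly profitable user (strictly forfeits a positive contribution when $k$ is large enough to reach that user), inserting a user with true cost above $c(\hat{p})$ into $\hat{P}$ (either suffering a negative contribution when that user is itself assigned or displacing a profitable user from the first-$k$ prefix), and permuting reported costs among truly profitable users (weakly replacing low-true-cost users in the first-$k$ prefix by higher-cost ones). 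Each case weakly decreases $m$'s utility, giving dominance; IR is immediate because every user assigned under a truthful report satisfies $c(p)<c(\hat{p})$, so the per-user contribution is non-negative.

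The advertiser case is entirely symmetric. For $a\in A_1$, the threshold $v(\hat{b})$, the set $\hat{P}$, the position of $a$ in $\sigma_A$, and the contribution of every other advertiser to $\hat{B}$ are all independent of $a$'s report, so $a$'s report acts only through an all-or-nothing decision of whether her slots enter $\hat{B}$ (iff the reported value $v'(a)$ exceeds $v(\hat{b})$) and through the reported capacity $u'(a)$. Truthful reporting gives a per-user contribution of $v(a)-v(\hat{b})$ for each of the up to $u(a)$ assigned users, which is non-negative exactly when $v(a)>v(\hat{b})$---the same condition under which her slots truthfully enter $\hat{B}$---yielding IR. I would then rule out deviations: misreporting the value either forfeits a strictly positive per-user contribution (if $v(a)>v(\hat{b})$ and the deviation removes her slots from $\hat{B}$) or yields a non-positive one (if $v(a)\leq v(\hat{b})$ and the deviation inserts her slots into $\hat{B}$); underreporting $u(a)$ sacrifices strictly positive contributions; and overreporting $u(a)$ accepts users beyond the $u(a)$-th, each contributing $-v(\hat{b})\leq 0$.

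The main conceptual obstacle is confirming that $m$'s own report genuinely cannot alter $k$, as any such influence would break the clean per-user utility decomposition. This reduces to the observation that mediators earlier than $m$ in $\sigma_M$ are fully processed before $m$'s turn begins, and their contributions to $\hat{P}$ depend only on their own reports; hence $m$ can affect only how the fixed $k$ slots arriving at him are filled by his own users, which is exactly what the three-case analysis above controls. The analogous observation for advertisers---that the fixed cascade of $\sigma_A$ insulates the number of users reaching $a$ from $a$'s own report---completes the proof.
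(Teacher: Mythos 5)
Your proposal is correct and follows essentially the same route as the paper's proof: fix the coins, assume $m\in M_1$, observe that $c(\hat{p})$, $\hat{B}$, and the number $k$ of slots remaining when $m$ is reached in $\sigma_M$ are all independent of $m$'s report, and then argue that truthful reporting assigns exactly the $\min\{k,|\hat{P}\cap P(m)|\}$ users with lowest true cost below $c(\hat{p})$, which maximizes $m$'s utility. The paper states the key claim (existence of the report-independent $k$ governing which users are assigned) and derives dominance from it in a compressed way, whereas you spell out the three deviation categories and the symmetric advertiser case explicitly, but the underlying argument is the same.
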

\begin{proof}
We prove the lemma only for mediators. The proof for advertisers is analogous (with slots exchanging roles with users, $v(\hat{b})$ exchanging roles with $c(\hat{p})$, etc.), and thus, we omit it. Additionally, we assume without loss of generality that $m \in M_1$ (the other case is symmetric).

Recall that {\TPM} calculates a set $\hat{P}$ of users, and each user $p' \in P(M_1)$ belongs to $\hat{P}$ if and only if her cost is lower than some threshold $c(\hat{p})$. Additionally, note that {\TPM} calculates the threshold $c(\hat{p})$ based on the reports of advertisers and mediators in $A_2$ and $M_2$, respectively. Thus, $m$, which belongs to $M_1$, cannot affect this threshold. Similarly, $m$ cannot affect the set of slots $\hat{B}$.

Whenever a user $p \in P(m)$ is assigned to a slot the utility of $m$ decreases by $c(p)$ and increases by the payment $m$ gets, which is $c(\hat{p})$. In other words, the utility of $m$ changes by $c(\hat{p}) - c(p)$. When $m$ is truthful this change is always non-negative since $p$ can be assigned only when she belongs to $\hat{P}$, which implies that the cost of $p$ is smaller than $\hat{p}$. This already proves that the utility of $m$ is non-negative when he is truthful, and thus, {\TPM} is IR for $m$.

We claim that there exists a value $k$ which is independent of the report of $m$ such that for any report of $m$ the mechanism assigns the $\min\{k, |\hat{P} \cap P(m)|\}$ users of $m$ with the lowest reported costs. Before proving this claim, let us explain why the lemma follows from this claim. The above description shows that the utility of $m$ changes by a $c(\hat{p}) - c(p)$ for every assigned user $p \in P(m)$, thus, $m$ wishes to assign as many as possible users having cost less than $c(\hat{p})$, and if he cannot assign all of them then he prefers to assign the users with the lowest costs. By reporting truthfully $m$ guarantees that only users of cost less than $c(\hat{p})$ get to $\hat{P}$, and thus, has a chance to be assigned. Moreover, by the above claim the mechanism assigns the $k$ users of $\hat{P} \cap P(m)$ with the lowest costs (or all of them if $|\hat{P} \cap P(m)| < k$). Hence, the above claim indeed implies that truthfulness is a dominant strategy for $m$.

We are only left to prove the above claim. One can view {\TPM} as considering the mediators according to the order $\sigma_M$. For every mediator of $m' \in M_1$ {\TPM} assigns the users of $\hat{P} \cap P(m')$ one by one (in an increasing reported costs order). Every assignment of a user ``consumes'' one unassigned slot of $\hat{B}$, and the assignment of users stops when all the users of $\hat{P}$ are assigned, or there are no more unassigned slots in $\hat{B}$. This means that if {\TPM} assigns users to all the slots of $\hat{B}$ before considering $m$, then no user of $m$ is assigned. The claim is true in this case with $k = 0$. Otherwise, we choose $k$ to be the number of unassigned $\hat{B}$ slots immediately before {\TPM} considers $m$. Notice that the report of $m$ does not affect the behavior of {\TPM} up to the point it considers $m$, and thus, $k$ is independent of the report of $m$. If there are more than $k$ users in $\hat{P} \cap P(m)$, then only the $k$ of them with the lowest costs are assigned before {\TPM} consumes all the unassigned slots of $\hat{B}$ and stops. Otherwise, if $|\hat{P} \cap P(m)| \leq k$ then {\TPM} manages to assigns all the users of $\hat{P} \cap P(m)$ before it runs out of unassigned slots of $\hat{B}$.
\end{proof}


\subsection{The Competitive Ratio of {\TPM}}

In this section we analyze the competitive ratio of {\TPM}. 
Let us define $\tilde{P}$ ($\tilde{B}$) as the set of the users (slots) at locations $1$ to $\lceil (1 - 11\sqrt[3]{\alpha})\tau \rceil$ of the canonical assignment $S_c(P, B)$ ($\tilde{P}$ and $\tilde{B}$ are defined as the empty set when $1 - 11\sqrt[3]{\alpha} \leq 0$). The following observation shows that most of the gain from trade of the canonical assignment $S_c(P, B)$ comes from the users and slots of $\tilde{P}$ and $\tilde{B}$, respectively. For convenience, let us denote by $P_o$ the set of users that are assigned by $S_c(P, B)$, and by $B_o$ the set of slots that are assigned some user by $S_c(P, B)$.

\begin{observation} \label{obs:tilde_elements}
$\sum_{b \in \tilde{B}} v(b) - \sum_{p \in \tilde{P}} c(p) \geq (1 - 11\sqrt[3]{\alpha}) \cdot \GfT(S_c(P, B))$.
\end{observation}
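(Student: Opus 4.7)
The plan is to exploit the monotonicity built into the canonical assignment: slots are ordered by decreasing value and users by increasing cost, so the first $k$ locations capture more than their ``fair share'' of the total value while simultaneously contributing less than their ``fair share'' of the total cost. This is essentially a rearrangement/averaging argument.

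Setting $k = \lceil (1 - 11\sqrt[3]{\alpha})\tau \rceil$, I would first dispose of the degenerate case $k \leq 0$. Here, by definition, $\tilde{P} = \tilde{B} = \varnothing$, so the left-hand side is $0$, while the right-hand side is non-positive because $\GfT(S_c(P,B)) \geq 0$, and the inequality holds trivially. In the remaining case $1 \leq k \leq \tau$, I write $(p_i, b_i)$ for the pair at location $i$ of $S_c(P, B)$, giving $\tilde{P} = \{p_i : 1 \leq i \leq k\}$ and $\tilde{B} = \{b_i : 1 \leq i \leq k\}$. By construction of the canonical assignment, $v(b_1) \geq v(b_2) \geq \ldots \geq v(b_\tau)$ and $c(p_1) \leq c(p_2) \leq \ldots \leq c(p_\tau)$ (when compared as numbers).

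The core step is then a one-line averaging observation. Because the sequence $v(b_1), \ldots, v(b_\tau)$ is non-increasing, the average of its first $k$ terms is at least the average of all $\tau$ terms, so $\sum_{i=1}^{k} v(b_i) \geq (k/\tau) \sum_{i=1}^{\tau} v(b_i)$. Symmetrically, since $c(p_1), \ldots, c(p_\tau)$ is non-decreasing, $\sum_{i=1}^{k} c(p_i) \leq (k/\tau) \sum_{i=1}^{\tau} c(p_i)$. Subtracting these two inequalities and recognizing that $\sum_{i=1}^{\tau}[v(b_i) - c(p_i)] = \GfT(S_c(P,B))$ yields
\[
\sum_{b \in \tilde{B}} v(b) - \sum_{p \in \tilde{P}} c(p) \geq \frac{k}{\tau} \cdot \GfT(S_c(P,B)) \geq (1 - 11\sqrt[3]{\alpha}) \cdot \GfT(S_c(P,B))\,,
\]
where the final inequality uses the ceiling in the definition of $k$. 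I do not anticipate any real obstacle here; the argument is short and the only subtlety, the degenerate regime $1 - 11\sqrt[3]{\alpha} \leq 0$, was already handled.
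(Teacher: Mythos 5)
Your proposal is correct and takes essentially the same route as the paper: both dispose of the degenerate case $1 - 11\sqrt[3]{\alpha} \leq 0$ first, then use the averaging observation that the top-$k$ slot values (bottom-$k$ user costs) average at least (at most) the global average, combine, and drop the ceiling.
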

\begin{proof}
If $1 - 11\sqrt[3]{\alpha} \leq 0$, then both $\tilde{B}$ and $\tilde{P}$ are empty, and the inequality that we need to prove holds since its left hand side is $0$ and its right hand side is non-positive (recall that $S_c(P, B)$ is an assignment of users from $P$ to slots of $B$ maximizing the gain from trade, and thus, its gain from trade is at least $0$ since $\GfT(\varnothing) = 0$). Thus, we may assume in the rest of the proof that $1 - 11\sqrt[3]{\alpha} > 0$.

Since $\tilde{B}$ contains the $\lceil (1 - 11\sqrt[3]{\alpha})\tau \rceil$ slots with the largest values among the slots of $B_o$, we get:
\[
	\sum_{b \in \tilde{B}} v(b)
	\geq
	\lceil (1 - 11\sqrt[3]{\alpha})\tau \rceil \cdot \frac{\sum_{b \in B_o} v(b)}{\tau}
	\enspace.
\]
Similarly, since $\tilde{P}$ contains the $\lceil (1 - 11\sqrt[3]{\alpha})\tau \rceil$ users with the lowest costs among the users of $P_o$, we get:
\[
	\sum_{p \in \tilde{A}} c(p)
	\leq
	\lceil (1 - 11\sqrt[3]{\alpha})\tau \rceil \cdot \frac{\sum_{c \in P_o} c(p)}{\tau}
	\enspace.
\]
Combining the two inequities gives:
\begin{align*}
	\sum_{b \in \tilde{B}} v(b) - \sum_{p \in \tilde{P}} c(p)
	\geq{} &
	\lceil (1 - 11\sqrt[3]{\alpha})\tau \rceil \cdot \frac{\left(\sum_{b \in B_o} v(b) - \sum_{p \in P_o} c(p)\right)}{\tau}\\
	={} &
	\lceil (1 - 11\sqrt[3]{\alpha})\tau \rceil \cdot \frac{\GfT(S_c(P, B))}{\tau}
	\geq
	(1 - 11\sqrt[3]{\alpha}) \cdot \GfT(S_c(P, B))
	\enspace.
	\qedhere
\end{align*}
\end{proof}

Observation~\ref{obs:tilde_elements} shows that one can prove a competitive ratio for {\TPM} by relating the gain from trade of the assignment it produces to the gain from trade obtained by assigning the users of $\tilde{P}$ to the slots $\tilde{B}$. The following lemma is a key lemma we use to relate the two gains. 

\begin{lemma} \label{lem:event_summary}
There exists an event $\cE_1$ of probability at least $1 - 10e^{-2/\sqrt[3]{\alpha}}$ such that $\cE_1$ implies the following:
\begin{center}
\begin{tabular}{llll}
	(i)  & $\tilde{B} \cap B(A_1) \subseteq \hat{B}$ \hspace{3cm} & (iii) & $|\hat{P} \setminus P(M_L)| \leq |\hat{B}|$\\
	(ii) & $\tilde{P} \cap P(M_1) \subseteq \hat{P}$ \hspace{3cm} & (iv)  & $|\hat{B} \setminus B(A_L)| \leq |\hat{P}|$ \\
	(v)  & \multicolumn{3}{p{11cm}}{$c(p) \leq \ell(P, B) \leq v(b)$ for every user $p \in \hat{P}$ and slot $b \in \hat{B}$, where $\ell(P, B)$ is a value which is independent of the random coins of {\TPM} and obeys $c(p) \leq \ell(P, B) \leq v(b)$ for every $p \in P_o$ and $b \in B_o$.}
\end{tabular}
\end{center}
\end{lemma}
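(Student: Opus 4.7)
The plan is to define $\cE_1$ as the intersection of several Hoeffding-type concentration events. The generic event asserts that a sum $\sum_x \mathbf{1}[x \in X] \cdot f(x)$ (over mediators or advertisers, with $X \in \{M_1, A_1, M_L, A_L\}$) deviates from its expectation by at most $\sqrt[3]{\alpha}\tau$, where $f(x) \leq \alpha\tau$ by the definition of $\alpha$. Since $\sum_x f(x)^2 \leq \alpha\tau \cdot \sum_x f(x) \leq \alpha\tau \cdot \tau = \alpha\tau^2$, Hoeffding's inequality gives a per-event failure probability of at most $2\exp(-2(\sqrt[3]{\alpha}\tau)^2/(\alpha\tau^2)) = 2\exp(-2/\sqrt[3]{\alpha})$ for two-sided bounds and $\exp(-2/\sqrt[3]{\alpha})$ for one-sided bounds.

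I would take $\cE_1$ to be the intersection of four two-sided events --- (a) $|P(M_1) \cap P_o|$ is within $\sqrt[3]{\alpha}\tau$ of $\tau/2$, (b) $|B(A_1) \cap B_o|$ is within $\sqrt[3]{\alpha}\tau$ of $\tau/2$, (c) $|\tilde{P} \cap P(M_1)|$ is within $\sqrt[3]{\alpha}\tau$ of $|\tilde{P}|/2$, and (d) $|\tilde{B} \cap B(A_1)|$ is within $\sqrt[3]{\alpha}\tau$ of $|\tilde{B}|/2$ --- together with two one-sided events (e) $|\hat{P} \cap P(M_L)| \geq 17\sqrt[3]{\alpha}|\hat{P}| - \sqrt[3]{\alpha}\tau$ and $|\hat{B} \cap B(A_L)| \geq 17\sqrt[3]{\alpha}|\hat{B}| - \sqrt[3]{\alpha}\tau$, both derived by Hoeffding conditioned on the $(M_2, A_2)$ randomness (which fixes $\hat{P}, \hat{B}$) and exploiting independence of the $(M_L, A_L)$ Bernoullis from this conditioning. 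A union bound over the six concentration events yields total failure probability at most $4 \cdot 2\exp(-2/\sqrt[3]{\alpha}) + 2 \cdot \exp(-2/\sqrt[3]{\alpha}) = 10\exp(-2/\sqrt[3]{\alpha})$, so $\Pr[\cE_1] \geq 1 - 10\exp(-2/\sqrt[3]{\alpha})$.

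For implication (i), $\tilde{B} \cap B(A_1) \subseteq \hat{B}$ is equivalent to $|\tilde{B} \cap B(A_2)| < k^*_B := \lceil (1 - 4\sqrt[3]{\alpha})|S_c(P(M_2), B(A_2))| \rceil$, which combines the upper bound $|\tilde{B} \cap B(A_2)| \leq |\tilde{B}|/2 + \sqrt[3]{\alpha}\tau$ (from (d)) with the key estimate $|S_c(P(M_2), B(A_2))| \geq \min(|P(M_2) \cap P_o|, |B(A_2) \cap B_o|) \geq \tau/2 - \sqrt[3]{\alpha}\tau$ (from (a), (b)). The lower bound on $|S_c(P(M_2), B(A_2))|$ holds because for any $i$ not exceeding either intersection size, the $i$-th lowest-cost user in $P(M_2)$ has cost at most $c(p_\tau) < v(b_\tau) \leq$ the value of the $i$-th highest-value slot in $B(A_2)$, so those pairs are matched by the canonical assignment. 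Implication (ii) is symmetric. For (iii) and (iv), (ii) combined with (c) gives $|\hat{P}| \geq |\tilde{P} \cap P(M_1)| \geq (1 - 11\sqrt[3]{\alpha})\tau/2 - \sqrt[3]{\alpha}\tau$, while trivially $|\hat{P}| \leq |P(M_1) \cap P_o| \leq \tau/2 + \sqrt[3]{\alpha}\tau$ from (a); the analogous bounds for $|\hat{B}|$ yield $|\hat{P}| - |\hat{B}| = O(\sqrt[3]{\alpha}\tau)$, which is smaller than the reserve in $|\hat{P} \cap P(M_L)|$ ensured by (e) --- the constant $17$ in the sampling rate of $M_L, A_L$ is chosen exactly to guarantee this margin. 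For (v), take $\ell(P, B) = c(p_\tau)$; the deterministic conditions $c(p) \leq c(p_\tau) < v(b_\tau) \leq v(b)$ hold for $p \in P_o, b \in B_o$, and under (a) we have $k^*_P \leq |P(M_2) \cap P_o|$, so $\hat{p} \in P_o$ and $c(p) < c(\hat{p}) \leq c(p_\tau) = \ell$ for $p \in \hat{P}$; symmetrically, $v(b) > v(\hat{b}) \geq v(b_\tau) > \ell$ for $b \in \hat{B}$. The main obstacle is that $c(\hat{p}), v(\hat{b})$, and consequently $|\hat{P}|, |\hat{B}|$, are random, which prevents direct Hoeffding application in (e); this is resolved by conditioning on $(M_2, A_2)$ before invoking Hoeffding on the independent $(M_L, A_L)$ randomness, so that the conditional bound lifts to an unconditional one.
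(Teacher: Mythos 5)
Your approach is substantially correct and closely parallels the paper's, but the organization differs in a way worth noting, and there is a small but genuine gap in the probability accounting that needs closing.

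The paper defines a first-level event $\cE'$ consisting only of your concentration events (a)--(d) (stated in terms of $M_2, A_2$ rather than $M_1, A_1$, which is the same by complementation), proves $\Pr[\cE'] \geq 1 - 8e^{-2/\sqrt[3]{\alpha}}$, and then defines $\cE_1 = \cE' \cap \{|\hat{P}\setminus P(M_L)| \leq |\hat{B}|\} \cap \{|\hat{B}\setminus B(A_L)| \leq |\hat{P}|\}$. Instead of adding two independent Hoeffding events, the paper observes that, after conditioning on the $(M_1,M_2,A_1,A_2)$-partition (which fixes $\hat{P},\hat{B}$), one of the two extra inequalities is \emph{automatic} --- whichever of $|\hat{P}|,|\hat{B}|$ is the smaller needs no bound --- so only one concentration event is charged, at cost $2e^{-2/\sqrt[3]{\alpha}}$ via a two-sided bound. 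Your route charges both (e) and (f) but uses one-sided Hoeffding to halve each, arriving at the same $10e^{-2/\sqrt[3]{\alpha}}$. Both are valid; the paper's trick saves an event, your route avoids needing the observation that one inequality is vacuous.

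The gap is in your union bound over the six events. For events (e) and (f), the one-sided Hoeffding applied conditionally on the partition yields a bound of the form $\exp(-2(\sqrt[3]{\alpha}\tau)^2/\sum_m |\hat{P}\cap P(m)|^2)$, and the denominator is at most $\alpha\tau\cdot|\hat{P}|$. This is $\leq \alpha\tau^2$ only when $|\hat{P}|\leq\tau$, which requires $\hat{P}\subseteq P_o$ --- a consequence of the partition satisfying (a)--(d), not an unconditional fact. (Indeed, $\hat{p}$ is merely the $k^*$-th cheapest user of $P(M_2)$; if cheap users are skewed into $M_1$, $\hat{P}$ can exceed $\tau$ in size.) So for partitions violating (a)--(d) the conditional bound for (e) may be weaker than $e^{-2/\sqrt[3]{\alpha}}$, and the naive sum $4\cdot 2\exp + 2\cdot\exp$ is not justified. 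The repair is a nested union bound: $\Pr[\cE_1^c] \leq \Pr[\text{(a)--(d) fail}] + \Pr[\text{(e) or (f) fails}\mid\text{(a)--(d) hold}]$. Under the conditioning, $\hat{P}\subseteq P_o$ and $\hat{B}\subseteq B_o$ hold, so the Hoeffding parameters are as you want, giving the same total of $10e^{-2/\sqrt[3]{\alpha}}$. This is essentially what the paper does by passing first to $\cE'$; you should make the two-stage conditioning explicit rather than describing it as a flat six-way union bound.

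Two smaller remarks. Your choice $\ell(P,B)=c(p_\tau)$ is fine; the paper uses $v(b_\tau)$, and either value separates $P_o$-costs from $B_o$-values by the tie-breaking convention. Your claimed ``equivalence'' between $\tilde{B}\cap B(A_1)\subseteq\hat{B}$ and $|\tilde{B}\cap B(A_2)|<k^*_B$ is in fact only the direction you need ($\Leftarrow$), since when $\hat{b}\in\tilde{B}$ the inclusion can still hold if all lower-value slots of $\tilde{B}$ happen to land in $B(A_2)$; but as you only use the implication, this is harmless.
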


Before proving Lemma~\ref{lem:event_summary}, let us explain how the competitive ratio of {\TPM} follows from it. Let $\hat{S}$ be the assignment produced by {\TPM}.

\begin{lemma} \label{lem:joined_event}
There exists an event $\cE$ of probability at least $1 - 20e^{-2/\sqrt[3]{\alpha}}$ such that $\cE$ implies:
\[
	\GfT(\hat{S})
	\geq
	\sum_{b \in \tilde{B} \setminus B(A_L)} [v(b) - \ell(P, B)] + \sum_{p \in \tilde{P} \setminus P(M_L)} [\ell(P, B) - c(p)]
	\enspace.
\]
\end{lemma}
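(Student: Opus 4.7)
My plan is to take $\cE$ to be the intersection of the event $\cE_1$ from Lemma~\ref{lem:event_summary} with its symmetric analogue $\cE_2$, obtained by swapping the roles of the two halves $(M_1,A_1)$ and $(M_2,A_2)$ throughout that lemma's statement. Since {\TPM} is defined symmetrically on the two halves and the randomness used to form the partitions is exchangeable, $\cE_2$ enjoys the same $1 - 10e^{-2/\sqrt[3]{\alpha}}$ probability bound as $\cE_1$, and a union bound gives $\Pr[\cE] \geq 1 - 20e^{-2/\sqrt[3]{\alpha}}$.

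Next I would decompose the final assignment as $\hat{S} = \hat{S}_1 \cup \hat{S}_2$ (disjointly), where $\hat{S}_i$ denotes the ordered pairs produced by the side-$i$ execution of Step~5. Because $(M_1, M_2)$ and $(A_1, A_2)$ partition $M$ and $A$, it suffices to prove, for each $i \in \{1,2\}$ under $\cE_i$, the one-sided inequality
\[
\GfT(\hat{S}_i) \geq \sum_{b \in \tilde{B} \cap B(A_i) \setminus B(A_L)} [v(b) - \ell(P,B)] + \sum_{p \in \tilde{P} \cap P(M_i) \setminus P(M_L)} [\ell(P,B) - c(p)],
\]
and then to add the two inequalities; since $(A_1,A_2)$ and $(M_1,M_2)$ are exhaustive partitions, the resulting sums reassemble into the sums appearing in the lemma.

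To prove the one-sided bound on side $1$, I would split the definition of $\GfT$ around the threshold $\ell(P,B)$, writing
\[
\GfT(\hat{S}_1) = \sum_{(p,b) \in \hat{S}_1} [v(b) - \ell(P,B)] + \sum_{(p,b) \in \hat{S}_1} [\ell(P,B) - c(p)].
\]
Property (v) of Lemma~\ref{lem:event_summary} makes every summand on the right non-negative, because each assigned $p$ lies in $\hat{P}$ and each assigned $b$ in $\hat{B}$. It therefore suffices to show that the slots appearing in $\hat{S}_1$ contain $\hat{B}\setminus B(A_L)$ and the users appearing in $\hat{S}_1$ contain $\hat{P}\setminus P(M_L)$. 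Because $\sigma_M$ (respectively $\sigma_A$) defers the low-priority players to the end of the order, Step~5 processes every element of $\hat{P}\setminus P(M_L)$ before any element of $\hat{P}\cap P(M_L)$, and symmetrically for $\hat{B}$; properties (iii) and (iv) then guarantee there are always enough partners in $\hat{B}$ (respectively $\hat{P}$) for the loop to exhaust the non-low-priority side before terminating. Finally, properties (i) and (ii) let me shrink $\hat{B}\setminus B(A_L)$ and $\hat{P}\setminus P(M_L)$ down to $\tilde{B}\cap B(A_1)\setminus B(A_L)$ and $\tilde{P}\cap P(M_1)\setminus P(M_L)$, still using non-negativity of each summand.

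The only nontrivial point is this ordering argument: checking that the greedy matching of Step~5, combined with the cardinality bounds (iii) and (iv), genuinely matches every non-low-priority element of $\hat{P}$ and $\hat{B}$ before the loop halts. Everything else is bookkeeping over the set inclusions already packaged in Lemma~\ref{lem:event_summary}, together with the symmetry between the two halves.
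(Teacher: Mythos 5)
Your proposal is correct and follows essentially the same route as the paper's proof: both take $\cE = \cE_1 \cap \cE_2$ via the union bound, use property~(v) to split $\GfT$ around $\ell(P,B)$ with non-negative summands, use the priority orders $\sigma_M,\sigma_A$ together with properties~(iii)--(iv) to show all non-low-priority elements of $\hat{P}$ and $\hat{B}$ get matched before the loop halts, and then shrink to $\tilde{P}\setminus P(M_L)$ and $\tilde{B}\setminus B(A_L)$ via~(i)--(ii). The only cosmetic difference is that you establish a one-sided inequality for each $\hat{S}_i$ and add, whereas the paper first combines $\cE_1$ and $\cE_2$ into global inclusions for $\hat{S}$ and then applies the $\GfT$ split once; these are equivalent reorganizations of the same argument.
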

\begin{proof}
Assume first that the event $\cE_1$ holds. Lemma~\ref{lem:event_summary} shows that given $\cE_1$ we have $|\hat{P} \setminus P(M_L)| \leq |\hat{B}|$, hence, {\TPM} assigns at least $|\hat{P} \setminus P(M_L)|$ users. Additionally, since {\TPM} assigns all the users of $\hat{P} \setminus P(M_L)$ before it starts assigning users of $\hat{P} \cap P(M_L)$, we get that all the users of $\hat{P} \setminus P(M_L)$ are assigned by $\hat{S}$ given $\cE_1$. On the other hand, Lemma~\ref{lem:event_summary} also shows that given $\cE_1$ all the users of $\tilde{P} \cap P(M_1)$ belong to $\hat{P}$, and thus, the users of $(\tilde{P} \cap P(M_1)) \setminus P(M_L)$ are all assigned by $\hat{S}$. A similar argument shows that the slots of $(\tilde{B} \cap B(A_1)) \setminus B(A_L)$ are all assigned users by $\hat{S}$ given $\cE_1$. Finally, observe that $\cE_1$ also implies that $c(p) \leq \ell(P, B) \leq v(b)$ for every pair $(p, b) \in \hat{S} \cap (P(M_1) \times B(A_1))$.

{\TPM} assigns users of $P(M_2)$ to slots of $B(A_2)$ in an analogous way to the assignment of users of $P(M_1)$ to slots of $B(A_1)$. Thus, by symmetry, there must exist an event $\cE_2$ of probability at least $1 - 10e^{-2/\sqrt[3]{\alpha}}$ (the probability of $\cE_1$) that guarantees the following three things.
\begin{compactitem}
	\item The users of $(\tilde{P} \cap P(M_2)) \setminus P(M_L)$ are all assigned by $\hat{S}$.
	\item The slots of $(\tilde{B} \cap B(A_2)) \setminus B(A_L)$ are all assigned users by $\hat{S}$.
	\item $c(p) \leq \ell(P, B) \leq v(b)$ for every pair $(p, b) \in \hat{S} \cap (P(M_2) \times B(A_2))$.
\end{compactitem}

Let us now define $\cE$ as the event that both $\cE_1$ and $\cE_2$ happen together. By the union bound the probability of $\cE$ is at least $1 - 20e^{-2/\sqrt[3]{\alpha}}$ as promised. Additionally, combining the properties that follow from $\cE_1$ and $\cE_2$ by the above discussion, we get that $\cE$ implies the following properties.
\begin{compactitem}
	\item The users of $\tilde{P} \setminus P(M_L)$ are all assigned by $\hat{S}$.
	\item The slots of $\tilde{B} \setminus B(A_L)$ are all assigned users by $\hat{S}$.
	\item $c(p) \leq \ell(P, B) \leq v(b)$ for every pair $(p, b) \in \hat{S}$.
\end{compactitem}
The last property holds since {\TPM} assigns users of $P(M_1)$ ($P(M_2)$) only to slots of $B(A_1)$ ($B(A_2)$), and thus, $(\hat{S} \cap (P(M_2) \times B(A_2))) \cup (\hat{S} \cap (P(M_2) \times B(A_2))) = \hat{S}$.

In the rest of the proof we assume that $\cE$ happens. Consider an ordered pair $(p, b) \in \hat{S}$. The contribution of $(p, b)$ to $\GfT(\hat{S})$ is:
\[
	v(b) - c(p)
	=
	[v(b) - \ell(P, B)] + [\ell(P, B) - c(p)]
	\enspace.
\]
Notice that the two terms that appear in brackets on the right hand side of the last equation are both positive given $\cE$. This allows us to lower bound the gain from trade of $\hat{S}$ as follows:
\begin{align*}
	\GfT(\hat{S})
	={} &
	\sum_{(p, b) \in \hat{S}} [v(b) - c(p)]
	=
	\sum_{(p, b) \in \hat{S}} \{[v(b) - \ell(P, B)] + [\ell(P, B) - c(p)]\}\\
	\geq{} &
	\sum_{b \in \tilde{B} \setminus B(A_L)} [v(b) - \ell(P, B)] + \sum_{p \in \tilde{P} \setminus P(M_L)} [\ell(P, B) - c(p)]
	\enspace.
	\qedhere
\end{align*}
\end{proof}

\begin{corollary}
{\TPM} is at least $(1 - 28\sqrt[3]{\alpha} - 20e^{-2/\sqrt[3]{\alpha}})$-competitive.
\end{corollary}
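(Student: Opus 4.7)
The plan is to take expectation over the randomness of $A_L$ and $M_L$ in the bound provided by Lemma~\ref{lem:joined_event}; let $R$ denote the random quantity on the right-hand side of that bound. I may assume $17\sqrt[3]{\alpha} < 1$, since otherwise the target competitive ratio is negative and the statement is vacuous. The first step is to lower bound $\mathbb{E}[R]$. Property (v) of Lemma~\ref{lem:event_summary} says that $\ell(P,B)$ is deterministic and satisfies $c(p) \leq \ell(P,B) \leq v(b)$ for every $p \in P_o \supseteq \tilde{P}$ and $b \in B_o \supseteq \tilde{B}$, so every summand of $R$ is pointwise non-negative. Since each advertiser (mediator) enters $A_L$ (resp.\ $M_L$) independently with probability $17\sqrt[3]{\alpha}$, linearity of expectation together with $|\tilde{B}| = |\tilde{P}|$ yields
\[
\mathbb{E}[R] = (1 - 17\sqrt[3]{\alpha}) \cdot \left(\sum_{b \in \tilde{B}} v(b) - \sum_{p \in \tilde{P}} c(p)\right) \geq (1 - 17\sqrt[3]{\alpha})(1 - 11\sqrt[3]{\alpha}) \cdot \GfT(S_c(P,B)),
\]
the last inequality being Observation~\ref{obs:tilde_elements}. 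Dropping the non-negative quadratic cross term gives $\mathbb{E}[R] \geq (1 - 28\sqrt[3]{\alpha}) \GfT(S_c(P,B))$.

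The second step establishes the deterministic bound $R \leq \GfT(S_c(P,B))$. By non-negativity of the summands, $R \leq \sum_{b \in \tilde{B}} v(b) - \sum_{p \in \tilde{P}} c(p)$. Because $\tilde{P}$ and $\tilde{B}$ are the first $\lceil(1 - 11\sqrt[3]{\alpha})\tau\rceil$ positions of $S_c(P,B)$, the complementary sets $P_o \setminus \tilde{P}$ and $B_o \setminus \tilde{B}$ occupy the tail positions and are matched there by $S_c(P,B)$ pair by pair with $v(b) > c(p)$ on each pair. Summing those inequalities yields $\sum_{b \in B_o \setminus \tilde{B}} v(b) \geq \sum_{p \in P_o \setminus \tilde{P}} c(p)$, and subtracting this from the identity $\GfT(S_c(P,B)) = \sum_{b \in B_o} v(b) - \sum_{p \in P_o} c(p)$ delivers the claim.

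Combining the steps, Lemma~\ref{lem:joined_event} gives $\GfT(\hat{S}) \cdot \mathbf{1}_\cE \geq R \cdot \mathbf{1}_\cE$, and $\GfT(\hat{S}) \geq 0$ always, so
\[
\mathbb{E}[\GfT(\hat{S})] \geq \mathbb{E}[R \cdot \mathbf{1}_\cE] = \mathbb{E}[R] - \mathbb{E}[R \cdot \mathbf{1}_{\bar{\cE}}] \geq \mathbb{E}[R] - \Pr[\bar{\cE}] \cdot \GfT(S_c(P,B)),
\]
where the last step uses $R \leq \GfT(S_c(P,B))$ pointwise. Plugging in $\mathbb{E}[R] \geq (1 - 28\sqrt[3]{\alpha}) \GfT(S_c(P,B))$ together with $\Pr[\bar{\cE}] \leq 20 e^{-2/\sqrt[3]{\alpha}}$, and invoking Lemma~\ref{lem:canonical_optimal} to identify $\GfT(S_c(P,B))$ with the optimum, delivers the claimed competitive ratio. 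The main obstacle in the plan is the pointwise bound $R \leq \GfT(S_c(P,B))$; without such worst-case control one cannot absorb the contribution of $\bar{\cE}$ at only $\Pr[\bar{\cE}] \cdot \GfT(S_c(P,B))$, and establishing the bound hinges on the structural observation that $S_c(P,B)$ pairs up $P_o \setminus \tilde{P}$ with $B_o \setminus \tilde{B}$ with each pair still contributing non-negatively.
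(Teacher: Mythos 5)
Your proof is correct and follows essentially the same strategy as the paper's: compute $\mathbb{E}[R]$ by linearity using the independent inclusion probabilities and $|\tilde{P}|=|\tilde{B}|$, invoke Observation~\ref{obs:tilde_elements}, and absorb the bad-event contribution via a deterministic upper bound on $R$ together with $\Pr[\bar{\cE}]\le 20e^{-2/\sqrt[3]{\alpha}}$. The only cosmetic difference is the pointwise bound chosen: you use $R \le \GfT(S_c(P,B))$ (which needs the extra but easy observation that the tail pairs of $S_c(P,B)$ contribute non-negatively), whereas the paper uses the immediate bound $R \le \Val(\varnothing,\varnothing) = \sum_{b\in\tilde B} v(b) - \sum_{p\in\tilde P} c(p)$ and then pays the factor $(1-11\sqrt[3]{\alpha})$ multiplicatively at the end; both choices land on the same final constant.
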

\begin{proof}
The corollary is trivial when $28\sqrt[3]{\alpha} + 20e^{-2/\sqrt[3]{\alpha}} > 1$. Thus, we assume in this proof $28\sqrt[3]{\alpha} + 20e^{-2/\sqrt[3]{\alpha}} \leq 1$. Let $\Val(M_L, A_L)$ denote the expression:
\[
	\sum_{b \in \tilde{B} \setminus B(A_L)} [v(b) - \ell(P, B)] + \sum_{p \in \tilde{P} \setminus P(M_L)} [\ell(P, B) - c(p)]
	\enspace.
\]
Since every slot belongs to $B(A_L)$ with probability $17\sqrt[3]{\alpha}$ and every user belongs to $P(M_L)$ with the same probability, we get:
\begin{align*}
	\bE[\Val(M_L, A_L)]
	={} &
	(1 - 17\sqrt[3]{\alpha}) \cdot \sum_{b \in \tilde{B}} [v(b) - \ell(P, B)] + (1 - 17\sqrt[3]{\alpha}) \cdot \sum_{p \in \tilde{P}} [\ell(P, B) - c(b)]\\
	={} &
	(1 - 17\sqrt[3]{\alpha}) \cdot \Val(\varnothing, \varnothing)
	\enspace.
\end{align*}
Additionally, the definition of $\ell(P, B)$ guarantees that $v(b) - \ell(P, B) \geq 0$ and $\ell(P, B) - c(p) \geq 0$ for every $b \in \tilde{B} \subseteq B_o$ and $p \in \tilde{P} \subseteq P_o$. Thus, $\Val(M_L, A_L) \leq \Val(\varnothing, \varnothing)$ for every two sets $M_L \subseteq M$ and $A_L \subseteq A$. Using Lemma~\ref{lem:joined_event} and the observation that {\TPM} always produces assignments of non-negative gain from trade, we now get:
\begin{align} \label{eq:gain_from_trade_hat_S}
	\bE[\GfT(\hat{S})]&
	=
	\Pr[\cE] \cdot \bE[\GfT(\hat{S}) \mid \cE] + \Pr[\neg \cE] \cdot \bE[\GfT(\hat{S}) \mid \neg \cE]\nonumber\\
	\geq{} &
	\Pr[\cE] \cdot \bE[\Val(M_L, A_L) \mid \cE]
	=
	\bE[\Val(M_L, A_L)] - \Pr[\neg \cE] \cdot \bE[\Val(M_L, A_L) \mid \neg \cE]\nonumber\\
	\geq{} &
	(1 - 17\sqrt[3]{\alpha}) \cdot \Val(\varnothing, \varnothing) - \Pr[\neg \cE] \cdot \Val(\varnothing, \varnothing)
	=
	(1 - 17\sqrt[3]{\alpha} - \Pr[\neg \cE]) \cdot \Val(\varnothing, \varnothing)
	\enspace.
\end{align}

Recall that $\Pr[\neg \cE] \leq 20e^{-2/\sqrt[3]{\alpha}}$ by Lemma~\ref{lem:joined_event}. Additionally, Observation~\ref{obs:tilde_elements} and the fact that $|\tilde{P}| = |\tilde{B}|$ by definition imply together:
\begin{align*}
	\Val(\varnothing, \varnothing)
	={} &
	\sum_{b \in \tilde{B}} [v(b) - \ell(P, B)] + \sum_{p \in \tilde{P}} [\ell(P, B) - c(p)]\\
	={} &
	\sum_{b \in \tilde{B}} v(b) - \sum_{p \in \tilde{P}} c(p)
	\geq
	(1 - 11\sqrt[3]{\alpha}) \cdot \GfT(S_c(P, A))
	\enspace.
\end{align*}
Plugging the last observations into Inequality~\ref{eq:gain_from_trade_hat_S} gives:
\begin{align*}
	\bE[\GfT(\hat{S})]
	\geq{} &
	(1 - 17\sqrt[3]{\alpha} - \Pr[\neg \cE]) \cdot \Val(\varnothing, \varnothing)\\
	\geq{} &
	(1 - 17\sqrt[3]{\alpha} - 20e^{-2/\sqrt[3]{\alpha}}) \cdot (1 - 11\sqrt[3]{\alpha}) \cdot \GfT(S_c(P, A))\\
	\geq{} &
	(1 - 28\sqrt[3]{\alpha} - 20e^{-2/\sqrt[3]{\alpha}}) \cdot \GfT(S_c(P, B))
	\enspace.
\end{align*}
The corollary now follows by recalling that $S_c(P, B)$ is the assignment of users from $P$ to slots of $B$ which maximizes the gain from trade.
\end{proof}

It remains now to prove Lemma~\ref{lem:event_summary}. Let us begin with the following technical lemma. 
\begin{lemma} \label{lem:length_concentration}
Given a subset $B' \subseteq B_o$ and a probability $q \in [0, 1]$, let $B'[q]$ be a random subset of $B'$ constructed as follows: for every advertiser $a \in A$, independently, with probability $q$ the slots of advertiser $a$ that belong to $B'$ appear also in $B'[q]$. Then, for every $\beta \in (0, 1]$:
\[
	\Pr[||B'[q]| - q \cdot |B'|| \geq \beta \tau]
	\leq
	2e^{-2\beta^2/\alpha}
	\enspace.
\]
Similarly, given a subset $P' \subseteq P_o$ and a probability $q \in [0, 1]$, let $P'[q]$ be a random subset of $P'$ constructed as follows: for every mediator $m \in M$, independently, with probability $q$ the users of mediator $m$ that belong to $P'$ appear also in $P'[q]$. Then, for every $\beta \in (0, 1]$:
\[
	\Pr[||P'[q]| - q \cdot |P'|| \geq \beta \tau]
	\leq
	2e^{-2\beta^2/\alpha}
	\enspace.
\]
\end{lemma}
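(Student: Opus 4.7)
My plan is to prove both halves of the lemma by the same argument; I will write up the slot case and note that the user case is analogous. Let me set up the notation first. For every advertiser $a \in A$, let $X_a = |B' \cap B(a)|$ be the number of slots of $a$ that lie in $B'$, and let $Y_a \in \{0,1\}$ be the independent Bernoulli indicator, with $\Pr[Y_a = 1] = q$, of whether advertiser $a$ is selected into $B'[q]$. Then by construction
\[
	|B'[q]| = \sum_{a \in A} X_a Y_a,
	\qquad
	\bE[|B'[q]|] = q \sum_{a \in A} X_a = q \cdot |B'|
	\enspace.
\]

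The plan is then to apply Hoeffding's inequality to the sum $\sum_a X_a Y_a$ of independent random variables, each taking values in the interval $[0, X_a]$. This gives
\[
	\Pr\bigl[\,\bigl||B'[q]| - q|B'|\bigr| \geq \beta \tau\,\bigr]
	\leq
	2 \exp\!\left(-\frac{2\beta^2 \tau^2}{\sum_{a \in A} X_a^2}\right)
	\enspace.
\]
So the remaining task is to show $\sum_a X_a^2 \leq \alpha \tau^2$. For this I will use two elementary bounds: first, $X_a \leq u(a) \leq \alpha \tau$ by the hypothesis on $\alpha$, so $\max_a X_a \leq \alpha \tau$; second, $\sum_a X_a = |B'| \leq |B_o| = \tau$ since $B' \subseteq B_o$ and every slot of $B_o$ is assigned by $S_c(P, B)$. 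Combining them, $\sum_a X_a^2 \leq (\max_a X_a)(\sum_a X_a) \leq \alpha \tau \cdot \tau = \alpha \tau^2$, which gives the claimed bound $2e^{-2\beta^2/\alpha}$.

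The user version is symmetric: define $X_m = |P' \cap P(m)|$ and let $Y_m$ be the independent Bernoulli indicator that mediator $m$ is selected into $P'[q]$. Then $|P'[q]| = \sum_{m \in M} X_m Y_m$ with expectation $q|P'|$, and the same Hoeffding application works because $X_m \leq |P(m)| \leq \alpha \tau$ and $\sum_m X_m = |P'| \leq |P_o| = \tau$.

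I do not anticipate a real obstacle here; the only subtle point is recognising that the correct way to bound $\sum X_a^2$ is via $(\max X_a)(\sum X_a)$ rather than via crude individual bounds, since using $X_a^2 \leq \alpha^2 \tau^2$ per advertiser and summing over up to $|A|$ advertisers would be far too lossy. The $\alpha$ in the exponent comes precisely from the interaction between the per-advertiser capacity bound and the global bound $|B'| \leq \tau$, which is exactly the intuition that no single player dominates the optimal trade.
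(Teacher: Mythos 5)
Your proof is correct and takes essentially the same approach as the paper: both apply Hoeffding's inequality to $|B'[q]| = \sum_a |B' \cap B(a)| \cdot \mathbf{1}[a \text{ selected}]$ and bound $\sum_a |B' \cap B(a)|^2 \leq (\max_a |B' \cap B(a)|) \cdot \sum_a |B' \cap B(a)| \leq \alpha\tau \cdot \tau$. The only cosmetic difference is that the paper also explicitly dispatches the degenerate case $B' = \varnothing$ (where Hoeffding's denominator vanishes), a triviality you could add for completeness.
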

\begin{proof}
We prove the first inequality; the second inequality is analogous. First, observe that the lemma is trivial when $B' = \varnothing$ since $B' = \varnothing$ implies $||B'[q]| - q \cdot |B'|| = 0 < \beta \tau$. Thus, we may assume in the rest of the proof $B' \neq \varnothing$. For every advertiser $a \in A$, let $X_a$ be an indicator for the event that slots of $a$ appear in $B'[q]$. Then:
\[
	|B'[q]|
	=
	\sum_{a \in A} X_a \cdot |B' \cap B(a)|
	\enspace.
\]
The definition of $\alpha$ implies $|B(a)| \leq \alpha \tau$ for every advertiser $a \in A$, and thus, $0 \leq |B' \cap B(a)| \leq \alpha \tau$. Hence, by Hoeffding's inequality:
\begin{align*}
	\Pr[||B'[q]| - q \cdot |B'|| \geq \beta \tau]
	={} &
	\Pr[||B'[q]| - \bE[|B'[q]|]| \geq \beta \tau]
	\leq
	2e^{-\frac{2(\beta \tau)^2}{\sum_{a \in A} |B' \cap B(a)|^2}}\\
	\leq{} &
	2e^{-\frac{2(\beta \tau)^2}{\alpha \tau \cdot \sum_{a \in A} |B' \cap B(a)|}}
	=
	2e^{-\frac{2\beta^2 \tau}{\alpha \cdot |B'|}}
	\leq
	2e^{-\frac{2\beta^2 \tau}{\alpha \cdot |B_o|}}
	=
	2e^{-\frac{2\beta^2}{\alpha}}
	\enspace.
	\qedhere
\end{align*}
\end{proof}

\hypertargettop{event:E_prime}Let $\cE'$ be the event that the following inequalities are all true (at the same time):
\begin{center}
\begin{tabular}{llll}
	(i)  & $||B_o \cap B(A_2)| - |B_o|/2| \leq \sqrt[3]{\alpha} \cdot \tau$ \hspace{1cm} & (iii) & $||\tilde{B} \cap B(A_2)| - |\tilde{B}|/2| \leq \sqrt[3]{\alpha} \cdot \tau$\\
	(ii) & $||P_o \cap P(M_2)| - |P_o|/2| \leq \sqrt[3]{\alpha} \cdot \tau$              & (iv)  & $||\tilde{P} \cap P(M_2)| - |\tilde{P}|/2| \leq \sqrt[3]{\alpha} \cdot \tau$
\end{tabular}
\end{center}

\begin{observation} \label{obs:E_prime_probability}
$\Pr[\cE'] \geq 1 - 8e^{-2/\sqrt[3]{\alpha}}$.
\end{observation}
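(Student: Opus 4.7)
The plan is to derive Observation~\ref{obs:E_prime_probability} by applying Lemma~\ref{lem:length_concentration} four times (once per inequality defining $\cE'$) and then taking a union bound. First, I would verify that each of the four random variables appearing in the definition of $\cE'$ fits the template of Lemma~\ref{lem:length_concentration}. Since Step~3 of {\TPM} includes each advertiser of $A$ in $A_2$ independently with probability $1/2$, the set $B(A_2)$ is exactly the random set obtained by taking each advertiser's slots jointly with probability $1/2$. Therefore, for any $B' \subseteq B_o$, the intersection $B' \cap B(A_2)$ has the distribution of $B'[1/2]$ in the notation of Lemma~\ref{lem:length_concentration}. Applying this with $B' = B_o$ handles inequality (i), and with $B' = \tilde{B}$ (which is a subset of $B_o$ by the definition of $\tilde{B}$) handles inequality (iii). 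The analogous observation for mediators, combined with $P' = P_o$ and $P' = \tilde{P} \subseteq P_o$, handles inequalities (ii) and (iv).

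Next, I would instantiate the concentration bound with $q = 1/2$ and $\beta = \sqrt[3]{\alpha}$. Note that $\beta \in (0, 1]$ as required, since $\alpha \leq 1$, and that $q \cdot |B'|$ (respectively $q \cdot |P'|$) matches the target quantity $|B'|/2$ (respectively $|P'|/2$) appearing in each inequality defining $\cE'$. The exponent becomes
\[
	\frac{2\beta^2}{\alpha} = \frac{2\alpha^{2/3}}{\alpha} = \frac{2}{\sqrt[3]{\alpha}},
\]
so each of the four applications of Lemma~\ref{lem:length_concentration} gives that the corresponding inequality fails with probability at most $2e^{-2/\sqrt[3]{\alpha}}$.

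Finally, I would take a union bound over the four failure events to conclude that the probability that at least one of the four inequalities of $\cE'$ is violated is at most $8e^{-2/\sqrt[3]{\alpha}}$, yielding $\Pr[\cE'] \geq 1 - 8e^{-2/\sqrt[3]{\alpha}}$, as desired. There is no real obstacle here: the main (and only) thing to be careful about is matching the four quantities in $\cE'$ to the template of Lemma~\ref{lem:length_concentration} with the correct choice of $B'$ or $P'$, and verifying that the random partition of advertisers (mediators) in Step~3 of {\TPM} is precisely the sampling process assumed by that lemma. Once this alignment is in place, the rest is a direct plug-in of the stated bound and a union bound.
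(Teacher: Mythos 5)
Your proposal is correct and follows essentially the same approach as the paper's own proof: identify that each of the four quantities in $\cE'$ has the distribution required by Lemma~\ref{lem:length_concentration} (with $\tilde{B} \subseteq B_o$ and $\tilde{P} \subseteq P_o$ justifying the subset hypothesis), apply the concentration bound with $q = 1/2$ and $\beta = \sqrt[3]{\alpha}$, and finish with a union bound over the four failure events.
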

\begin{proof}
Observe that $B_o \cap B(A_2)$, $\tilde{B} \cap B(A_2)$, $P_o \cap P(M_2)$ and $\tilde{P} \cap P(M_2)$ have the same distributions as $B_o[1/2]$, $\tilde{B}[1/2]$, $P_o[1/2]$ and $\tilde{P}[1/2]$, respectively. Moreover, by definition $\tilde{B} \subseteq B_o$ and $\tilde{P} \subseteq P_o$. Hence, by Lemma~\ref{lem:length_concentration}, each one of the four inequalities defining $\cE'$ holds with probability at least $	1 - 2e^{-2/\sqrt[3]{\alpha}}$. The observation now follows by the union bound.
\end{proof}

Next, we need the following useful observation.

\begin{observation} \label{obs:length_characterization}
It always holds that:
\[
	\min\{|P_o \cap P(M_2)|, |B_o \cap B(A_2)|\}
	\leq
	|S_c(P(M_2), B(A_2))|
	\leq
	\max \{|P_o \cap P(M_2)|, |B_o \cap B(A_2)|\}
	\enspace.
\]
\end{observation}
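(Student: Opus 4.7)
The plan is to compare, at each position $k$, the user and slot appearing in the restricted sorted orders used to build $S_c(P(M_2), B(A_2))$ against the corresponding positions in the full canonical assignment $S_c(P, B)$. Writing $p_k^{(2)}$ and $b_k^{(2)}$ for the $k$-th user (in increasing cost order) of $P(M_2)$ and the $k$-th slot (in decreasing value order) of $B(A_2)$, the size $|S_c(P(M_2), B(A_2))|$ equals the largest $k$ with $c(p_k^{(2)}) < v(b_k^{(2)})$. The key preliminary observation I would establish is that since $P_o$ consists of the $\tau$ lowest-cost users of $P$ and $B_o$ of the $\tau$ highest-value slots of $B$, the set $P_o \cap P(M_2)$ is exactly the prefix of the $|P_o \cap P(M_2)|$ lowest-cost users of $P(M_2)$, and symmetrically $B_o \cap B(A_2)$ is the prefix of the highest-value slots of $B(A_2)$.

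For the upper bound, I would assume for contradiction that $k := |S_c(P(M_2), B(A_2))|$ strictly exceeds both $|P_o \cap P(M_2)|$ and $|B_o \cap B(A_2)|$. By the prefix observation, neither $p_k^{(2)}$ lies in $P_o$ nor does $b_k^{(2)}$ lie in $B_o$, so each appears at a position strictly greater than $\tau$ in the full sorted orders of $P$ and $B$. Letting $p_{\tau+1}$ and $b_{\tau+1}$ denote the user and slot at position $\tau+1$ of $S_c(P, B)$, the definition of $\tau$ gives $c(p_{\tau+1}) \geq v(b_{\tau+1})$, and chaining the sort-order comparisons yields
\[
c(p_k^{(2)}) \geq c(p_{\tau+1}) \geq v(b_{\tau+1}) \geq v(b_k^{(2)}),
\]
contradicting $c(p_k^{(2)}) < v(b_k^{(2)})$.

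For the lower bound, set $n = \min\{|P_o \cap P(M_2)|, |B_o \cap B(A_2)|\}$. By the prefix observation, $p_n^{(2)} \in P_o$ and $b_n^{(2)} \in B_o$, so both appear at positions at most $\tau$ in the full sorted orders of $P$ and $B$. Letting $p_\tau$ and $b_\tau$ denote the user and slot at position $\tau$ of $S_c(P, B)$, which are matched by the canonical assignment so that $c(p_\tau) < v(b_\tau)$, the sort order yields
\[
c(p_n^{(2)}) \leq c(p_\tau) < v(b_\tau) \leq v(b_n^{(2)}),
\]
so position $n$ of the restricted canonical assignment is also matched, whence $|S_c(P(M_2), B(A_2))| \geq n$.

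The main subtlety I expect lies in aligning the strict versus non-strict inequalities with the tie-breaking rule from Section~\ref{sec:comparison}: that rule yields a strict total order whenever distinct users (or distinct slots) are compared, which is what makes the inequality $c(p_{\tau+1}) \geq v(b_{\tau+1})$ in the upper-bound chain and the strict middle inequality in the lower-bound chain go through without degenerate ties. A minor secondary obstacle is handling the degenerate cases when $|P_o \cap P(M_2)|$, $|B_o \cap B(A_2)|$, or $|S_c(P(M_2), B(A_2))|$ vanishes (or when $\tau = 0$), but in each such case the claimed inequality is immediate.
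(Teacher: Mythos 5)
Your proof is correct and takes essentially the same approach as the paper: you establish the same two key facts (that $P_o \cap P(M_2)$ and $B_o \cap B(A_2)$ are prefixes of the restricted sorted orders, and that $c(p) < v(b)$ for any $p \in P_o$, $b \in B_o$), then derive the lower bound by showing the $\min$-th pair is matched and the upper bound by contradiction at a position past $\max$. The only cosmetic differences are that the paper checks every position up to $\min$ rather than just the $\min$-th one, and picks the contradiction index as $\max + 1$ rather than $|S_c(P(M_2), B(A_2))|$, but these are interchangeable given the prefix structure of the canonical assignment.
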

\begin{proof}
Let $p_\tau$ and $b_\tau$ be the user and slot at location $\tau$ of $S_c(P, B)$, respectively. The definition of a canonical assignment guarantees $c(p_{\tau}) < v(b_{\tau})$. Additionally, the slots of $B_o$ all appear in locations $1$ to $\tau$ of $S_c(P, B)$, and thus, they all have values at least as large as $v(b_\tau)$. Similarly, the users of $P_o$ all have costs at most as large as $c(p_\tau)$. Combining these observations, we get: $c(p) \leq c(p_\tau) < v(b_\tau) \leq v(b)$ for every $p \in P_o$ and $b \in B_o$.

The slots at locations $1$ to $|B_o \cap B(A_2)|$ of $S_c(P(M_2), B(A_2))$ all belong to $B_o$ since $B_o$ contains the $\tau$ slots with the largest values. Similarly, the users at locations $1$ to $|P_o \cap P(M_2)|$ belong to $P_o$. Combining both observations, we get that for every location $1 \leq i \leq \min\{|P_o \cap P(M_2)|, |B_o \cap B(A_2)|\}$, the user $p'_i$ at location $i$ of $S_c(P(M_2), B(A_2))$ and the slot $b'_i$ at this location belong to $P_o$ and $B_o$, respectively, and thus, $c(p'_i) < v(b'_i)$. Hence, by the definition a canonical assignment, the pair $(p'_i, b'_i)$ belongs to $S_c(P(M_2), B(A_2))$ for every $1 \leq i \leq \min\{|P_o \cap P(M_2)|, |B_o \cap B(A_2)|\}$; which completes the proof of the first inequality we need to prove.

Assume towards a contradiction that the second inequality we need to prove is wrong. In other words, we assume $|S_c(P(M_2), B(A_2))| > \max \{|P_o \cap P(M_2)|, |B_o \cap B(A_2)|\}$. Let $j = \max \{|P_o \cap P(M_2)|, |B_o \cap B(A_2)|\} + 1$, and let $p'_j$ and $b'_j$ be the user and slot at location $j$ of $S_c(P(M_2), B(A_2))$, respectively. Our assumption implies that $(p'_j , b'_j)$ belongs to $S_c(P(M_2), B(A_2))$, and thus, $c(p'_j) < v(b'_j)$. On the other hand, only the users at locations $1$ to $|P_o \cap P(M_2)|$ of $S_c(P(M_2), B(A_2))$ belong to $P_o$, hence, $p'_j$ does not belong to $P_o$. The user with the lowest cost that does not belong to $P_o$ is the user $p_{\tau + 1}$ at location $\tau + 1$ of $S_c(P, B)$. Thus, we get: $c(p'_j) \geq c(p_{\tau + 1})$. Similarly, we can also get $v(b'_j) \leq v(b_{\tau + 1})$, where $b_{\tau + 1}$ is the slot at location $\tau + 1$ of $S_c(P, B)$. Combining the above inequalities gives:
\[
	c(p_{\tau + 1})
	\leq
	c(p'_j)
	<
	v(p'_j)
	\leq
	v(b_{\tau + 1})
	\enspace,
\]
which contradicts the fact that $p_{\tau + 1}$ is not assigned to $b_{\tau + 1}$ by the canonical assignment $S_c(P, B)$.
\end{proof}

The next few claims use the last observation to prove a few properties that hold given $\cE$'. 

\begin{lemma} \label{lem:upper_inclusion}
The event \hyperlink{event:E_prime}{$\cE'$} implies: $\hat{P} \subseteq P_o$ and $\hat{B} \subseteq B_o$.
\end{lemma}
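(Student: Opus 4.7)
The plan is to prove $\hat{P} \subseteq P_o$; the inclusion $\hat{B} \subseteq B_o$ then follows by an entirely symmetric argument with users and costs interchanged for slots and values. If $\hat{p}$ is the dummy user then $\hat{P} = \varnothing$ by definition and the inclusion is immediate, so I assume henceforth that $\hat{p}$ is a genuine user located at position $\lceil (1 - 4\sqrt[3]{\alpha}) \cdot |S_c(P(M_2), B(A_2))| \rceil$ of $S_c(P(M_2), B(A_2))$.

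The core step is to show that this location is at most $|P_o \cap P(M_2)|$. For this I would combine Observation~\ref{obs:length_characterization}, which gives $|S_c(P(M_2), B(A_2))| \leq \max\{|P_o \cap P(M_2)|, |B_o \cap B(A_2)|\}$, with inequalities (i) and (ii) of $\cE'$. Together these yield $|S_c(P(M_2), B(A_2))| \leq \tau/2 + \sqrt[3]{\alpha}\tau$, while (ii) also supplies the lower bound $|P_o \cap P(M_2)| \geq \tau/2 - \sqrt[3]{\alpha}\tau$. Multiplying the upper bound by $1 - 4\sqrt[3]{\alpha}$ and expanding produces a negative cross term of $-4\alpha^{2/3}\tau$, giving a slack of at least $4\alpha^{2/3}\tau$ between $(1-4\sqrt[3]{\alpha})\cdot|S_c(P(M_2), B(A_2))|$ and the lower bound on $|P_o \cap P(M_2)|$. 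This slack comfortably absorbs the ceiling-induced error of $1$, since the standing assumption $\alpha \geq \tau^{-1}$ together with $\tau \geq 1$ yields $\alpha^{2/3}\tau \geq \tau^{1/3} \geq 1$.

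The next observation is that the first $|P_o \cap P(M_2)|$ users of $S_c(P(M_2), B(A_2))$ are precisely the members of $P_o \cap P(M_2)$: within this canonical assignment the users of $P(M_2)$ are listed in increasing cost order, and every user of $P_o \cap P(M_2)$ has cost at most $c(p_\tau)$ while every user of $P(M_2) \setminus P_o$ has cost strictly greater than $c(p_\tau)$, where $p_\tau$ is the user at location $\tau$ of $S_c(P, B)$. Combined with the location bound above, this places $\hat{p}$ itself in $P_o \cap P(M_2)$, and in particular $c(\hat{p}) \leq c(p_\tau)$.

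To conclude, take any $p \in \hat{P}$. By definition $c(p) < c(\hat{p}) \leq c(p_\tau)$, so by the tie-breaking rule of Section~\ref{sec:comparison} (which renders all costs distinct) the user $p$ appears strictly before $p_\tau$ in the increasing cost order on $P$, hence at some location of $S_c(P, B)$ at most $\tau - 1$, putting her in $P_o$. I do not anticipate any genuine obstacle; the only subtle point is controlling the ceiling against the constant slack of $4\sqrt[3]{\alpha}$, which works precisely because of the lower bound on $\alpha$.
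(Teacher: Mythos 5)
Your proof is correct and follows essentially the same path as the paper's: bound the location of $\hat{p}$ by $|P_o \cap P(M_2)|$ via Observation~\ref{obs:length_characterization} and the $\cE'$ inequalities, conclude $\hat{p} \in P_o \cap P(M_2)$, and then observe that every $p \in \hat{P}$ has $c(p) < c(\hat{p})$ and hence lies in $P_o$. The one small inefficiency is the ceiling: you manufacture a slack of $4\alpha^{2/3}\tau \geq 1$ to absorb it, whereas the paper simply notes that $|P_o \cap P(M_2)|$ is an integer, so the bound $|P_o \cap P(M_2)| \geq (1-4\sqrt[3]{\alpha})\cdot|S_c(P(M_2),B(A_2))|$ already implies the same bound against the ceiling.
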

\begin{proof}
We prove the first inclusion. The other inclusion is analogous. Observation~\ref{obs:length_characterization} and the definition of $\cE'$ imply:
\begin{align*}
	|S_c(P(M_2), B(A_2))|
	\leq{} &
	\max \{|P_o \cap P(M_2)|, |B_o \cap B(A_2)|\}\\
	\leq{} &
	\max\{|P_o|/2 + \sqrt[3]{\alpha} \cdot \tau, |B_o|/2  + \sqrt[3]{\alpha} \cdot \tau\}
	=
	(1/2 + \sqrt[3]{\alpha})\tau
	\enspace.
\end{align*}
Using the definition of $\cE'$ again gives:
\begin{align*}
	|P_o \cap P(M_2)|
	\geq{} &
	|P_o|/2 - \sqrt[3]{\alpha} \cdot \tau
	=
	(1/2 - \sqrt[3]{\alpha})\tau\\
	\geq{} &
	(1 - 4\sqrt[3]{\alpha}) \cdot (1/2 + \sqrt[3]{\alpha})\tau
	\geq
	(1 - 4\sqrt[3]{\alpha}) \cdot |S_c(P(M_2), B(A_2))|
	\enspace,
\end{align*}
which implies, since $|P_o \cap P(M_2)|$ is integral,
\begin{equation} \label{eq:threshold_in_original}
	|P_o \cap P(M_2)|
	\geq
	\lceil (1 - 4\sqrt[3]{\alpha}) \cdot |S_c(P(M_2), B(A_2))| \rceil
	\enspace.
\end{equation}

If $\hat{p}$ is a dummy user then $\hat{P}$ is empty, which makes the claim $\hat{P} \subseteq P_o$ trivial. Thus, we may assume that $\hat{p}$ is the user at location $\lceil (1 - 4\sqrt[3]{\alpha}) \cdot |S_c(P(M_2), B(A_2))| \rceil$ of the canonical assignment $S_c(P(M_2), B(A_2))$. Hence, Inequality~\eqref{eq:threshold_in_original} and the observation that the users of $P_o \cap P(M_2)$ are the users with the lowest costs in $P(M_2)$ imply together that $\hat{p}$ belongs to the set $P_o \cap P(M_2) \subseteq P_o$. 
On the other hand, $P_o$ contains the $\tau$ users with the lowest costs. Hence, every user with a cost lower than $\hat{p}$ must be in $P_o$ since $\hat{p}$ is in $P_o$. The lemma now follows by observing that the definition of $\hat{P}$ implies $c(p) < c(\hat{p})$ for every user $p \in \hat{P}$. 
\end{proof}

\begin{corollary}\label{cor:middle_value}
There exists a value $\ell(P, B)$ independent of the random coins of {\TPM} such that:
\begin{compactitem}
	\item $c(p) \leq \ell(P, B) \leq v(b)$ for every user $p \in P_o$ and slot $b \in B_o$
	\item Whenever the event \hyperlink{event:E_prime}{$\cE'$} occurs, $c(p) \leq \ell(P, B) \leq v(b)$ for every user $p \in \hat{P}$ and slot $b \in \hat{B}$.
\end{compactitem}
\end{corollary}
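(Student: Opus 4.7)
My plan is to define $\ell(P,B)$ explicitly using the gap at location $\tau$ of the canonical assignment $S_c(P,B)$, and then reduce the second bullet to the first bullet via Lemma~\ref{lem:upper_inclusion}.

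First I would revisit the argument already carried out inside the proof of Observation~\ref{obs:length_characterization}: letting $p_\tau$ and $b_\tau$ be the user and slot at location $\tau$ of $S_c(P,B)$, we have $c(p_\tau) < v(b_\tau)$ since $(p_\tau,b_\tau)$ is in the canonical assignment, and moreover $c(p) \leq c(p_\tau)$ for every $p \in P_o$ (because $P_o$ consists of the $\tau$ users with the lowest costs) and $v(b) \geq v(b_\tau)$ for every $b \in B_o$ (because $B_o$ consists of the $\tau$ slots with the highest values). Thus, any number in the interval $[c(p_\tau),v(b_\tau)]$ works for the first bullet; the natural choice is $\ell(P,B) = c(p_\tau)$. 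Crucially, $p_\tau$ and $b_\tau$ are determined solely by the reported costs and values and the deterministic tie-breaking rule of Section~\ref{sec:comparison}, so $\ell(P,B)$ is independent of the random coins of {\TPM}.

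For the second bullet, I would condition on the event $\cE'$. By Lemma~\ref{lem:upper_inclusion}, the event $\cE'$ guarantees the inclusions $\hat P \subseteq P_o$ and $\hat B \subseteq B_o$. Combining this with the first bullet, for every $p \in \hat P \subseteq P_o$ we get $c(p) \leq \ell(P,B)$, and for every $b \in \hat B \subseteq B_o$ we get $\ell(P,B) \leq v(b)$, which is exactly what the second bullet asks for.

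I do not expect any real obstacle here: the only thing one must check carefully is that $\ell(P,B)$ is defined from quantities (the order of users by cost, the order of slots by value, and the index $\tau$) that do not depend on the random partitions $(M_1,M_2)$, $(A_1,A_2)$ or the low-priority sets $M_L, A_L$ drawn by {\TPM}. Once that independence is observed, the corollary follows in a couple of lines from the observation embedded in the proof of Observation~\ref{obs:length_characterization} together with Lemma~\ref{lem:upper_inclusion}.
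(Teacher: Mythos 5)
Your argument is correct and follows essentially the same route as the paper: both define $\ell(P,B)$ as a threshold sitting between the cost and the value at location $\tau$ of $S_c(P,B)$, note that this threshold depends only on the reports and the deterministic tie-breaking rule (not on the random coins), and then derive the second bullet from the first via the inclusions $\hat P \subseteq P_o$, $\hat B \subseteq B_o$ of Lemma~\ref{lem:upper_inclusion}. The only (inconsequential) difference is that you take $\ell(P,B) = c(p_\tau)$ whereas the paper takes $\ell(P,B) = v(b_\tau)$; both lie in the interval $[c(p_\tau), v(b_\tau)]$ and work identically.
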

\begin{proof}
Let $\ell(P, B)$ be the value of the slot at location $\tau$ of the canonical assignment $S_c(P, B)$. Clearly, $\ell(P, B)$ is independent of the random coins of {\TPM}, as required. Additionally, for every slot $b \in B_o$ it holds that $v(b) \geq \ell(P, B)$ since $b$ must be located at some location of $S_c(P, B)$ between $1$ and $\tau$. On the other hand, let $p_\tau$ be the user at location $\tau$ of $S_c(P, B)$. Since the size of $S_c(P, B)$ is $\tau$, $p_\tau$ must be assigned to the slot at location $\tau$ of $S_c(P, B)$, which implies $c(p_{\tau}) \leq \ell(P, B)$.\footnote{In fact, we even have $c(p_{\tau}) < \ell(P, B)$ since the tie-breaking rule defined in Section~\ref{sec:comparison} guarantees that the value of a slot is never equal to the cost of a user.} Moreover, for every user $p \in P_o$ it holds that $c(p) \leq c(p_\tau) \leq \ell(P, B)$ since $p$ must be located at some location of $S_c(P, B)$ between $1$ and $\tau$.

The lemma now follows since Lemma~\ref{lem:upper_inclusion} shows that the event $\cE'$ implies that every user $p \in \hat{P}$ belongs also to $P_o$, and every slot $b \in \hat{B}$ belongs also to $B_o$.
\end{proof}

\begin{lemma} \label{lem:lower_inclusion}
The event \hyperlink{event:E_prime}{$\cE'$} implies $\tilde{P} \cap P(M_1) \subseteq \hat{P}$ and $\tilde{B} \cap B(A_1) \subseteq \hat{B}$.
\end{lemma}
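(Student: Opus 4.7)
The plan is to handle the two inclusions symmetrically and prove only $\tilde{P} \cap P(M_1) \subseteq \hat{P}$; the inclusion $\tilde{B} \cap B(A_1) \subseteq \hat{B}$ then follows by swapping users with slots, mediators with advertisers, costs with values, and invoking inequalities (i) and (iii) of $\cE'$ in place of (ii) and (iv). I can dispose of the degenerate case where $\tilde{P} = \varnothing$ immediately (this happens exactly when $1 - 11\sqrt[3]{\alpha} \leq 0$); outside this case I also get $1 - 4\sqrt[3]{\alpha} > 0$, so $\hat{p}$ is a genuine (non-dummy) user and the argument below makes sense.

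Fix an arbitrary $p \in \tilde{P} \cap P(M_1)$. By definition of $\hat{P}$ it suffices to show $c(p) < c(\hat{p})$. Since $p \notin P(M_2)$ and the tie-breaking rule of Section~\ref{sec:comparison} makes all user costs distinct, this is equivalent to showing that the number of $P(M_2)$-users with cost at most $c(p)$ is strictly smaller than the location $\lceil (1 - 4\sqrt[3]{\alpha})|S_c(P(M_2), B(A_2))| \rceil$ of $\hat{p}$ in $S_c(P(M_2), B(A_2))$. The key observation is that every $P(M_2)$-user with cost $\leq c(p)$ lies no later than $p$ in the cost ordering of $P$, and $p$ itself sits at a location in $\{1, \dotsc, \lceil (1 - 11\sqrt[3]{\alpha})\tau \rceil\}$, so all such $P(M_2)$-users belong to $\tilde{P} \cap P(M_2)$. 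Hence the count I must bound is at most $|\tilde{P} \cap P(M_2)|$.

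To finish I combine two consequences of $\cE'$. Inequality (iv) gives $|\tilde{P} \cap P(M_2)| \leq |\tilde{P}|/2 + \sqrt[3]{\alpha}\tau$, and expanding $|\tilde{P}| \leq (1 - 11\sqrt[3]{\alpha})\tau + 1$ produces the upper bound $(1/2 - 9\sqrt[3]{\alpha}/2)\tau + 1/2$. On the other side, Observation~\ref{obs:length_characterization} together with inequalities (i)--(ii) of $\cE'$ gives $|S_c(P(M_2), B(A_2))| \geq (1/2 - \sqrt[3]{\alpha})\tau$, and hence the location of $\hat{p}$ is at least $(1 - 4\sqrt[3]{\alpha})(1/2 - \sqrt[3]{\alpha})\tau = (1/2 - 3\sqrt[3]{\alpha} + 4\alpha^{2/3})\tau$.

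The only real obstacle is the final arithmetic. After canceling the $\tau/2$ terms, the required strict inequality reduces to $1/2 < (3\sqrt[3]{\alpha}/2 + 4\alpha^{2/3})\tau$, which is where the hypothesis $\alpha \geq |S_c(P,B)|^{-1} = \tau^{-1}$ becomes essential: it gives $\sqrt[3]{\alpha}\,\tau \geq \tau^{2/3} \geq 1$, so $3\sqrt[3]{\alpha}\,\tau/2 \geq 3/2 > 1/2$ and the inequality holds with room to spare. This completes the plan for the first inclusion, and the symmetric argument (with inequalities (i) and (iii) of $\cE'$ and the lower bound $|S_c(P(M_2), B(A_2))| \geq (1/2 - \sqrt[3]{\alpha})\tau$ now used to bound the value of $\hat{b}$ from above via the same location calculation) handles the second.
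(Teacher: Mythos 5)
Your proof is correct and follows essentially the same route as the paper: use Observation~\ref{obs:length_characterization} together with the $\cE'$ inequalities to lower-bound the location of $\hat{p}$ in $S_c(P(M_2),B(A_2))$, upper-bound $|\tilde{P}\cap P(M_2)|$, and invoke $\sqrt[3]{\alpha}\,\tau\geq 1$ to conclude that every user of $\tilde{P}$ has cost below $c(\hat{p})$. The only cosmetic difference is that the paper shows directly that $\hat{p}\notin\tilde{P}$ (via an integrality step with the ceiling), whereas you fix an arbitrary $p\in\tilde{P}\cap P(M_1)$ and count the cheaper $P(M_2)$-users, deferring the $\sqrt[3]{\alpha}\,\tau\geq 1$ step to the final arithmetic; the substance is identical.
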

\begin{proof}
We prove the first inclusion. The other inclusion is analogous. The claim about $\tilde{P} \cap P(M_1)$ is trivial when $\tilde{P}$ is empty. Thus, we can assume throughout the proof that $\tilde{P}$ is non-empty. Observation~\ref{obs:length_characterization} and the definition of $\cE'$ imply:
\begin{align*}
	|S_c(P(M_2), B(A_2))|
	\geq{} &
	\min \{|P_o \cap P(M_2)|, |B_o \cap B(A_2)|\}\\
	\geq{} &
	\min\{|P_o|/2 - \sqrt[3]{\alpha} \cdot \tau, |B_o|/2 - \sqrt[3]{\alpha} \cdot \tau\}
	=
	(1/2 - \sqrt[3]{\alpha})\tau
	\enspace.
\end{align*}
Recall that $\alpha \geq \tau^{-1}$, and thus, $\sqrt[3]{\alpha} \cdot \tau \geq 1$. Using this inequality and the definition of $\cE'$ again gives:
\begin{align*}
	|\tilde{P} \cap P(M_2)|
	\leq{} &
	|\tilde{P}|/2 + \sqrt[3]{\alpha} \cdot \tau
	=
	\lceil (1 - 11\sqrt[3]{\alpha})\tau \rceil/2 + \sqrt[3]{\alpha} \cdot \tau
	\leq
	(1/2 - 4\sqrt[3]{\alpha})\tau - 1 + \sqrt[3]{\alpha} \cdot \tau\\
	={} &
	(1/2 - 3\sqrt[3]{\alpha})\tau - 1
	\leq
	(1 - 4\sqrt[3]{\alpha}) \cdot (1/2 - \sqrt[3]{\alpha})\tau - 1\\
	\leq{} &
	(1 - 4\sqrt[3]{\alpha}) \cdot |S_c(P(M_2), B(A_2))| - 1
	\enspace,
\end{align*}
which implies, since $|\tilde{P} \cap P(M_2)|$ is integral,
\begin{equation} \label{eq:threshold_above_tilde}
	|\tilde{P} \cap P(M_2)|
	\leq
	\lceil (1 - 4\sqrt[3]{\alpha}) \cdot |S_c(P(M_2), B(A_2))| \rceil - 1
	<
	\lceil (1 - 4\sqrt[3]{\alpha}) \cdot |S_c(P(M_2), B(A_2))| \rceil
	\enspace.
\end{equation}

Inequality~\eqref{eq:threshold_above_tilde} and the observation that the users of $\tilde{P} \cap P(M_2)$ are the users with the lowest costs in $P(M_2)$ imply together that $\hat{p}$ is a user of $P(M_2)$ which does not belong to $\tilde{P} \cap P(M_2)$, and therefore, does not belong to $\tilde{P}$ either. 
On the other hand, $\tilde{P}$ contains the $\lceil (1 - 11\sqrt[3]{\alpha})\tau \rceil$ users with the lowest costs. Hence, every user $p \in \tilde{P}$ has a cost smaller than $c(\hat{p})$ since $\hat{p}$ does not belong to $\tilde{P}$. The lemma now follows by observing that the definition of $\hat{P}$ implies that $p \in \hat{P}$ for every user $p \in P(M_1)$ obeying $c(p) < c(\hat{p})$. 
\end{proof}

\begin{corollary} \label{cor:size_bounds}
The event \hyperlink{event:E_prime}{$\cE'$} implies: $-6.5\sqrt[3]{\alpha} \cdot \tau \leq |\hat{P}| - \tau/2 \leq \sqrt[3]{\alpha} \cdot \tau$ and $-6.5\sqrt[3]{\alpha} \cdot \tau \leq |\hat{B}| - \tau/2 \leq \sqrt[3]{\alpha} \cdot \tau$.
\end{corollary}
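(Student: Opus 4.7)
The plan is to derive both inequalities for $\hat{P}$ directly from the two inclusions already established by Lemmas~\ref{lem:upper_inclusion} and~\ref{lem:lower_inclusion}, and then obtain the bounds on $\hat{B}$ by a completely symmetric argument (swapping the roles of users and slots, $P(M_j)$ with $B(A_j)$, $\tilde{P}$ with $\tilde{B}$, and using inequalities~(i) and~(iii) from the definition of $\cE'$ in place of~(ii) and~(iv)). Throughout, I will assume $\cE'$ holds so that both inclusions and all four concentration inequalities of the event are at our disposal.

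For the upper bound $|\hat{P}| - \tau/2 \leq \sqrt[3]{\alpha} \cdot \tau$, I would combine the definition $\hat{P} \subseteq P(M_1)$ with the conclusion $\hat{P} \subseteq P_o$ of Lemma~\ref{lem:upper_inclusion} to obtain $\hat{P} \subseteq P_o \cap P(M_1)$. Since $P_o \cap P(M_1)$ and $P_o \cap P(M_2)$ partition $P_o$ and $|P_o| = \tau$, inequality~(ii) in the definition of $\cE'$ then gives $|P_o \cap P(M_1)| = \tau - |P_o \cap P(M_2)| \leq \tau - (\tau/2 - \sqrt[3]{\alpha}\tau) = \tau/2 + \sqrt[3]{\alpha}\tau$, which is the desired upper bound.

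For the lower bound $|\hat{P}| - \tau/2 \geq -6.5\sqrt[3]{\alpha} \cdot \tau$, I would start from $\tilde{P} \cap P(M_1) \subseteq \hat{P}$ (Lemma~\ref{lem:lower_inclusion}) and write $|\hat{P}| \geq |\tilde{P}| - |\tilde{P} \cap P(M_2)|$. Applying inequality~(iv) from $\cE'$ bounds the second term by $|\tilde{P}|/2 + \sqrt[3]{\alpha}\tau$, so $|\hat{P}| \geq |\tilde{P}|/2 - \sqrt[3]{\alpha}\tau$. Substituting $|\tilde{P}| = \lceil (1 - 11\sqrt[3]{\alpha})\tau \rceil \geq (1 - 11\sqrt[3]{\alpha})\tau$ and simplifying produces $|\hat{P}| \geq \tau/2 - 5.5\sqrt[3]{\alpha}\tau - \sqrt[3]{\alpha}\tau = \tau/2 - 6.5\sqrt[3]{\alpha}\tau$.

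The only subtlety, which is the mild obstacle to finish, is the boundary case $1 - 11\sqrt[3]{\alpha} \leq 0$, since then $\tilde{P}$ is declared empty by definition and the chain above only yields the trivial $|\hat{P}| \geq 0$. However, in that regime $\sqrt[3]{\alpha} \geq 1/11$, hence $6.5\sqrt[3]{\alpha}\tau \geq (6.5/11)\tau > \tau/2$, so the target inequality $|\hat{P}| \geq \tau/2 - 6.5\sqrt[3]{\alpha}\tau$ is satisfied vacuously. A one-line aside handling this degenerate regime therefore closes the argument, and the slot-side bounds follow by the symmetric computation announced above.
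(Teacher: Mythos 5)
Your proposal is correct and follows essentially the same route as the paper: upper bound from $\hat{P}\subseteq P_o\cap P(M_1)$ together with concentration inequality~(ii), lower bound from $\tilde{P}\cap P(M_1)\subseteq\hat{P}$ together with~(iv), then the symmetric argument for $\hat{B}$. The paper silently relies on $|\tilde{P}|\ge\lceil(1-11\sqrt[3]{\alpha})\tau\rceil$ holding even when $\tilde{P}=\varnothing$ (because the ceiling is then nonpositive), whereas you spell out the degenerate regime explicitly; this is a minor stylistic difference, not a change in approach.
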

\begin{proof}
We prove here only the bounds on the size of $\hat{P}$. The bounds on the size of $\hat{B}$ are analogous. By Lemma~\ref{lem:upper_inclusion}, $\hat{P} \subseteq P_o$. On the other hand, by definition, $\hat{P} \subseteq P(M_1)$. Thus, we get: $\hat{P} \subseteq P_o \cap P(M_1)$. Combining this inclusion with the definition of $\cE'$ gives:
\begin{align*}
	|\hat{P}|
	\leq{} &
	|P_o \cap P(M_1)|
	=
	|P_o| - |P_o \cap P(M_2)|\\
	\leq{} &
	|P_o| - [|P_o|/2 - \sqrt[3]{\alpha} \cdot \tau]
	=
	|P_o|/2 + \sqrt[3]{\alpha} \cdot \tau
	=
	\tau/2 + \sqrt[3]{\alpha} \cdot \tau
	\enspace.
\end{align*}
On the other hand, by Lemma~\ref{lem:lower_inclusion} and the definition of $\cE'$,
\begin{align*}
	|\hat{P}|
	\geq{} &
	|\tilde{P} \cap P(M_1)|
	=
	|\tilde{P}| - |\tilde{P} \cap P(M_2)|
	\geq
	|\tilde{P}| - [|\tilde{P}|/2 + \sqrt[3]{\alpha} \cdot \tau]\\
	={} &
	|\tilde{P}|/2 - \sqrt[3]{\alpha} \cdot \tau
	\geq
	\lceil (1 - 11\sqrt[3]{\alpha})\tau \rceil / 2 - \sqrt[3]{\alpha} \cdot \tau
	\geq
	\tau/2 - 6.5\sqrt[3]{\alpha} \cdot \tau
	\enspace.
	\qedhere
\end{align*}
\end{proof}

We can now define the event $\cE_1$ referred to by Lemma~\ref{lem:event_summary}. The event $\cE_1$ is the event that $\cE'$ happens and in addition the following two inequalities also hold:
\begin{center}
\begin{tabular}{llll}
	(i)  & $|\hat{B} \setminus B(A_L)| \leq |\hat{P}|$ \hspace{1cm} & (ii) & $|\hat{P} \setminus P(M_L)| \leq |\hat{B}|$
\end{tabular}
\end{center}

Next, let us prove Lemma~\ref{lem:event_summary}. For ease of the reading, we first repeat the lemma itself.

\begin{replemma}{lem:event_summary}
There exists an event $\cE_1$ of probability at least $1 - 10e^{-2/\sqrt[3]{\alpha}}$ such that $\cE_1$ implies the following:
\begin{center}
\begin{tabular}{llll}
	(i)  & $\tilde{B} \cap B(A_1) \subseteq \hat{B}$ \hspace{3cm} & (iii) & $|\hat{P} \setminus P(M_L)| \leq |\hat{B}|$\\
	(ii) & $\tilde{P} \cap P(M_1) \subseteq \hat{P}$ \hspace{3cm} & (iv)  & $|\hat{B} \setminus B(A_L)| \leq |\hat{P}|$ \\
	(v)  & \multicolumn{3}{p{11cm}}{$c(p) \leq \ell(P, B) \leq v(b)$ for every user $p \in \hat{P}$ and slot $b \in \hat{B}$, where $\ell(P, B)$ is a value which is independent of the random coins of {\TPM} and obeys $c(p) \leq \ell(P, B) \leq v(b)$ for every $p \in P_o$ and $b \in B_o$.}
\end{tabular}
\end{center}
\end{replemma}
\begin{proof}
By definition, the event $\cE_1$ implies the inequalities: $|\hat{B} \setminus B(A_L)| \leq |\hat{P}|$ and $|\hat{P} \setminus P(M_L)| \leq |\hat{B}|$. Additionally, $\cE_1$ implies the event $\cE'$, which, by Corollary~\ref{cor:middle_value} and Lemma~\ref{lem:lower_inclusion}, implies the other things that should follow from $\cE_1$ by the lemma. Hence, the only thing left to prove is that the probability of $\cE_1$ is at least $1 - 10e^{-2/\sqrt[3]{\alpha}}$.

If $17\sqrt[3]{\alpha} \geq 1$, then $A_L = A$ and $M_L = M$, which implies that the two inequalities $|\hat{B} \setminus B(A_L)| \leq |\hat{P}|$ and $|\hat{P} \setminus P(M_L)| \leq |\hat{B}|$ are trivial. Hence, the events $\cE_1$ and $\cE'$ are equivalent in this case, and thus, the probability of $\cE_1$ is at least $1 - 8e^{-2/\sqrt[3]{\alpha}}$ by Observation~\ref{obs:E_prime_probability}. Therefore, it is safe to assume in the rest of the proof that $17\sqrt[3]{\alpha} < 1$.

Our plan is to prove the inequality $\Pr[\cE_1 \mid \cE'] \geq 1 - 2e^{-2/\sqrt[3]{\alpha}}$. Notice that this inequality indeed implies the lemma since it implies:
\[
	\Pr[\cE_1]
	=
	\Pr[\cE'] \cdot \Pr[\cE_1 \mid \cE']
	\geq
	(1 - 8e^{-2/\sqrt[3]{\alpha}}) \cdot (1 - 2e^{-2/\sqrt[3]{\alpha}})
	\geq
	1 - 10e^{-2/\sqrt[3]{\alpha}}
	\enspace.
\]

The event $\cE'$ is fully determined by way {\TPM} partitions $M$ and $A$ into $M_1, M_2, A_1$ and $A_2$. Thus, it is enough to show that for every fixed partition for which the event $\cE'$ holds, the event $\cE_1$ holds with probability at least $1 - 2e^{-2/\sqrt[3]{\alpha}}$. Notice that the sets $\hat{P}$ and $\hat{B}$ become deterministic once we fix the partition of $A$ and $M$. Hence, either $|\hat{B}| \leq |\hat{P}|$, which implies that the inequality $|\hat{B} \setminus B(A_L)| \leq |\hat{P}|$ holds regardless of the choice of $A_L$, or $|\hat{P}| \leq |\hat{B}|$, which implies that the inequality $|\hat{P} \setminus P(M_L)| \leq |\hat{B}|$ holds regardless of the choice of $M_L$. In both cases, all we need to show is that the other inequality holds with probability at least $1 - 2e^{-2/\sqrt[3]{\alpha}}$ over the random choice of $A_L$ and $M_L$.

Let us assume, without loss of generality, that $|\hat{B}| \leq |\hat{P}|$. By the above discussion, all we need to prove is that $\Pr[|\hat{P} \setminus P(M_L)| \leq |\hat{B}|] \geq 1 - 2e^{-2/\sqrt[3]{\alpha}}$, where the probability is over the random choice of $M_L$. By Corollary~\ref{cor:size_bounds}:
\begin{align*}
	\Pr[|\hat{P} \setminus P(M_L)| > |\hat{B}|]
	\leq{} &
	\Pr[|\hat{P} \setminus P(M_L)| > \tau/2 - 6.5\sqrt[3]{\alpha} \cdot \tau]\\
	\leq{} &
	\Pr[|\hat{P} \setminus P(M_L)| > (1 - 17\sqrt[3]{\alpha})(\tau/2 + \sqrt[3]{\alpha} \cdot \tau) + \sqrt[3]{\alpha} \cdot \tau]\\
	\leq{} &
	\Pr[|\hat{P} \setminus P(M_L)| > (1 - 17\sqrt[3]{\alpha}) \cdot |\hat{P}| + \sqrt[3]{\alpha} \cdot \tau]
	\enspace.
\end{align*}
Notice now that $\hat{P} \setminus P(M_L)$ has the same distribution as $\hat{P}[1 - 17\sqrt[3]{\alpha}]$. Hence, by Lemma~\ref{lem:length_concentration}:
\begin{align*}
	\Pr[|\hat{P} \setminus P(M_L)| > |\hat{B}|]
	\leq{} &
	\Pr[|\hat{P} \setminus P(M_L)| > (1 - 17\sqrt[3]{\alpha}) \cdot |\hat{P}| + \sqrt[3]{\alpha} \cdot \tau]\\
	\leq{} &
	\Pr[||\hat{P}[1 - 17\sqrt[3]{\alpha}]| - (1 - 17\sqrt[3]{\alpha}) \cdot |\hat{P}|| > \sqrt[3]{\alpha} \cdot \tau]
	\leq
	2e^{-2/\sqrt[3]{\alpha}}
	\enspace.
	\qedhere
\end{align*}
\end{proof}
}
\section{Conclusions}

We considered mechanisms for double-sided markets that interact with strategic players where at least one side of the market has players with multi-dimensional strategic spaces. In particular, we explored the question of how many sides of the market can have players with multi-dimensional strategic spaces, while still allowing for mechanisms that both maintain the desired basic economic properties (individual rationality, incentive compatibility and budget balance) and suffer only a (small) bounded loss compared to the socially optimal outcome. To answer this question we presented two mechanisms. One mechanism that is deterministic and allows one side to have players with multi-dimensional strategic spaces, and another mechanism that is randomized and allows two sides to have players with multi-dimensional strategic spaces.

Our mechanisms significantly extend the literature on
trade reduction---a technique used to achieve the basic economic properties of individual rationality, incentive compatibility and budget balance in a multi-sided market. While all the previous trade reduction solutions dealt with players having single dimensional strategic spaces, our deterministic algorithm performs a non-binary trade reduction which leads to the first trade reduction solution applying to multi-dimensional strategic spaces. As given, this deterministic trade reduction solution deals with two sides: advertisers and mediators; however, we believe that our ideas are more general, and we currently work on extending the mechanism in order to handle more market sides.

From a more practical point of view, our study is motivated by a foreseeable future form of online advertising in which users are incentivized to share their information via participation in a mediated online advertising exchange. Our model captures some of the challenges introduced by such exchanges, but does not encompass their online nature. In this sense our results in this paper can be seen as a step towards the study of a more realistic model which also captures the online nature of ad exchanges.

\bibliographystyle{plain}
\bibliography{../TwoSidedTrade}

\inConference{
\appendix
\input{AppMissingProofs}

}

\end{document}